	\numberwithin{equation}{section}
	\theoremstyle{plain}
	\declaretheorem[numberlike=equation]{theorem}
	\declaretheorem[unnumbered,name=Theorem]{theorem*}
	\declaretheorem[numberlike=equation]{lemma}
	\declaretheorem[unnumbered,name=Lemma]{lemma*}
	\declaretheorem[numberlike=equation]{corollary}
	\declaretheorem[unnumbered,name=Corollary]{corollary*}
	\declaretheorem[unnumbered,name=Proposition]{proposition*}
	\declaretheorem[unnumbered,name=Claim]{claim*}
	\declaretheorem[unnumbered,name=Conjecture]{conjecture*}
	\declaretheorem[numberlike=equation]{construction}
	\declaretheorem[unnumbered,name=Construction]{construction*}
	\declaretheorem[numberlike=equation]{question}
	\declaretheorem[unnumbered,name=Question]{question*}
	\declaretheorem[numberlike=equation]{definition}
	\declaretheorem[unnumbered,name=Definition]{definition*}
	\declaretheorem[unnumbered,name=Example]{example*}
	\declaretheorem[unnumbered,name=Notation]{notation*}
	\theoremstyle{remark}
	\declaretheorem[unnumbered,name=Remark]{remark*}
	\newcommand{\zr}[1]{{\llbracket {#1}\rrbracket}}
	\newcommand{\la}{\langle}
	\newcommand{\ra}{\rangle}
	\DeclareMathOperator{\spn}{span}
	\DeclareMathOperator{\chara}{char}
	\DeclareMathOperator{\tr}{trace}
	\DeclareMathOperator{\Tr}{trace}
	\DeclareMathOperator{\supp}{Supp}
	\DeclareMathOperator{\coeffo}{\mathfrak{C}}
	\DeclareMathOperator{\GL}{GL}
	\newcommand{\coeff}[1]{\coeffo_{#1}}
	\newcommand{\F}{\mathbb{F}}
	\newcommand{\Z}{\mathbb{Z}}
	\newcommand{\N}{\mathbb{N}}
	\newcommand{\Q}{\mathbb{Q}}
	\renewcommand{\C}{\mathbb{C}}
	\newcommand{\cC}{\mathcal{C}}
	\newcommand{\cH}{\mathcal{H}}
	\newcommand{\cP}{\mathcal{P}}
	\newcommand{\cT}{\mathcal{T}}
	\newcommand{\Id}{\mathrm{I}}
	\newcommand{\eqdef}{:=}
	\renewcommand{\O}{\mathcal{O}}
	\newcommand{\ignore}[1]{}
\begin{document}

\title{Explicit Noether Normalization for Simultaneous Conjugation via Polynomial Identity Testing}
\author{%
Michael A.\ Forbes\thanks{Email: \texttt{miforbes@mit.edu},
Department of Electrical Engineering and Computer Science, MIT CSAIL, 32 Vassar St.,
Cambridge, MA 02139, Supported by NSF grant CCF-0939370.}
		\and
Amir Shpilka\thanks{Faculty of Computer Science, Technion --- Israel Institute of Technology, Haifa, Israel,
\texttt{shpilka@cs.technion.ac.il}.  The research leading to these results has received funding
from the European Community's Seventh Framework Programme (FP7/2007-2013) under grant agreement number 257575.}}
\maketitle

\begin{abstract}
	Mulmuley~\cite{Mulmuley12} recently gave an explicit version of Noether's Normalization lemma for ring of invariants of matrices under simultaneous conjugation, under the conjecture that there are deterministic black-box algorithms for polynomial identity testing (PIT).  He argued that this gives evidence that constructing such algorithms for PIT is beyond current techniques.  In this work, we show this is not the case.  That is, we improve Mulmuley's reduction and correspondingly weaken the conjecture regarding PIT needed to give explicit Noether Normalization.  We then observe that the weaker conjecture has recently been nearly settled by the authors (\cite{ForbesShpilka12a}), who gave quasipolynomial size hitting sets for the class of read-once oblivious algebraic branching programs (ROABPs).  This gives the desired explicit Noether Normalization unconditionally, up to quasipolynomial factors.

	As a consequence of our proof we give a deterministic parallel polynomial-time algorithm for deciding if two matrix tuples have intersecting orbit closures, under simultaneous conjugation.  

	We also study the strength of conjectures that Mulmuley requires to obtain similar results as ours.  We prove that his conjectures are stronger, in the sense that the computational model he needs PIT algorithms for is equivalent to the well-known algebraic branching program (ABP) model, which is provably stronger than the ROABP model.


	Finally, we consider the depth-3 diagonal circuit model as defined by Saxena~\cite{Saxena08}, as PIT algorithms for this model also have implications in Mulmuley's work.  Previous work (such as \cite{AgrawalSS12} and \cite{ForbesShpilka12a}) have given quasipolynomial size hitting sets for this model.  In this work, we give a much simpler construction of such hitting sets, using techniques of Shpilka and Volkovich~\cite{ShpilkaVolkovich09}.
\end{abstract}








\section{Introduction}

Many results in mathematics are non-constructive, in the sense that they establish that certain mathematical objects exist, but do not give an efficient or explicit construction of such objects, and often further work is needed to find constructive arguments.  Motivated by the recent results of Mulmuley~\cite{Mulmuley12} (henceforth ``Mulmuley'', but theorem and page numbering will refer to the full version~\cite{Mulmuley12full}), this paper studies constructive versions of the Noether Normalization Lemma from commutative algebra.  The lemma, as used in this paper, can be viewed as taking a commutative ring $R$, and finding a smaller subring $S\subseteq R$ such that $S$ captures many of the interesting properties of $R$ (see Section~\ref{sec:NNL} for a formal discussion).  Like many arguments in commutative algebra, the usual proof of this lemma does not focus on the computational considerations of how to find (a concise representation of) the desired subring $S$.  However, the area of computational commutative algebra (eg, \cite{DK,CLO}) offers methods showing how many classic results can be made constructive, and in certain cases, the algorithms are even efficient.

While constructive methods for Noether Normalization are known using Gr\"{o}bner bases (cf. \cite{CLO}), the Gr\"{o}bner basis algorithms are not efficient in the worst-case (as show by Mayr and Meyer~\cite{MayrMeyer}), and are not known to be more efficient for the problems we consider.  Diverging from the Gr\"{o}bner basis idea, Mulmuley recently observed that a constructive version of Noether Normalization is really a problem in \textit{derandomization}.  That is, given the ring $R$, if we take a sufficiently large set of ``random'' elements from $R$, then these elements will generate the desired subring $S$ of $R$.  Indeed, the usual proof of Noether Normalization makes this precise, with the appropriate algebraic meaning of ``random''. This view suggests that random sampling from $R$ is sufficient to construct $S$, and this sampling will be efficient if $R$ itself is explicitly given. While this process uses lots of randomness, the results of the derandomization literature in theoretical computer science (eg, \cite{IW97,IKW02,KabanetsImpagliazzo04}) give strong conjectural evidence that randomness in efficient algorithms is not necessary.  Applied to the problems here, there is thus strong conjectural evidence that the generators for the subring $S$ can be constructed efficiently, implying that the Noether Normalization Lemma can be made constructive.

Motivated by this connection, Mulmuley explored what are the minimal derandomization conjectures  necessary to imply an explicit form of Noether Normalization.  The existing conjectures come in two flavors.  Most derandomization hypotheses concern boolean computation, and as such are not well-suited for algebraic problems (for example, a single real number can encode infinitely many bits), but Mulmuley does give some connections in this regime. Other derandomization hypotheses directly concern algebraic computation, and using them Mulmuley gives an explicit Noether Normalization Lemma, for some explicit rings of particular interest.  In particular, Mulmuley proves that it would suffice to derandomize the \textit{polynomial identity testing} (PIT) problem in certain models, in order to obtain a derandomization of the Noether Normalization Lemma.  Mulmuley actually views this connection as an evidence that derandomizing PIT for these models is a difficult computational task (Mulmuley, p.\ 3) and calls this the {\em GCT chasm}. Although Mulmuley conjectures that it could be crossed he strongly argues that this cannot be achieved with current techniques (Mulmuley, p.\ 3):

\ignore{

\begin{quote}
	These and related results may explain in a unified way why proving lower bounds or derandomization results for arithmetic circuits in characteristic zero of even depth three or constant-depth Boolean threshold circuits, or proving uniform Boolean conjectures such as $\EXP^\NP\not\subseteq\Sigma_2^P\cap\Pi_2^P$  has turned out to be so hard, and also why the classification problems of invariant theory and algebraic geometry have turned out to be so hard from the complexity-theoretic perspective.
\end{quote}

Mulmuley calls this the {\em GCT chasm} and conjectures that it could be crossed (Mulmuley, p.\ 12):

\begin{quote}
	We conjecture that (separating) e.s.o.p's exist for the $K[V]^G$'s under consideration and the coordinate rings of explicit varieties in general: i.e., Noether's Normalization Lemma for these rings can be derandomized (in a strong form), as suggested by Theorems 1.1, 1.2, 1.3, 1.5 and 1.6, cf. Conjecture 11.1, and the GCT chasm can be crossed.
\end{quote}

However, Mulmuley has strongly suggested that crossing this chasm is difficult (Mulmuley, p.\ 3):

}

\begin{quote}
	On the negative side, the results in this article say that black-box derandomization of PIT in characteristic zero would necessarily require proving, either directly or by implication, results in algebraic geometry that seem impossible to prove on the basis of the current knowledge.
\end{quote}

In this work, we obtain a derandomization of Noether's Normalization Lemma for the problems discussed in Mulmuley's Theorems 1.1 and 1.2, using existing techniques.  These results alone have been suggested by Mulmuley (in public presentation) to contain the ``impossible'' problems mentioned above. This suggests that problems cannot be assumed to be difficult just because they originated in algebraic-geometric setting, and that one has to consider the finer structure of the problem.

In addition, just as Mulmuley's techniques extend to Noether Normalization of arbitrary quivers (Mulmuley's Theorem 4.1), our results also extend to this case, but we omit the straightforward details.  However, we do not give any results for Noether Normalization of general explicit varieties as discussed in Mulmuley's Theorem 1.5, and indeed that seems difficult given that Mulmuley's Theorem 1.6 gives an equivalence between general PIT and Noether Normalization for a certain explicit variety.\footnote{The main reason we could obtain our results is that for the variety we consider, there are explicitly known generators for the ring of invariants (as given in \autoref{generatinginvariants}), and these generators are computationally very simple. For general explicit varieties, obtaining such explicit generators is an open problem, and even if found, the generators would not likely be as computational simple as the generators of \autoref{generatinginvariants}. We refer the reader to Mulmuley's Section 10 where these issues are discussed further.}

We start by briefly describing the PIT problem.  For more details, see the survey by Shpilka and Yehudayoff~\cite{SY10}.

\subsection{Polynomial Identity Testing}

The PIT problem asks to decide whether a polynomial, given by an algebraic circuit, is the zero polynomial.  An algebraic circuit is a directed acyclic graph with a single sink (or root) node, called the \textit{output}.  Internal nodes are labeled with either a $\times$- or $+$-gate, which denote multiplication and addition respectively.  The source (or leaf) nodes, are labeled with either variables $x_i$, or elements from a given field $\F$.  An algebraic circuit computes a polynomial in the ring $\F[x_1,\ldots,x_n]$ in the natural way (as there are no cycles): each internal node in the graph computes a function of its children (either $\times$ or $+$), and the circuit itself outputs the polynomial computed by the output node. Algebraic circuits give the most natural and succinct way to represent a polynomial.

Given an algebraic circuit $C$, the PIT problem is to test if the polynomial $f$ it computes is identically zero, as a polynomial in $\F[x_1,\ldots,x_n]$. Schwartz~\cite{Schwartz80} and Zippel~\cite{Zippel79} (along with the weaker result by DeMillo and Lipton \cite{DemilloL78}) showed that if $f\not\equiv 0$ is a polynomial of degree $\le d$, and $\alpha_1,\ldots,\alpha_n\in S\subseteq\F$ are chosen uniformly at random, then $f(\alpha_1,\ldots,\alpha_n)=0$ with probability $\le d/|S|$.  It follows then that we can solve PIT efficiently using randomness, as evaluating an algebraic circuit can be done efficiently.  The main question concerning the PIT problem is whether it admits an efficient \textit{deterministic} algorithm, in the sense that it runs in polynomial time in the size of the circuit $C$. Heuristically, this problem can be viewed as replacing any usage of the Schwartz-Zippel result with a deterministic set of evaluations.

One important feature of the above randomized PIT algorithm is that it only uses the circuit $C$ by evaluating it on given inputs.  Such algorithms are called \textit{black-box} algorithms, as they treat the circuit as merely a black-box that computes some low-degree polynomial (that admits some small circuit computing it).  This is in contrast to \textit{white-box} algorithms, that probe the structure of $C$ in some way.  White-box algorithms are thus less restricted, whence deriving a black-box algorithm is a stronger result.  For the purposes of this paper, instead of referring to deterministic black-box PIT algorithms, we will use the equivalent notion of a \textit{hitting set}, which is a small set $\cH\subseteq\F^n$ of evaluation points such that for any non-zero polynomial $f$ computed by a small algebraic circuit, $f$ must evaluate to non-zero on some point in $\cH$.  A standard argument (see \cite{SY10}) shows that small hitting sets \textit{exist}, the main question is whether small \textit{explicit} hitting sets, which we now define, exist.  As usual, the notion of explicit must be defined with respect to an infinite family of objects, one object for each value of $n$.  For clarity, we abuse notation and do not discuss families, with the understanding that any objects we design will belong to an unspecified family, and that there is a single (uniform) algorithm to construct these objects that takes as input the relevant parameters.

\begin{definition}
	Let $\cC\subseteq\F[x_1,\ldots,x_n]$ be a class of polynomials.  A set $\cH\subseteq \F^n$ is a \textbf{hitting set for $\cC$} if for all $f\in\cC$, $f\equiv 0$ iff $f|_{\cH}\equiv 0$. The hitting set $\cH$ is \textbf{$t(n)$-explicit} if there is an algorithm such that given an index into $\cH$, the corresponding element of $\cH$ can be computed in $t(n)$-time, assuming unit cost arithmetic in $\F$.
\end{definition}

That is, we mean that the algorithm can perform field operations (add, subtract, multiply, divide, zero test) in $\F$ in unit time, and can start with the constants $0$ and $1$.  We will also assume the algorithm has access to an arbitrary enumeration of $\F$.  In particular, when $\F$ has characteristic 0, without loss of generality the algorithm will only produce rational numbers.

\subsection{Noether Normalization for the Invariants of Simultaneous Conjugation}\label{sec:NNL}

Mulmuley showed that when $R$ is a particular ring, then the problem of finding the subring $S$ given by Noether Normalization can be reduced to the black-box PIT problem, so that explicit hitting sets (of small size) would imply a constructive version of Noether Normalization for this ring.  The ring considered here and in Mulmuley's Theorems 1.1 and 1.2 is the ring of invariants of matrices, under the action of simultaneous conjugation.

\begin{definition}
	Let $\vec{M}$ denote a vector of $r$ matrices, each\footnote{In this work we will most often index vectors and matrices starting at zero, and will indicate this by the use of $\zr{n}$, which denotes the set $\{0,\ldots,n-1\}$.  Also, $[n]$ will be used to denote the set $\{1,\ldots,n\}$.} $\zr{n}\times\zr{n}$, whose entries are distinct variables.  Consider the action of $\GL_n(\F)$ by simultaneous conjugation on $\vec{M}$, that is, \[(M_1,\ldots,M_r)\mapsto (PM_1P^{-1},\ldots,PM_rP^{-1})\;.\]
	Define $\F[\vec{M}]^{\GL_n(\F)}$ to be the subring of $\F[\vec{M}]$ consisting of polynomials in the entries of $\vec{M}$ that are invariant under the action of $\GL_n(\F)$.  That is, \[\F[\vec{M}]^{\GL_n(\F)}\eqdef \{f|f(\vec{M})=f(P\vec{M}P^{-1}), \forall P\in \GL_n(\F)\}\;.\]
\end{definition}

Note that $\F[\vec{M}]^{\GL_n(\F)}$ is in fact a ring.  When $\F$ has characteristic zero, the following result gives an explicit set of generators for the ring of invariants.  When $\F$ has positive characteristic, the result is known not to hold (see \cite[\defaultS 2.5]{kraft2000classical}) so we will only discuss characteristic zero fields.

\begin{theorem}[\cite{Procesi76,Razmyslov74,Formanek86}]
	\label{generatinginvariants}
	Let $\F$ be a field of characteristic zero. Let $\vec{M}$ denote a vector of $r$ matrices, each $\zr{n}\times\zr{n}$, whose entries are distinct variables.  The ring $\F[\vec{M}]^{\GL_n(\F)}$ of invariants is generated by the invariants
	$\cT\eqdef\{\tr(M_{i_1}\cdots M_{i_\ell})| \vec{i}\in\zr{r}^\ell,\ell\in[n^2]\}$.
	
	Further, the ring $\F[\vec{M}]^{\GL_n(\F)}$ is not generated by the invariants
	$\{\tr(M_{i_1}\cdots M_{i_\ell})| \vec{i}\in\zr{r}^\ell,\ell\in[\lceil n^2/8\rceil]\}$.
\end{theorem}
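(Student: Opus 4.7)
The plan is to handle the two parts of \autoref{generatinginvariants} separately: the generation statement reduces to the classical First Fundamental Theorem combined with Cayley--Hamilton, while the tightness statement requires a Hilbert-series dimension count.

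For the generation part, I would first establish the weaker claim that $\F[\vec{M}]^{\GL_n(\F)}$ is generated, with no length restriction, by traces of products of $\vec{M}$. Since $\chara \F = 0$, polarization reduces this to the multilinear case: any multilinear invariant in the entries of $\ell$ matrices is naturally an element of $\bigl(M_n(\F)^{\otimes\ell}\bigr)^{\GL_n(\F)}$. Schur--Weyl duality identifies this invariant space with the image of the group algebra $\F[S_\ell]$, and each permutation $\sigma\in S_\ell$ acts on a tuple $(M_{i_1},\ldots,M_{i_\ell})$ as the product, taken over the cycles of $\sigma$, of traces of the corresponding matrix subproducts. Combining this with polarization shows every invariant is a polynomial in traces of products.

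To cut the length down to $n^2$, I would invoke the Cayley--Hamilton identity $M^n = \sum_{k=0}^{n-1} c_k(M)\, M^k$, where each $c_k(M)$ is a polynomial in $\tr(M),\ldots,\tr(M^n)$. Polarizing Cayley--Hamilton in $M$ produces Razmyslov's trace identities, which allow any $\tr(M_{i_1}\cdots M_{i_\ell})$ with $\ell > n^2$ to be rewritten as a polynomial in traces of strictly smaller length; iterating terminates at length $n^2$. The threshold $n^2$ appears because, after polarization, the relevant ambient space is the $n^2$-dimensional matrix algebra $M_n(\F)$ rather than a single matrix. The main content here is the Procesi--Razmyslov theorem that all trace identities for $n\times n$ matrices follow from polarized Cayley--Hamilton, and the technical step is checking that the length bound $n^2$ falls out of its proof.

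For the tightness part, I would follow Formanek's dimension-counting approach. Equip $\F[\vec{M}]^{\GL_n(\F)}$ with the $\N$-grading by total degree in the entries of $\vec{M}$, and use Molien's formula to compute (or lower-bound) the Hilbert series of the invariant ring. Let $S_L\subseteq\F[\vec{M}]^{\GL_n(\F)}$ denote the subring generated by traces of length at most $L$; the Hilbert series of $S_L$ is upper-bounded by an explicit generating function counting products of short traces by total degree. Selecting a comparison degree of order $n^2$, and solving for when the upper bound on $\dim(S_L)_d$ strictly undershoots the true $\dim\bigl(\F[\vec{M}]^{\GL_n(\F)}\bigr)_d$, produces the cutoff $L < \lceil n^2/8\rceil$. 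The main obstacle is making this dimension comparison sharp enough to pin down the constant $1/8$, since a naive count (of partitions, compositions, and multisets of short trace monomials) only yields a weaker constant, and one must choose the grading component and monomial basis carefully to extract the optimal exponent.
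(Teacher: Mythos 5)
The paper does not prove this theorem; it is cited to \cite{Procesi76,Razmyslov74,Formanek86} and used as a black box, so there is no in-paper argument to compare against. Assessing your proposal on its own terms:

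For the generation half, your outline is the correct classical route: polarize to the multilinear case (valid since $\chara\F=0$), use the First Fundamental Theorem for $\GL_n$ (via $M_n(\F)\cong\F^n\otimes(\F^n)^*$ and Schur--Weyl duality) to identify multilinear invariants of $\ell$ matrices with $S_\ell$-cycle products of traces, and then reduce length using Razmyslov's polarized Cayley--Hamilton trace identities. However, the step asserting that ``the length bound $n^2$ falls out of its proof'' understates what is being invoked. The Procesi--Razmyslov Second Fundamental Theorem (all trace identities are consequences of the degree-$(n+1)$ fundamental trace identity) does not by itself yield a degree bound on generators. That bound reduces to the Nagata--Higman problem on the nilpotency index $N(n)$ of nil algebras of nil-degree $n$ over a field of characteristic zero; the original Nagata--Higman estimate is exponential ($N(n)\le 2^n-1$), and the quadratic bound $N(n)\le n^2$ is a separate, nontrivial theorem due to Razmyslov. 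Your proposed intuition --- that ``$n^2$ appears because the ambient space $M_n(\F)$ is $n^2$-dimensional'' --- is not the actual mechanism; the numerical agreement with $\dim M_n(\F)$ is essentially a coincidence of Razmyslov's estimate, and this framing would mislead a reader trying to fill in the details.

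For the tightness half, you sketch a Molien/Hilbert-series dimension comparison but then explicitly concede that a naive count gives only a weaker constant and that you do not see how to sharpen it to $1/8$. Since the specific constant \emph{is} the content of the statement, this is a genuine gap rather than a technicality left to the reader. As it stands, the proposal establishes at most that \emph{some} lower bound exists, which is far weaker than what is claimed; the route to $\lceil n^2/8\rceil$ (again ultimately tied to lower bounds on the Nagata--Higman index, e.g.\ Kuzmin-style constructions, rather than to a straightforward generating-function count) is missing.
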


That is, every invariant can be represented as a (multivariate) polynomial, with coefficients in $\F$, in the above generating set. Note that the above generating set is indeed a set of invariants, because the trace is cyclic, so the action of simultaneous conjugation by $P$ cancels out.

The above result is explicit in two senses.  The first sense is that all involved field constants can be efficiently constructed.  The second is that for any $f\in \cT$ and $\vec{A}$, $f(\vec{A})$ can be computed quickly.  In particular, any $f\in \cT$ can be computed by a $\poly(n,r)$-sized algebraic circuit, as matrix multiplication and trace can be computed efficiently by circuits.  We encapsulate these notions in the following definition.

\begin{definition}
	A set $\cP\subseteq\F[x_1,\ldots,x_n]$ of polynomials has \textbf{$t(n)$-explicit\footnote{It is important to contrast this with the vague notion of being \textit{mathematically} explicit.  For example, a non-trivial $n$-th root of unity is mathematically explicit, but is not computationally explicit from the perspective of the rational numbers.  Conversely, the lexicographically first function $f:\{0,1\}^{\lfloor \log\log n\rfloor}\to\{0,1\}$ with the maximum possible circuit complexity among all functions on $\lfloor \log\log n\rfloor$ bits, is computationally explicit, but arguably not mathematically explicit. While will we exclusively discuss the notion of computational explicitness in this paper, the constructions are all mathematically explicit, including the Forbes-Shpilka~\cite{ForbesShpilka12a} result cited as \autoref{thm:FS}.} $\cC$-circuits}, if there is an algorithm such that given an index into $\cP$, a circuit $C\in\cC$ can be computed in $t(n)$-time, assuming unit cost arithmetic in $\F$, such that $C$ computes the indexed $f\in\cP$.
\end{definition}

In particular, the above definition implies that the resulting circuits $C$ have size at most $t(n)$. The class of circuits $\cC$ can be the class of all algebraic circuits, or some restricted notion, such as algebraic branching programs, which are defined later in this paper. Thus, in the language of the above definition, the set of generators $\cT$ has $\poly(n,r)$-explicit algebraic circuits.

However, the above result is unsatisfactory in that the set of generators $\cT$ has size $\exp(\poly(n,r))$, which is unwieldy from a computational perspective. One could hope to find a smaller set of generators, but the lower bound in the above theorem seems a barrier in that direction. The number of generators is relevant here, as we will consider three computational problems where these generators are useful, but because of their number the resulting algorithms will be exponential-time, where one could hope for something faster. To define these problems, we first give the following standard definition from commutative algebra.

\begin{definition}
	Let $R$ be a commutative ring, and $S$ a subring.  Then $R$ is \textbf{integral} over $S$ if every element in $R$ satisfies some monic polynomial with coefficients in $S$.
\end{definition}

As an example, the algebraic closure of $\Q$ (the algebraic numbers) is integral over $\Q$. In this work the rings $R$ and $S$ will be rings of polynomials, and it is not hard to see that all polynomials in $R$ vanish at a point iff all polynomials in $S$ vanish at that point.  This can be quite useful, especially if $S$ has a small list of generators.  The statement of Noether Normalization is exactly that of providing such an $S$ with a small list of generators. The question we consider here is how to find an explicit such $S$ for the ring of invariants under simultaneous conjugation, where $S$ should be given by its generators.

\begin{question}
	\label{q:integral subring}
	Let $\F$ be an algebraically closed field of characteristic zero. Is there a small set of polynomials $\cT'\subseteq\F[\vec{M}]^{\GL_n(\F)}$ with explicit algebraic circuits, such that $\F[\vec{M}]^{\GL_n(\F)}$ is integral over the subring $S$ generated by $\cT'$?
\end{question}

We will in fact mostly concern ourselves with the next problem, which has implications for the first, when $\F$ is algebraically closed.  We first give the following definition, following Derksen and Kemper~\cite{DK}.

\begin{definition}
	A subset $\cT'\subseteq \F[\vec{M}]^{\GL_n(\F)}$ is a set of \textbf{separating invariants} if for all $\vec{A},\vec{B}\in(\F^{\zr{n}\times \zr{n}})^{\zr{r}}$, there exists an $f\in\F[\vec{M}]^{\GL_n(\F)}$ such that $f(\vec{A})\ne f(\vec{B})$ iff there exists an $f'\in \cT'$ such that $f'(\vec{A})\ne f'(\vec{B})$.
\end{definition}

As before, we will ask whether we can find an explicit construction.

\begin{question}
	\label{q:separating invariants}
	Let $\F$ have characteristic zero. Is there a small set of separating invariants $\cT'\subseteq\F[\vec{M}]^{\GL_n(\F)}$ with explicit algebraic circuits?
\end{question}

Mulmuley used the tools of geometric invariant theory~\cite{MFK}, as done in Derksen and Kemper~\cite{DK}, to note that, over algebraically closed fields, any set $\cT'$ of separating invariants will also generate a subring that $(\F^{\zr{n}\times \zr{n}})^{\zr{r}}$ is integral over.  Thus, any positive answer to \autoref{q:separating invariants} will give a positive answer to \autoref{q:integral subring}. Hence, we will focus on constructing explicit separating invariants (over any field of characteristic zero).

Note that relaxations of \autoref{q:separating invariants} can be answered positively. If we only insist on explicit separating invariants (relaxing the insistence on having few invariants), then the exponentially-large set of generators $\cT$ given in \autoref{generatinginvariants} suffices as these polynomials have small circuits and as they generate the ring of invariants, they have the required separation property. In contrast, if we only insist of a small set of separating invariants (relaxing the explicitness), then  Noether Normalization essentially shows that a \textit{non-explicit} set of separating invariants $\cT'$ of size $\poly(n,r)$ exists, basically by taking a random $\cT'$. More constructively, Mulmuley observed that Gr\"{o}bner basis techniques can construct a small set of separating invariants $\cT'$, but this set is still not explicit as such algorithms take exponential-space, so are far from efficient. In the particular case of $\F[\vec{M}]^{\GL_n(\F)}$, Mulmuley showed that the construction can occur in $\PSPACE$ unconditionally, or even $\PH$, assuming the Generalized Riemann Hypothesis. Thus, while there are explicit sets of separating invariants, and there are small sets of separating invariants, existing results do not achieve these two properties simultaneously.

The third problem is more geometric, as opposed to algebraic.  Given a tuple of matrices $\vec{A}$, we can consider the orbit of $\vec{A}$ under simultaneous conjugation as a subset of $(\F^{\zr{n}\times \zr{n}})^{\zr{r}}$. A natural computational question is to decide whether the orbits of $\vec{A}$ and $\vec{B}$ intersect.  However, from the perspective of algebraic geometry it is more natural to ask of the \textit{orbit closures} intersect. That is, we now consider $\vec{A}$ and $\vec{B}$ as lying in $(\overline{\F}^{\zr{n}\times \zr{n}})^{\zr{r}}$, where $\overline{\F}$ is the algebraic closure of $\F$.  Then, we consider the orbit closures of $\vec{A}$ and $\vec{B}$ in this larger space, where this refers to taking the orbits in $(\overline{\F}^{\zr{n}\times \zr{n}})^{\zr{r}}$ and closing them with respect to the Zariski topology. This yields the following question.

\begin{question}
	\label{q:orbit intersection}
	Let $\F$ be a field of characteristic zero.  Is there an efficient deterministic algorithm (in the unit cost arithmetic model) that, given $\vec{A},\vec{B}\in(\F^{\zr{n}\times \zr{n}})^{\zr{r}}$, decides whether the orbit closures of $\vec{A}$ and $\vec{B}$ under simultaneous conjugation have an empty intersection?
\end{question}

Mulmuley observed that by the dictionary of geometric invariant theory~\cite{MFK}, $\vec{A}$ and $\vec{B}$ have a non-empty intersection of their orbit closures iff they are not distinguishable by any set of separating invariants.  Thus, any explicit set $\cT'$ of separating invariants, would answer this question, as one could test if $f$ agrees on $\vec{A}$ and $\vec{B}$ (as $f$ is easy to compute, as it has a small circuit), for all $f\in T'$. Thus, as before, \autoref{q:orbit intersection} can be solved positively by a positive answer to \autoref{q:separating invariants}.

The main results of this paper provide positive answers to Questions~\ref{q:integral subring}, \ref{q:separating invariants} and \ref{q:orbit intersection}.

\subsection{Mulmuley's results}

Having introduced the above questions, we now summarize Mulmuley's results that show that these questions can be solved positively if one assumes that there exist explicit hitting sets for a certain subclass of algebraic circuits, which we now define.  We note that Mulmuley defines this model using linear, and not affine functions.  However, we define the model using affine functions as this allows the model to compute any polynomial (and not just homogeneous polynomials), potentially with large size.  However, this is without loss of generality, as derandomizing PIT is equally hard in the linear and the affine case, via standard homogenization techniques, see \autoref{linervsaffine}.

\begin{definition}[Mulmuley]\label{def: ST}
	A polynomial $f(x_1,\ldots,x_n)$ is computable by a width $w$, depth $d$, \textbf{trace of a matrix power} if there exists a matrix $A(x_1,\ldots,x_n)\in\F[x_1,\ldots,x_n]^{\zr{w}\times\zr{w}}$ whose entries are affine functions in $\vec{x}$  such that $f(\vec{x})=\Tr(A(\vec{x})^d)$. The \textbf{size}\footnote{One could consider the size to be $nw\log d$ because of the repeated squaring algorithm. However, in this paper our size measure is more natural as all such $d$ will be small.} of a trace of a matrix power is $nwd$.
\end{definition}

As matrix multiplication and trace both have small algebraic circuits, it follows that traces of matrix powers have small circuits.  Further, as a restricted class of algebraic circuits, we can seek to deterministically solve the PIT problem for them, and the hypothesis that this is possible is potentially weaker than the corresponding hypothesis for general algebraic circuits.  However, this hypothesis, in its black-box form, is strong enough for Mulmuley to derive implications for the above questions.

\begin{theorem*}[Part of Mulmuley's Theorems 1.1 and 3.6]
	Let $\F$ be a field of characteristic zero. Assume that there is a $t(m)$-explicit hitting set, of size $s(m)$, for traces of matrix power over $\F$ of size $m$.  Then there is a set $\cT'$ of separating invariants, of size $\poly(s(\poly(n,r)))$, with $\poly(t(\poly(n,r)))$-explicit traces of matrix powers.  Further, for algebraically closed $\F$, $\F[\vec{M}]^{\GL_n(\F)}$ is integral over the ring generated by $\cT'$.
\end{theorem*}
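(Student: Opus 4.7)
The plan uses Theorem~\ref{generatinginvariants} as its starting point: since $\F[\vec{M}]^{\GL_n(\F)}$ is generated by the traces $\tr(M_{i_1} \cdots M_{i_\ell})$ with $\ell \in [n^2]$, two tuples $\vec{A}, \vec{B}$ are separated by some invariant iff they are separated by some such trace. The goal is to package all these generators as coefficients of a single polynomial $F(\vec{y}, \vec{M})$ in auxiliary variables $\vec{y}$ which is itself realizable as a trace of matrix power, so that the assumed hitting set (applied in the $\vec{y}$ variables) yields a small set of specializations of $F$, which, viewed as polynomials in $\vec{M}$, form the desired $\cT'$.

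Fix the odd integer $\ell = 2n^2 + 1$, introduce fresh variables $\vec{y} = \{y^{(j)}_i\}$ indexed by $j \in [\ell]$ and $i \in \{0,\ldots,r\}$, declare $M_0 := \Id$, and set $P_j := \sum_{i=0}^r y^{(j)}_i M_i$ and $F(\vec{y}, \vec{M}) := \tr(P_1 P_2 \cdots P_\ell)$. Expanding yields
$$F(\vec{y},\vec{M}) = \sum_{\vec{i} \in \zr{r+1}^\ell} \Big(\prod_{j=1}^\ell y^{(j)}_{i_j}\Big) \,\tr(M_{i_1}\cdots M_{i_\ell});$$
the monomials on the right are pairwise distinct (each round contributes its own variables), and each coefficient is the trace generator obtained by deleting $M_0 = \Id$ factors, so every generator of length $p \le n^2 \le \ell$ appears as some coefficient. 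To realize $F$ as a trace of matrix power, use the cyclic block matrix
$$C(\vec{y},\vec{M}) := \begin{pmatrix} 0 & P_1 & & \\ & \ddots & \ddots & \\ & & 0 & P_{\ell-1} \\ P_\ell & & & 0 \end{pmatrix}.$$
The $j$-th diagonal block of $C^\ell$ equals the cyclic rotation $P_j P_{j+1}\cdots P_\ell P_1 \cdots P_{j-1}$, so by cyclicity of trace $\tr(C^\ell) = \ell F$, realizing $\ell F$ as a trace of matrix power of size $\poly(n,r)$ in $(\vec{y},\vec{M})$.

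Given $\vec{A}, \vec{B}$, set $q(\vec{y}) := F(\vec{y}, \vec{A}) - F(\vec{y}, \vec{B})$, a polynomial in $\vec{y}$ with $\F$-coefficients. Because $\ell$ is odd, $\tr((-X)^\ell) = -\tr(X^\ell)$, so
$$\ell \cdot q(\vec{y}) = \tr\!\left(\begin{pmatrix} C(\vec{y},\vec{A}) & 0 \\ 0 & -C(\vec{y},\vec{B}) \end{pmatrix}^{\!\ell}\right),$$
which realizes $\ell q$ as a trace of matrix power of size $\poly(n,r)$ in $\vec{y}$. Let $\cH$ be the assumed hitting set for traces of matrix power of size $\poly(n,r)$, and define $\cT' := \{F(\vec{y}^*, \vec{M}) : \vec{y}^* \in \cH\}$. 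Then $|\cT'| \le s(\poly(n,r))$, each element is a $\GL_n$-invariant polynomial in $\vec{M}$ (since $F(\vec{y}, \cdot)$ is invariant for every $\vec{y}$: conjugating each $P_j$ by a common $Q$ leaves $\tr(P_1 \cdots P_\ell)$ unchanged), and each admits the trace-of-matrix-power representation obtained from the cyclic block construction with $\vec{y}^*$ substituted into $C$, giving $\poly(n,r)$-size circuits; combining with the $t(\poly(n,r))$-explicitness of $\cH$ yields the desired $\poly(t(\poly(n,r)))$-explicitness for $\cT'$.

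Separation follows by contrapositive: if no element of $\cT'$ separates $\vec{A}, \vec{B}$, then $q(\vec{y}^*) = 0$ for every $\vec{y}^* \in \cH$; but $\ell q$ is a trace of matrix power of the right size, so the hitting set property forces $q \equiv 0$; matching monomials in $\vec{y}$ then yields equality of every trace generator on $\vec{A}$ and $\vec{B}$, so by Theorem~\ref{generatinginvariants} no invariant separates them. The integrality conclusion for algebraically closed $\F$ is the standard geometric invariant theory reduction: a set of separating invariants for a reductive group action generates a subring over which the full ring of invariants is integral (as in Derksen--Kemper~\cite{DK}). The main technical point is the parity bookkeeping — the odd exponent $\ell$ is exactly what makes both $F$ and the difference $q$ simultaneously expressible as single traces of matrix powers (via the $-C(\vec{y},\vec{B})$ block), while the $M_0 = \Id$ padding is what lets us fix one sufficiently large odd $\ell$ while still capturing trace generators of every length $\le n^2$.
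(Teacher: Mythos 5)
Your proof is essentially the right argument and follows the proof sketch the paper gives for Mulmuley's theorem, assembled from the same ingredients the paper itself develops in other sections: the encoding of the trace generators as coefficients of a bilinear polynomial (the paper's Construction~\ref{roabpinv} and Lemma~\ref{roabpcoeff}), and the cyclic block-matrix device for converting a trace of a product of $\ell$ distinct matrices into the trace of an $\ell$-th power of a single larger matrix (the paper's Lemma~\ref{lem: power of diagonal} and Corollary~\ref{cor: trace of power}). The odd-$\ell$ / block-diagonal trick for expressing the difference $F(\vec{y},\vec{A})-F(\vec{y},\vec{B})$ as a single trace of a power is a clean touch, and the separation and explicitness bookkeeping is correct.

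The one genuine deviation from the paper's construction --- and the place where a real gap opens --- is your choice to fix a single length $\ell=2n^2+1$ and pad with $M_0:=\Id$, rather than using one polynomial per $\ell\in[n^2]$ as in Construction~\ref{roabpinv} / Theorem~\ref{thm:sepinv}. The paper's choice means each separating invariant $f_\ell(\vec{M},\vec{\alpha})$ is \emph{homogeneous} of degree $\ell$ in $\vec{M}$, which the proof explicitly records and then feeds into Theorem~\ref{dkintegral} (Derksen--Kemper, as the paper states it) to get integrality. Your $F(\vec{y}^*,\vec{M})$ is \emph{not} homogeneous in $\vec{M}$: the $\Id$-padding produces terms $\tr(M_{i_1}\cdots M_{i_\ell})$ of every degree from $0$ up to $\ell$, and a fixed $\vec{y}^*$ mixes them. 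Theorem~\ref{dkintegral} as quoted in the paper requires a homogeneous separating set, so the sentence ``the integrality conclusion \ldots is the standard geometric invariant theory reduction'' is not quite available to you as stated. This is a fixable gap --- you can replace each $F(\vec{y}^*,\vec{M})$ by its $\le\ell$ homogeneous components (each still an invariant, still a small trace of matrix power after a short interpolation over a scaling $\vec{M}\mapsto t\vec{M}$, and the set still separates since a separating element must have a separating homogeneous component), at the cost of a $\poly(n)$ factor in $|\cT'|$ --- but the step is missing. You should either perform this decomposition explicitly, or revert to the paper's varying-$\ell$ construction, which sidesteps the issue entirely by being homogeneous from the start.
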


We briefly summarize the proof idea.  It is clear from the definitions that the generating set $\cT$ for $\F[\vec{M}]^{\GL_n(\F)}$ is a set of separating invariants, albeit larger than desired.  The proof will recombine these invariants into a smaller set, by taking suitable linear combinations.  Specifically, suppose $\vec{A}$ and $\vec{B}$ are separable, and thus $\vec{\cT}(\vec{A})\ne\vec{\cT}(\vec{B})$, where $\vec{\cT}(\vec{A})$ denotes the sequence of evaluations $(f(\vec{A}))_{f\in\cT}$.  Standard arguments about inner-products show then that $\la\vec{\cT}(\vec{A}),\vec{\alpha}\ra\ne \la\vec{\cT}(\vec{B}),\vec{\alpha}\ra$ for random values of $\vec{\alpha}$.  As a linear combination of invariants is also an invariant, it follows that $f_{\vec{\alpha}}(\vec{M})=\la\vec{\cT}(\vec{M}),\vec{\alpha}\ra$ is an invariant, and will separate $\vec{A}$ and $\vec{B}$ for random $\vec{\alpha}$.  Intuitively, one can non-explicitly derandomize this to yield a set of separating invariants by taking sufficiently many choices of $\vec{\alpha}$ and union bounding over all $\vec{A}$ and $\vec{B}$.

Note that finding such $\vec{\alpha}$ is equivalent to asking for a hitting set for the class of polynomials $\{f_{\vec{x}}(\vec{A})-f_{\vec{x}}(\vec{B}):\vec{A},\vec{B}\in(\F^{\zr{n}\times\zr{n}})^{\zr{r}}\}$, so explicitly derandomizing PIT would give an explicit set of separating invariants, as was desired.  However, as is, the above reduction is unsatisfactory in two ways.  Primarily, the resulting set of separating invariants would still be exponentially large, as one cannot, by a counting argument, construct small hitting sets for the above class of polynomials unless one can exploit structure in vectors $\vec{\cT}(\vec{A})$.  Second, the resulting invariants $f_{\vec{\alpha}}(\vec{M})$ will not have small circuits, unless, as before, one can exploit the structure of $\vec{\cT}(\vec{A})$, but now using the structure to compute the exponentially-large sum $\la\vec{\cT}(\vec{A}),\vec{\alpha}\ra$ in sub-exponential time.  Both of these problems can be overcome by showing that indeed the vector $\vec{\cT}(\vec{A})$ does have structure, in particular that it can be encoded into the coefficients of a small circuit.  The circuit class that Mulmuley uses is the trace of matrix powers model.

Assuming various plausible conjectures and using the requisite results in derandomization literature, Mulmuley showed that small explicit hitting sets exist, removing the need to outright conjecture the existence of such hitting sets.  This thus established the conjectural existence of small explicit sets of separating invariants by the above theorem.  We list one such conditional result here, noting that all such conditional results Mulmuley derived gave sets of separating invariants of quasi-polynomial size, or worse.

\begin{theorem*}[Part of Mulmuley's Theorems 1.2 and 5.1]
	Let $\F$ be a field of characteristic zero. Suppose there is a multilinear polynomial (family) $f(x_1,\ldots,x_n)\in\Z[x_1,\ldots,x_n]$ with coefficients containing at most $\poly(n)$ bits, such that $f(\vec{x})$ can be computed in $\exp(n)$-time, but $f$ cannot be computed by an algebraic circuit of size $\O(2^{\epsilon n})$ and depth $\O(n^{\epsilon})$ for some $\epsilon>0$. Then  $\F[\vec{M}]^{\GL_n(\F)}$ has a $\poly(n,r)^{\polylog(n,r)}$-size set of separating invariants, with $\poly(n,r)^{\polylog(n,r)}$-explicit traces of matrix powers.
\end{theorem*}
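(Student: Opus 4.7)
The plan is to reduce the theorem to the earlier statement (part of Mulmuley's Theorems 1.1 and 3.6), by constructing a $\poly(n,r)^{\polylog(n,r)}$-size and $\poly(n,r)^{\polylog(n,r)}$-explicit hitting set for traces of matrix powers of size $m=\poly(n,r)$. I would produce such a hitting set by combining two standard facts: traces of matrix powers are computed by small, low-depth algebraic circuits, and the hypothesized hard polynomial $f$ drives a Kabanets-Impagliazzo~\cite{KabanetsImpagliazzo04} style hardness-to-randomness construction.

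First, a trace of matrix power of size $m=nwd$ admits an algebraic circuit of size $\poly(m)$ and depth $\O(\log^2 m)$: the power $A^d$ is built by $\O(\log d)$ rounds of repeated squaring, and each $\zr{w}\times\zr{w}$ matrix product has a standard size-$\poly(w)$, depth-$\O(\log w)$ sum-of-products circuit. Thus it suffices to produce a hitting set for the (wider) class of $\poly(m)$-size, $\polylog(m)$-depth algebraic circuits. Second, I would instantiate the Nisan-Wigderson generator using $f$ as the hard function: fix $\ell=\polylog(m)$ sufficiently large, take a combinatorial design $S_1,\dots,S_m\subseteq[t]$ with $|S_i|=\ell$, pairwise intersections $\O(\log m)$, and $t=\O(\ell^2/\log m)=\polylog(m)$, and define $G:\F^t\to\F^m$ by $G(\vec{y})_i=f(\vec{y}|_{S_i})$. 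The algebraic NW reconstruction argument then shows that any $\poly(m)$-size, $\polylog(m)$-depth circuit $C$ vanishing on $G(\F^t)$ but not as a polynomial can be converted, via a hybrid step followed by univariate Lagrange interpolation, into an algebraic circuit for $f$ of size $\poly(m)=2^{\O(\log m)}$ and depth $\polylog(m)$; since $\ell$ can be chosen so that $2^{\O(\log m)}<2^{\epsilon\ell}$ and $\polylog(m)<\ell^{\epsilon}$, this contradicts the hypothesized hardness of $f$ on $n=\ell$ variables. Sweeping $\vec{y}$ over an appropriate grid of $\poly(m)^t=m^{\polylog(m)}$ field points therefore gives a hitting set of the required size.

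Explicitness follows because each element of the hitting set is determined by $m$ evaluations of $f$ on $\ell$-tuples, which by hypothesis can be computed in $\exp(\ell)=m^{\polylog(m)}$ unit-cost arithmetic operations (the $\poly(\ell)$-bit integer coefficients of $f$ fit into a constant number of field operations). Applying the earlier theorem with $m=\poly(n,r)$ then yields the claimed $\poly(n,r)^{\polylog(n,r)}$-size, $\poly(n,r)^{\polylog(n,r)}$-explicit set of separating invariants.

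The main obstacle is verifying that the NW reconstruction produces a circuit for $f$ whose \emph{depth} (not merely size) is small enough to violate the hardness hypothesis. Standard reconstruction arguments recurse on prefixes of the input and naively accumulate depth proportional to $\ell$; one must instead arrange the reconstruction in a parallel fashion, replacing the recursion by $\O(\log m)$-depth Lagrange interpolation circuits stacked over a single copy of $C$, and check that the total depth remains $\polylog(m)$. This depth-conscious form of the Kabanets-Impagliazzo argument is precisely what is enabled by the hypothesis asking for hardness against circuits of bounded size \emph{and} bounded depth, rather than merely bounded size.
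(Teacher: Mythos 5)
The paper does not prove this statement at all; it is quoted from Mulmuley's work (Theorems 1.2 and 5.1 there) as background, and the paper's own contribution is to bypass this conditional result with an unconditional one. So there is no ``paper's own proof'' to compare against. That said, your reconstruction is essentially the argument Mulmuley gives, and it is sound in outline. You correctly observe that a trace of a matrix power of size $m$ has a $\poly(m)$-size, $\O(\log^2 m)$-depth algebraic circuit via repeated squaring, and you correctly route the construction through the earlier cited theorem (Mulmuley's 1.1/3.6): it suffices to produce a quasipolynomial-size, quasipolynomial-explicit hitting set for $\poly(m)$-size, $\polylog(m)$-depth algebraic circuits. Instantiating the algebraic Nisan--Wigderson/Kabanets--Impagliazzo generator with the hypothesized hard multilinear $f$ does this, and your parameter accounting is right: with $\ell=\polylog(m)$ chosen large enough relative to $1/\epsilon$, the reconstruction produces a circuit for $f$ on $\ell$ variables of size $\poly(m)=2^{\O(\log m)}<2^{\epsilon\ell}$ and depth $\polylog(m)<\ell^{\epsilon}$, contradicting the hardness hypothesis; the hitting set and the explicitness bound are both $m^{\polylog(m)}$, which for $m=\poly(n,r)$ gives $\poly(n,r)^{\polylog(n,r)}$ as claimed. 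The subtlety you flag at the end is the correct one and is precisely why the hypothesis is phrased as hardness against \emph{simultaneous} size and depth bounds: a naive NW hybrid accumulates depth linearly in $\ell$, and one must parallelize the reconstruction (a bounded number of Lagrange-interpolation layers over a single copy of the small circuit) so that the total depth stays polylogarithmic. Your note that this is what the dual size-and-depth hypothesis is tailored to enable is exactly the point.
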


While the above result is conditional, unconditional results can also be derived, if randomness is allowed.  That is, by exploiting the connection between separating invariants and closed orbit intersections mentioned above, and using that PIT can be solved using randomness, Mulmuley obtains the following randomized algorithm.

\begin{theorem*}[Mulmuley's Theorem 3.8]
	Let $\F$ be a field of characteristic zero. There is an algorithm, running in randomized $\polylog(n,r)$-time using $\poly(n,r)$-processors ($\RNC$), in the unit cost arithmetic model, such that given $\vec{A},\vec{B}\in(\F^{\zr{n}\times \zr{n}})^{\zr{r}}$, one can decide whether the orbit closures of $\vec{A}$ and $\vec{B}$ under simultaneous conjugation have an empty intersection.
\end{theorem*}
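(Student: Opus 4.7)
The plan is to combine the Procesi--Razmyslov--Formanek generators (\autoref{generatinginvariants}) with the geometric invariant theory dictionary to reduce orbit-closure intersection to a small batch of polynomial identity tests, and then to solve each test in $\NC$ via a random evaluation coupled with parallel iterated matrix multiplication.

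\emph{Step 1 (algebraization).} By the GIT dictionary noted in the excerpt, the orbit closures of $\vec{A}$ and $\vec{B}$ are disjoint iff some invariant $f\in\F[\vec{M}]^{\GL_n(\F)}$ separates them. Since $\cT$ generates the ring of invariants by \autoref{generatinginvariants}, this happens iff some generator in $\cT$ separates them, i.e.\ iff $\tr(A_{i_1}\cdots A_{i_\ell}) \ne \tr(B_{i_1}\cdots B_{i_\ell})$ for some $\ell \in [n^2]$ and $\vec{i}\in\zr{r}^\ell$. This is exponentially many trace equalities; the rest of the proof is devoted to checking all of them together with only polynomial work and depth.

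\emph{Step 2 (compression to PIT).} Introduce fresh indeterminates $y_{i,k}$ for $i\in\zr{r}$, $k\in[n^2]$, and set the ``generic'' linear combinations $Y_k \eqdef \sum_i y_{i,k} A_i$ and $Z_k \eqdef \sum_i y_{i,k} B_i$. Expanding the product,
\[ \tr(Y_1 \cdots Y_\ell) \;=\; \sum_{\vec{i}\in\zr{r}^\ell} y_{i_1,1}\cdots y_{i_\ell,\ell}\cdot\tr(A_{i_1}\cdots A_{i_\ell}), \]
with all monomials distinct, and analogously for $B$. Hence $p_\ell(\vec{y}) \eqdef \tr(Y_1\cdots Y_\ell) - \tr(Z_1\cdots Z_\ell)$ vanishes identically iff every length-$\ell$ trace of $\vec{A}$ matches that of $\vec{B}$. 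So Step~1 reduces to checking $p_\ell \equiv 0$ for each $\ell\in[n^2]$: a batch of $n^2$ identities, each of degree $\le n^2$ in $\le rn^2$ variables.

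\emph{Step 3 (randomized parallel evaluation).} Sample each $y_{i,k}$ uniformly from a finite $S\subseteq\F$ of size $\Theta(n^4)$. By Schwartz--Zippel and a union bound over $\ell\in[n^2]$, whenever some $p_\ell$ is non-zero the sample detects this with probability at least $2/3$. Evaluating $p_\ell$ at the sample amounts to multiplying up to $n^2$ explicit $n\times n$ matrices over $\F$ and taking a trace; iterated matrix multiplication is a classical $\NC$ task via parallel prefix (depth $O(\log^2 n)$, work $\poly(n,r)$), and the $n^2$ values of $\ell$ are handled in parallel. The algorithm outputs ``disjoint'' if some $p_\ell$ is non-zero at the sample and ``intersecting'' otherwise, giving the desired $\polylog(n,r)$-depth, $\poly(n,r)$-processor, randomized decision procedure.

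The crux, and the only non-routine step, is the encoding of Step~2: absent it, one faces exponentially many trace equalities and would need a hitting set for $\cT$ itself, whereas taking the generic combinations $Y_k$ packages all $r^\ell$ length-$\ell$ traces as the \emph{distinct} monomial coefficients of a single polynomial that can be evaluated by iterated matrix multiplication. Once this compression is in place, Schwartz--Zippel and parallel linear algebra suffice.
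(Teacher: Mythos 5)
Your proof is correct, and it is essentially the same compression idea the paper develops, just with a slightly different parametrization of the generic linear combination. The paper (Construction~\ref{roabpinv}, Lemmas~\ref{diffofinv} and \ref{roabpcoeff}, Corollary~\ref{reducetoroabp}) packs all length-$\ell$ traces into $f_\ell(\vec{M},\vec{x}) = \tr(M(x_1)\cdots M(x_\ell))$ where $M(x) = \sum_{i\in\zr{r}} M_i x^i$, i.e.\ a single variable $x_k$ per factor, raised to powers up to $r-1$; you instead use $r$ fresh degree-one variables $y_{0,k},\dots,y_{r-1,k}$ per factor, giving a multilinear polynomial $\tr(Y_1\cdots Y_\ell)$. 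Both encodings make the $r^\ell$ traces appear as coefficients of distinct monomials of a polynomial evaluable by iterated $n\times n$ matrix multiplication, so both feed cleanly into Schwartz--Zippel plus parallel prefix, and your parameter accounting ($\deg p_\ell \le n^2$, sample set of size $\Theta(n^4)$, union bound over $\ell\in[n^2]$) is correct. The one thing the paper's parametrization buys that yours does not is that $f_\ell(\vec{A},\vec{x}) - f_\ell(\vec{B},\vec{x})$ is literally a read-once oblivious ABP in the sense of Definition~\ref{def: ABP} (one variable per layer, degree $<r$), which is what lets the paper upgrade this $\RNC$ theorem to the deterministic $\NC$ algorithm of \autoref{thm:orbit intersection} via the Raz--Shpilka white-box PIT. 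Your multilinear version has $r$ variables per layer, so it is a (non-commutative) ABP but not an ROABP verbatim; this is immaterial for the randomized claim you are proving but worth noting if one wants the derandomization.
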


Using the just mentioned conjectures, Mulmuley can also partially derandomize the above algorithm, but not to within polynomial time.

\subsection{Our Results}

We study further the connection raised by Mulmuley regarding the construction of separating invariants and the black-box PIT problem.  In particular, we more carefully study the classes of algebraic circuits arising in the reduction from Noether Normalization to PIT.  Two models are particularly important, and we define them now.

\begin{definition}[Nisan~\cite{Nisan91}]\label{def: ABP}
	A \textbf{algebraic branching program with unrestricted weights} of depth $d$ and width $\le w$, on the variables $x_1,\ldots,x_n$, is a directed acyclic graph such that
	\begin{itemize}
		\item The vertices are partitioned in $d+1$ layers $V_0,\ldots,V_d$, so that $V_0=\{s\}$ ($s$ is the source node), and $V_d=\{t\}$ ($t$ is the sink node). Further, each edge goes from $V_{i-1}$ to $V_{i}$ for some $0< i\le d$.
		\item $\max|V_i|\le w$.
		\item Each edge $e$ is weighted with a polynomial $f_e\in\F[\vec{x}]$.
	\end{itemize}

	Each $s$-$t$ path is said to compute the polynomial which is the product of the labels of its edges, and the algebraic branching program itself computes the sum over all $s$-$t$ paths of such polynomials.

	\begin{itemize}
		\item In an \textbf{algebraic branching program (ABP)}, for each edge $e$ the weight $f_e(\vec{x})$ is an affine function.  The \textbf{size} is $nwd$.
		\item In a \textbf{read-once oblivious ABP (ROABP)} of (individual) degree $<r$, we have $n\eqdef d$, and for each edge $e$ from $V_{i-1}$ to $V_i$, the weight is a univariate polynomial $f_e(x_i)\in\F[x_i]$ of degree $<r$. The \textbf{size} is $dwr$.
	\end{itemize}
\end{definition}

In the definition of ROABPs we will exclusively focus on individual degree, and thus will use the term ``degree'' (in \autoref{sec:diagonal} we will use the more usual total degree, for a different class of circuits). The ROABP model is called \textit{oblivious} because the variable order $x_1<\cdots<x_d$ is fixed. The model is called \textit{read-once} because the variables are only accessed on one layer in the graph.

The ABP model is a standard algebraic model that is at least as powerful as algebraic formulas, as shown by Valiant~\cite{Valiant79a}, and can be simulated by algebraic circuits.  As shown by Berkowitz~\cite{Berkowitz84}, the determinant can be computed by a small ABP over any field.  See Shpilka and Yehuydayoff~\cite{SY10} for more on this model.

The ROABP model arose in prior work of the authors (\cite{ForbesShpilka12a}) as a natural model of algebraic computation capturing several other existing models.  This model can also be seen as an algebraic analogue of the boolean model of computation known as the read-once oblivious branching program model, which is a non-uniform analogue of the complexity class $\RL$.  See Forbes and Shpilka~\cite{ForbesShpilka12a} for more of a discussion on the motivation of this class.

Note that a polynomial computed by an ROABP of size $s$ can be computed by an ABP of size $\poly(s)$.  The converse is not true, as Nisan~\cite{Nisan91} gave exponential lower bounds for the size of non-commutative ABPs computing the determinant, and non-commutative ABPs encompass ROABPs, while as mentioned above Berkowitz~\cite{Berkowitz84} showed the determinant can be computed by small ABPs.  Thus the ROABP model is strictly weaker in computational power than the ABP model.

While there are no efficient (white-box or black-box) PIT algorithms for ABPs, we established in prior work (\cite{ForbesShpilka12a}) a quasi-polynomial sized hitting set for ROABPs.  This hitting set will be at the heart of our main result.

\begin{theorem}[\cite{ForbesShpilka12a}]\label{thm:FS}
	Let $\cC$ be the set of $d$-variate polynomials computable by depth $d$, width $\le w$, degree $<r$ ROABPs.  If $|\F|\ge \poly(d,w,r)$, then $\mathcal{C}$ has a $\poly(d,w,r)$-explicit hitting set $\cH\subseteq\F^d$, of size $\le \poly(d,w,r)^{\O(\lg d)}$.  Further, if $\F$ has characteristic zero then $\cH\subseteq\Q^d$.
\end{theorem}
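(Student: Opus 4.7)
My plan is to reduce the PIT problem for ROABPs to a rank-preservation problem for a sequence of matrices, and then construct a rank-preserving substitution by a recursive doubling argument.

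First, I would put the ROABP into a standard normal form: any depth-$d$, width-$w$, degree-$<r$ ROABP computes a polynomial of the form $f(\vec{x}) = u^\top M_1(x_1) M_2(x_2) \cdots M_d(x_d) v$, where each $M_i(x_i) \in \F[x_i]^{\zr{w}\times\zr{w}}$ has entries of degree less than $r$. Equivalently, for each layer $k$ the $w$ polynomials computing the prefix values at the vertices of $V_k$ span a subspace of $\F[\vec{x}_{\le k}]$ of dimension at most $w$. My goal then becomes: find a small explicit set $\cH$ such that for every such ROABP there is an $\vec{\alpha} \in \cH$ under which, for every $k$, the matrix $P_k(\vec{\alpha}_{\le k}) = \prod_{i\le k} M_i(\alpha_i)$ has the same rank as the generic matrix $P_k(\vec{x}_{\le k})$ over $\F(\vec{x})$. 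This simultaneous rank preservation across all $k$ will imply $f(\vec{\alpha}) \ne 0$ whenever $f \not\equiv 0$.

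Next, I would prove two building blocks. The base case (one variable) is handled by any set of $wr+1$ distinct field elements, since each $w \times w$ minor of a univariate matrix polynomial of entry-degree $<r$ has determinant of degree at most $w(r-1)$. For the inductive step, suppose I have a $t(d')$-sized hitting set $\cH_{d'}$ that is rank-preserving for every depth-$d'$, width-$w$ ROABP. I would combine two such hitting sets for $d = 2 d'$ by using one copy of $\cH_{d'}$ on the first $d'$ variables and another copy on the second $d'$ variables, linked by a short seed that controls the interface. Because the rank at the boundary layer $d'$ is preserved by the first half, the second half effectively faces an ROABP whose starting configuration is the full-rank $w$-dimensional prefix space, so the second hitting set continues to work. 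The combination multiplies the size by at most $\poly(d,w,r)$ per recursive level, and after $O(\log d)$ levels of recursion we obtain a hitting set of size $\poly(d,w,r)^{O(\log d)}$.

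The main obstacle, and what makes the recursion delicate, is ensuring that rank is preserved at \emph{every} prefix $k$ simultaneously, including those that straddle the boundary between the two halves: a naive Cartesian product of the two half-hitting-sets only preserves rank at the boundary layer, not at intermediate layers on each side. Overcoming this requires a careful seeded combination that exploits the explicit rank condensers at each level and verifies that the combined substitution preserves the dimension of every intermediate prefix space, not merely the final one. The condition $|\F| \ge \poly(d,w,r)$ is exactly what is needed so that the base-case construction and the combining seeds have enough distinct field constants available; in characteristic zero the constants may be chosen from $\Z$, so $\cH \subseteq \Q^d$. Explicitness follows because each layer of the recursion is a simple combinatorial construction, so computing any single point of $\cH$ from its index takes $\poly(d,w,r)$ time in the unit-cost arithmetic model.
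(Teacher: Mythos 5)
The paper itself does not prove this statement; \autoref{thm:FS} is imported verbatim from the cited reference~\cite{ForbesShpilka12a}, and nothing in the present paper re-derives it. So the only comparison available is against the construction in that reference, and against that your sketch captures the right coarse shape (a recursion with $O(\log d)$ levels, a rank-condenser primitive at each level, a univariate interpolation base case) but has a genuine gap in its central invariant.

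The invariant you propose --- that a \emph{single} point $\vec{\alpha}\in\cH$ should make every prefix product $P_k(\vec\alpha_{\le k})=\prod_{i\le k}M_i(\alpha_i)$ attain its generic rank, and that this forces $f(\vec\alpha)\ne 0$ --- is false. Take $d=1$, $w=2$, $M_1(x)=\left(\begin{smallmatrix}1&x\\0&1\end{smallmatrix}\right)$, $u=(1,0)^\top$, $v=(0,1)^\top$. Then $f(x)=x\not\equiv 0$, $P_1(x)$ has generic rank $2$, and $P_1(0)=\Id_2$ still has rank $2$, yet $f(0)=0$. Rank of the matrix product is simply the wrong quantity: $f$ is a fixed bilinear form in $P_d$, and $P_d$ having full rank at a point says nothing about $u^\top P_d v$ at that point. (Folding $u,v$ into the end matrices does not rescue this either: then the intermediate $P_k$ become row vectors, rank preservation collapses to ``the row vector is non-zero,'' and that is far too weak to drive the recursion.) The correct notion in~\cite{ForbesShpilka12a} is not a pointwise rank condition but a condition on the \emph{collective} span of evaluations: one asks that $\{L_k(\vec\alpha_{\le k}) : \vec\alpha\in\cH\}$ spans the same $\le w$-dimensional subspace of $\F^w$ as the coefficient vectors of the prefix $L_k(\vec x_{\le k})=u^\top M_1(x_1)\cdots M_k(x_k)$. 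That is a property of the hitting set as a whole, and it is exactly what the explicit rank condenser is designed to propagate across levels. Your sketch also leaves the ``seeded combination'' entirely unspecified, but that combination (the explicit rank condenser with a small seed) is where essentially all of the work in~\cite{ForbesShpilka12a} lies; without it you cannot control how the intermediate spans degrade across the recursion, which you yourself flag as the main obstacle but do not resolve.
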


Our contributions are split into four sections.

\paragraph{The computational power of traces of matrix powers:} We study the model of algebraic computation, traces of matrix powers, shown by Mulmuley to have implications for derandomizing Noether Normalization.  In particular, as this model is a restricted class of algebraic circuits, we can ask: how restricted is it?  If this model was particularly simple, it would suggest that derandomizing PIT for this class would be a viable approach to derandomizing Noether Normalization.  In contrast, if this model of computation is sufficiently general, then given the difficulty of derandomizing PIT for such general models, using Mulmuley's reduction to unconditionally derandomize Noether Normalization could be a formidable challenge. In this work, we show it is the latter case, proving the following theorem.

\begin{theorem*}[\autoref{thm: equivalence}]
	The computational models of algebraic branching programs and traces of matrix powers are equivalent, up to polynomial blow up in size.
\end{theorem*}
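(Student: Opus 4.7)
The plan is to establish polynomial-size simulations in both directions. The direction from a trace of matrix power to an ABP is straightforward. Given $f(\vec{x}) = \Tr(A(\vec{x})^d)$ for a $w \times w$ affine-entry matrix $A$, we rewrite $f = \sum_{i \in \zr{w}} e_i^T A^d e_i$ and build $w$ parallel sub-ABPs, one for each $i$, each implementing the iterated matrix multiplication $e_i^T A \cdots A \, e_i$ by the standard construction: a chain of $d+1$ layers with $w$ nodes each, the edge from node $j$ in layer $\ell$ to node $k$ in layer $\ell+1$ labeled by $A_{jk}(\vec{x})$, restricted to entry and exit at node $i$. Sharing a common source and sink across the $w$ copies sums the diagonal entries and produces an ABP of size $\O(nw^2 d)$.

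The reverse direction is the main content. A width-$w$ depth-$d$ ABP computes a polynomial of the form $P(\vec{x}) = u^T A_1(\vec{x}) A_2(\vec{x}) \cdots A_d(\vec{x}) v$, where, after padding every internal layer to exactly $w$ nodes, each $A_\ell$ is a $w \times w$ matrix of affine forms and $u = v = e_1 \in \F^w$. Using the identity $u^T A_1 \cdots A_d v = \Tr(A_1 \cdots A_d M)$ with the constant rank-one matrix $M \eqdef v u^T$, absorb the boundary data into $M$ and assemble the $(d+1)w \times (d+1)w$ block-cyclic matrix
\[
B(\vec{x}) \eqdef \begin{pmatrix}
0 & A_1 & 0 & \cdots & 0 \\
0 & 0 & A_2 & \cdots & 0 \\
\vdots & & & \ddots & \vdots \\
0 & 0 & 0 & \cdots & A_d \\
M & 0 & 0 & \cdots & 0
\end{pmatrix}.
\]
Viewing $B$ as a block-weighted directed cycle on $d+1$ super-nodes, every closed walk of length $d+1$ traverses the cycle exactly once, so $B^{d+1}$ is block-diagonal with diagonal blocks given by the $d+1$ cyclic rotations of $A_1 A_2 \cdots A_d M$. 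By cyclicity of the trace, all diagonal blocks share a common trace, so $\Tr(B^{d+1}) = (d+1) \cdot \Tr(A_1 \cdots A_d M) = (d+1) P$. In characteristic zero we may then rescale $A_1$ to $A_1/(d+1)$, which preserves affineness of all entries, and obtain a matrix $B'$ with $\Tr((B')^{d+1}) = P$ exactly, yielding a trace of matrix power of size $\O(nwd^2)$.

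The construction is essentially the algebraic version of the graph-theoretic trick of closing a layered $s$-$t$ graph into a cycle by adding a single back-edge, and I expect the only real subtlety in fleshing out a full proof to be bookkeeping: verifying that $B$ has affine (rather than merely polynomial) entries, and that the $(d+1)$-scaling can be absorbed without inflating size or degree. A positive-characteristic variant would require the cycle length to be coprime to the characteristic, which can always be arranged by padding the depth with copies of the identity matrix, but this issue is moot in the characteristic-zero setting of the surrounding paper.
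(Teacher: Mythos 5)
Your proposal is correct and takes essentially the same route as the paper's proof of \autoref{thm:ABP STIT}: both directions rest on the same block-cyclic matrix identity (the paper's \autoref{lem: power of diagonal} and \autoref{cor: trace of power}), with the trace absorbed via rescaling one matrix factor, and the converse handled by summing the $w$ diagonal-entry ABPs. The only cosmetic difference is that you introduce the constant matrix $M=vu^T$ to turn the $(1,1)$-entry into a trace (yielding a cycle of length $d+1$), whereas the paper uses the zero-padding built into \autoref{roabpmatrix} so that the product already has only the $(0,0)$ entry nonzero (yielding a cycle of length $d'$ with $d'\in\{d,d+1\}$ to dodge the characteristic); your closing remark about padding with identities to keep the cycle length coprime to $\chara(\F)$ is precisely the paper's fix.
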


\paragraph{Derandomizing Noether Normalization via an improved reduction to PIT:} This section contains the main results of our paper. Given the above result, and the lack of progress on derandomizing PIT in such general models such as ABPs, it might seem that derandomizing Noether Normalization for simultaneous conjugation is challenging.  However, we show this is not true, by showing that derandomization of black-box PIT for ROABPs suffices for derandomizing Noether Normalization for simultaneous conjugation.  By then invoking our prior work on hitting sets for ROABPs cited as \autoref{thm:FS}, we establish the following theorems, giving quasi-affirmative answers to Questions~\ref{q:integral subring}, \ref{q:separating invariants} and \ref{q:orbit intersection}. Furthermore, our results are proved unconditionally and are at least as strong as the conditional results Mulmuley obtains by assuming strong conjectures such as the Generalized Riemann Hypothesis or strong lower bound results.

Specifically, we prove the following theorem which gives an explicit set of separating invariants (see \autoref{q:separating invariants}).

\begin{theorem}\label{thm:integral subring}
	Let $\F$ be a field of characteristic zero. There is a $\poly(n,r)^{\O(\log(n))}$-sized set $\cT_\cH$ of separating invariants, with $\poly(n,r)$-explicit ABPs. That is, $\cT_\cH\subseteq\F[\vec{M}]^{\GL_n(\F)}$, and for any $\vec{A},\vec{B}\in(\F^{\zr{n}\times \zr{n}})^{\zr{r}}$, $f(\vec{A})\ne f(\vec{B})$ for some $f\in\F[\vec{M}]^{\GL_n(\F)}$ iff $f'(\vec{A})\ne f'(\vec{B})$ for some $f'\in T_\cH$.
\end{theorem}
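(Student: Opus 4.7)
The plan is to improve Mulmuley's reduction by encoding the entire vector of generating invariants $\vec\cT(\vec M)$ as the coefficients of a single polynomial that is computed by a small ROABP (rather than by a trace of a matrix power). Once this is done we can invoke \autoref{thm:FS} in place of the hypothesized PIT algorithms that Mulmuley required.

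First, I introduce fresh variables $\vec y = (y_1,\ldots,y_L)$ with $L \eqdef n^2$, and for any tuple $\vec X = (X_0,\ldots,X_{r-1})$ of $n \times n$ matrices define
\[
P_{\vec X}(\vec y) \eqdef \tr\!\left(\prod_{k=1}^{L}\Bigl(I + \sum_{i=0}^{r-1} y_k^{i+1}\,X_i\Bigr)\right).
\]
Expanding this product and using cyclicity of the trace, every generator $\tr(X_{i_1}\cdots X_{i_\ell})$ from \autoref{generatinginvariants} (for $\ell \le L$) appears as the coefficient of some monomial $y_{k_1}^{i_1+1}\cdots y_{k_\ell}^{i_\ell+1}$ in $P_{\vec X}(\vec y)$. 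Hence if $\vec A,\vec B \in (\F^{\zr n \times \zr n})^{\zr r}$ are distinguished by some element of $\cT$, then the polynomial $Q(\vec y) \eqdef P_{\vec A}(\vec y) - P_{\vec B}(\vec y)$ is not identically zero.

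Next I show that $P_{\vec X}(\vec y)$ admits a small ROABP. The polynomial is the trace of an iterated product of $L$ matrices $Z_1,\ldots,Z_L$ in which the $k$-th factor $Z_k$ has entries that are univariate polynomials in $y_k$ of degree at most $r$. Writing
\[
\tr(Z_1\cdots Z_L) \;=\; \sum_{p_0}\sum_{p_1,\ldots,p_{L-1}} (Z_1)_{p_0 p_1}(Z_2)_{p_1 p_2}\cdots (Z_L)_{p_{L-1} p_0},
\]
and implementing the intermediate summations as a layered graph (with the starting index $p_0$ tracked across layers) yields an ROABP in the variable order $y_1 < \cdots < y_L$ of width $\O(n^2)$, depth $L$, and individual degree at most $r$. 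ROABPs are closed under differences (for instance by placing the two ROABPs for $P_{\vec A}$ and $P_{\vec B}$ in parallel with a sign flip), so $Q(\vec y)$ lies in the class addressed by \autoref{thm:FS}, which provides a $\poly(n,r)$-explicit hitting set $\cH \subseteq \F^L$ of size $\poly(n,r)^{\O(\log n)}$.

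Finally, for each $\vec\beta \in \cH$ define $f_{\vec\beta}(\vec M) \eqdef P_{\vec M}(\vec\beta)$. Since $P_{\vec M}(\vec\beta)$ is an $\F$-linear combination of the generators of \autoref{generatinginvariants} with coefficients determined by $\vec\beta$, each $f_{\vec\beta}$ lies in $\F[\vec M]^{\GL_n(\F)}$. Moreover $f_{\vec\beta}(\vec M)$ is the trace of a product of $L$ matrices each of whose entries is an affine form in the entries of $\vec M$, so it has a $\poly(n,r)$-explicit ABP. If $\vec A,\vec B$ are distinguished by some invariant (equivalently, by some element of $\cT$), then $Q \not\equiv 0$ and by the hitting set property some $\vec\beta\in\cH$ satisfies $f_{\vec\beta}(\vec A) \ne f_{\vec\beta}(\vec B)$, so $\cT_\cH \eqdef \{f_{\vec\beta} : \vec\beta \in \cH\}$ is the desired separating set.

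The main obstacle is the second step: identifying an encoding $P_{\vec X}(\vec y)$ that simultaneously packages all of $\cT$ into its coefficients \emph{and} is computable by an ROABP, not merely by a general ABP. This sharper structural observation is precisely what lets the weaker but unconditional hitting set of \autoref{thm:FS} discharge the role that PIT-for-ABPs (or Mulmuley's trace-of-matrix-powers model) played in Mulmuley's argument.
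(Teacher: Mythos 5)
Your proof is correct for the theorem as stated, and its core strategy---encode the full generating set $\cT$ of \autoref{generatinginvariants} as the coefficients of a polynomial that becomes an ROABP in the auxiliary variables once $\vec{M}$ is fixed, then use the hitting set of \autoref{thm:FS}---is exactly the paper's strategy. Where you diverge is in the encoding itself: the paper uses a separate polynomial $f_\ell(\vec{M},\vec{x}) = \tr(M(x_1)\cdots M(x_\ell))$ with $M(x)=\sum_i M_i x^i$ for each $\ell\in[n^2]$, so its $\cT_\cH$ has $n^2|\cH|$ elements. You instead insert an identity matrix into each factor, $Z_k = I + \sum_i y_k^{i+1} X_i$, which lets "skipping'' layer $k$ encode products of every length $\ell \le L = n^2$ simultaneously inside a single trace of an $L$-fold product, yielding a $\cT_\cH$ of size only $|\cH|$. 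This is a genuinely slicker packaging, at the cost of raising the individual degree from $<r$ to $\le r$ (a harmless parameter shift in \autoref{thm:FS}).

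One substantive caveat you should be aware of: your $f_{\vec\beta}(\vec{M}) = P_{\vec{M}}(\vec{\beta})$ is \emph{not} homogeneous in the entries of $\vec{M}$; it mixes degrees $0$ through $n^2$ (including the constant $\tr(I)=n$ from the all-skip term). The theorem statement you are proving does not demand homogeneity, so your argument is complete for it. However, the paper's \autoref{thm:sepinv} deliberately produces \emph{homogeneous} separating invariants, because the Derksen--Kemper result (\autoref{dkintegral}) that powers the integrality conclusion \autoref{Cor:main integral} requires homogeneity. The paper achieves this for free since each $f_\ell(\vec{M},\vec{\alpha})$ is homogeneous of degree exactly $\ell$; your single-polynomial encoding sacrifices this. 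To recover it you would have to split each $f_{\vec\beta}$ into its $n^2+1$ homogeneous components (via interpolation over a scaling of $\vec{M}$), which reintroduces the $\poly(n)$ blow-up you saved and some extra bookkeeping to certify that each component still has an explicit ABP. So the two encodings end up at essentially the same place once the downstream requirements are taken into account.
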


As a consequence of \autoref{thm:integral subring} and the discussion in \autoref{sec:NNL} we obtain the following corollary that gives a positive answer to \autoref{q:integral subring}. In particular, it provides a derandomization of Noether Normalization Lemma for the ring of invariants of simultaneous conjugation.

\begin{corollary}\label{Cor:main integral}
	Let $\F$ be an algebraically closed field of characteristic zero. Let $\cT_\cH$ be the set guaranteed by \autoref{thm:integral subring}. Then, $\F[\vec{M}]^{\GL_n(\F)}$ is integral over the subring generated by $T_\cH$.
\end{corollary}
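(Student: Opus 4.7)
The plan is to combine \autoref{thm:integral subring} with the standard fact from geometric invariant theory already highlighted in the discussion preceding \autoref{q:separating invariants}: over an algebraically closed field, any separating subset of an invariant ring generates a subring over which the full invariant ring is integral. \autoref{thm:integral subring} directly supplies $\cT_\cH$ and verifies that it is a separating set of invariants, so the only remaining step is to upgrade ``separating'' to ``integral over.''

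For this step I would appeal to the classical result (Derksen--Kemper~\cite{DK}, building on Mumford et al.~\cite{MFK}) that for a linearly reductive group $G$ acting rationally on an affine variety $V$ over an algebraically closed field of characteristic zero, any separating subset $\cT' \subseteq \F[V]^G$ has the property that $\F[V]^G$ is a finitely generated module (equivalently, integral) over the subalgebra generated by $\cT'$. Since $\GL_n(\F)$ is linearly reductive in characteristic zero and acts rationally on $(\F^{\zr{n}\times\zr{n}})^{\zr{r}}$ by simultaneous conjugation, this theorem applies directly to the inclusion $A := \F[\cT_\cH] \hookrightarrow \F[\vec{M}]^{\GL_n(\F)}$. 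The underlying intuition: separating invariants distinguish the closed $G$-orbits, which under reductivity correspond bijectively to closed points of the categorical quotient $\mathrm{Spec}(\F[\vec{M}]^{\GL_n(\F)})$. Thus the inclusion of $A$ corresponds to a morphism of affine varieties that is injective on closed points, and combined with finite generation of $\F[\vec{M}]^{\GL_n(\F)}$ (Hilbert's theorem on invariants) plus a Nullstellensatz argument this forces the morphism to be finite, which is precisely integrality of the ring extension.

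There is essentially no technical obstacle beyond invoking the correct GIT statement, since the algorithmically substantive content was already absorbed into the proof of \autoref{thm:integral subring}. This matches the expectation set up in \autoref{sec:NNL}, where it was already noted that any positive answer to \autoref{q:separating invariants} automatically yields a positive answer to \autoref{q:integral subring}, so the corollary really is a one-line consequence of \autoref{thm:integral subring} together with the cited invariant-theoretic machinery.
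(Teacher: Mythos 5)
Your overall approach is the same as the paper's: take the separating set $\cT_\cH$ from \autoref{thm:integral subring} and feed it into a Derksen--Kemper theorem that converts ``separating'' into ``the invariant ring is integral over the generated subalgebra.'' The paper makes this precise via \autoref{thm:sepinv} and \autoref{dkintegral}.

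However, there is an imprecision in the way you state the invariant-theoretic input that, taken literally, leaves a gap. The Derksen--Kemper result the paper cites (\autoref{dkintegral}, their Theorem~2.3.12, also Mulmuley's Theorem~2.11) requires the separating invariants to be \emph{homogeneous}, not merely separating; the proof uses the separating hypothesis to show that the subalgebra and the full invariant ring cut out the same nullcone, and then graded Nakayama gives finiteness --- a step that genuinely needs the grading. Your stated version (``any separating subset \ldots{} has the property that $\F[V]^G$ is a finitely generated module over the subalgebra generated by $\cT'$'') drops this hypothesis, and your heuristic justification (``injective on closed points $\Rightarrow$ finite'' via the Nullstellensatz) is not a valid implication for morphisms of affine varieties, so it does not repair the omission. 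The statement of \autoref{thm:integral subring} in the introduction also does not mention homogeneity, so you cannot appeal to it blindly; what saves the argument is that the more detailed \autoref{thm:sepinv} explicitly establishes that every $f_\ell(\vec{M},\vec{\alpha}) \in \cT_\cH$ is homogeneous (being a linear combination of the homogeneous generators $\tr(M_{i_1}\cdots M_{i_\ell})$), and this is the fact you need to invoke to satisfy the hypothesis of \autoref{dkintegral}. With that observation added, your proof matches the paper's.
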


For deciding intersection of orbit closures, \autoref{q:orbit intersection}, the natural extension of \autoref{thm:integral subring}, as argued in \autoref{sec:NNL}, would yield a quasi-polynomial-time algorithm for deciding orbit closure intersection.  However, by replacing the black-box PIT results for ROABPs of Forbes and Shpilka~\cite{ForbesShpilka12a} by the white-box PIT results by Raz and Shpilka~\cite{RazShpilka05} (as as well as follow-up work by Arvind, Joglekar and Srinivasan~\cite{ArvindJS09}), we can obtain the following better algorithm for deciding orbit closure intersection, proving a strong positive answer to \autoref{q:orbit intersection}.

\begin{theorem}\label{thm:orbit intersection}
	Let $\F$ be a field of characteristic zero. There is an algorithm, running in deterministic $\polylog(n,r)$-time using $\poly(n,r)$-processors ($\NC$), in the unit cost arithmetic model, such that given $\vec{A},\vec{B}\in(\F^{\zr{n}\times \zr{n}})^{\zr{r}}$, one can decide whether the orbit closures of $\vec{A}$ and $\vec{B}$ under simultaneous conjugation have an empty intersection.
\end{theorem}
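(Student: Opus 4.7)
The plan is to reduce orbit closure intersection to a family of white-box identity tests on small ROABPs, and then apply a parallel white-box PIT algorithm. By the dictionary of geometric invariant theory (see \autoref{sec:NNL}), the orbit closures of $\vec{A}$ and $\vec{B}$ under simultaneous conjugation have non-empty intersection if and only if every invariant in $\F[\vec{M}]^{\GL_n(\F)}$ agrees on $\vec{A}$ and $\vec{B}$; by \autoref{generatinginvariants} this is in turn equivalent to
\[ \tr(A_{i_1}\cdots A_{i_\ell}) \;=\; \tr(B_{i_1}\cdots B_{i_\ell}) \qquad \text{for every } \ell\in[n^2],\ \vec{i}\in \zr{r}^\ell. \]
So it suffices, for each of the $n^2$ values of $\ell$, to decide in parallel whether all length-$\ell$ trace invariants agree on $\vec{A}$ and $\vec{B}$.

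To do so, for each $\ell$ I would introduce fresh blocks of variables $\vec{y}_1,\ldots,\vec{y}_\ell$ with $\vec{y}_j = (y_{j,0},\ldots,y_{j,r-1})$, and consider the polynomial
\[ h_\ell^{\vec{C}}(\vec{y}) \;\eqdef\; \tr\Bigl(\,{\textstyle\prod_{j=1}^{\ell}}\, M_j^{\vec{C}}(\vec{y}_j)\,\Bigr), \qquad M_j^{\vec{C}}(\vec{y}_j) \;\eqdef\; \sum_{i\in\zr{r}} y_{j,i}\,C_i. \]
The coefficient of $y_{1,i_1}y_{2,i_2}\cdots y_{\ell,i_\ell}$ in $h_\ell^{\vec{C}}$ is exactly $\tr(C_{i_1}\cdots C_{i_\ell})$, so orbit closure non-intersection is equivalent to the existence of some $\ell\in[n^2]$ for which $h_\ell^{\vec{A}} - h_\ell^{\vec{B}} \not\equiv 0$. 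Under the variable order $\vec{y}_1 < \cdots < \vec{y}_\ell$ (with each block split into $r$ single-variable sublayers carrying affine labels of degree $\le 1$), $h_\ell^{\vec{C}}$ is computable by an ROABP of width $\O(n^2)$ and depth $\O(rn^2)$ whose intermediate state tracks the pair (initial row, current column) of the partial matrix product, with a final summation over the initial row producing the trace. Since ROABPs in a common variable order are closed under addition with only an additive blowup in width, $h_\ell^{\vec{A}} - h_\ell^{\vec{B}}$ also has an ROABP of size $\poly(n,r)$.

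It remains to decide in parallel, for each $\ell \in [n^2]$, whether this small ROABP is identically zero. For this I would invoke the deterministic white-box PIT algorithm for ROABPs of Raz and Shpilka~\cite{RazShpilka05}, in the parallelized form of Arvind, Joglekar and Srinivasan~\cite{ArvindJS09}. At its core this algorithm reduces to linear-algebraic primitives (matrix multiplication, rank and kernel over $\F$) applied to ``coefficient'' matrices extracted from the ROABP, all of which lie in $\NC$ in the unit-cost arithmetic model. Running the $n^2$ PIT instances in parallel then yields the claimed $\NC$ algorithm with $\poly(n,r)$ processors. The main obstacle is exactly this parallelization step: Raz--Shpilka's original procedure is naturally a sequential layer-by-layer sweep that propagates a coefficient space, and placing it in $\NC$ requires that this sweep be recast as a parallel matrix-chain product together with $\NC$ rank/kernel computations --- which is precisely what the Arvind--Joglekar--Srinivasan analysis delivers. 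Everything else is a direct combination of the generating-invariants theorem, the ROABP encoding of the trace, and standard $\NC$ linear algebra.
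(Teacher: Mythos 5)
Your proposal is correct and follows essentially the same route as the paper: reduce to agreement of the length-$\ell$ trace invariants via the generating-set theorem, package all invariants of a fixed length into a single polynomial whose coefficients are those traces, observe this polynomial (and hence the difference for $\vec{A}$ and $\vec{B}$) has a $\poly(n,r)$-size ROABP, and then run the Raz--Shpilka white-box PIT in the parallelized form of Arvind--Joglekar--Srinivasan on each of the $n^2$ instances. The only difference is cosmetic: the paper encodes the length-$\ell$ traces as coefficients of $f_\ell(\vec{C},\vec{x}) = \tr\bigl(M(x_1)\cdots M(x_\ell)\bigr)$ with $M(x)=\sum_i C_i x^i$, using $\ell$ variables of individual degree $<r$, whereas you de-Kroneckerize to a multilinear polynomial in $\ell r$ variables $y_{j,i}$; both give $\poly(n,r)$-size ROABPs (though note your width bound needs the intermediate state to carry a ``chosen yet within block $j$'' bit, so it is $2n^2$ rather than literally the set of pairs, and the block-to-block collapse must be folded into the last sublayer of each block --- minor bookkeeping that does not affect the polynomial bound).
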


As mentioned above, Mulmuley also gets results for Noether Normalization of arbitrary quivers (Mulmuley's Theorem 4.1) in a generalization of the results on simultaneous conjugation.  The main difference is a generalization of the list of generators given in \autoref{generatinginvariants} to arbitrary quivers, as given by Le Bruyn and Procesi~\cite{LeBruynProcesi90}. Our improved reduction to PIT, involving ROABPs instead of ABPS, also generalizes to this case, so analogous results to the above three theorems are readily attained.  However, to avoid discussing the definition of quivers, we do not list the details here.

\paragraph{PIT for Depth-3 Diagonal Circuits:} Mulmuley's Theorem 1.4 showed that Noether Normalization for representations of $\SL_m(\F)$ can be reduced, when $m$ is constant, to black-box PIT of a subclass of circuits known as \textit{depth-3 diagonal circuits}, see \autoref{sec:diagonal} for a definition.  This class of circuits (along with a depth-4 version) was introduced in Saxena~\cite{Saxena08}, who gave a polynomial-time white-box PIT algorithm, via a reduction to the white-box PIT algorithm for non-commutative ABPs of Raz and Shpilka~\cite{RazShpilka05}.  Saha, Saptharishi and Saxena~\cite{SahaSS11} (among other things) generalized these results to the depth-4 semi-diagonal model.  Agrawal, Saha and Saxena~\cite{AgrawalSS12} gave (among other things) a quasipolynomial size hitting set for this model, by showing that any such circuit can be shifted so that there is a small-support monomial, which can be found via brute-force.  In independent work, the present authors (in \cite{ForbesShpilka12a}) also established (among other things) a quasipolynomial size hitting set for this model.  This was done by showing that the depth-4 semi-diagonal model is efficiently simulated by the ROABP model.  Further, this was done in two ways: the first was an explicit reduction by using the duality ideas of Saxena~\cite{Saxena08}, and the second was to show that the diagonal model has a small space of derivatives in a certain sense, and that ROABPs can efficiently compute any polynomial with that sort of small space of derivatives.  Some aspects of this model are also present in the work of Gupta-Kamath-Kayal-Saptharishi~\cite{GuptaKKS13} showing that (arbitrary) depth-3 formulas capture, in a sense, the entire complexity of arbitrary algebraic circuits.

Here, we give a simpler proof that the depth-3 diagonal circuit model has a quasipolynomial size hitting set.  This is done using the techniques of \cite{ShpilkaVolkovich09}, and have some similarities with the work of Agrawal, Saha and Saxena~\cite{AgrawalSS12}.  In particular, we show the entire space of derivatives is small, for depth-3 model (but not the depth-4 model).  We then show that this implies such polynomials must contain a monomial of logarithmic support, which can be found via brute-force in quasipolynomial time.  Unlike the work of Agrawal, Saha and Saxena~\cite{AgrawalSS12}, no shifts are required for this small monomial to exist.  Thus, we get the following theorem.

\begin{theorem*}[\autoref{diagonalpit}]
	Let $\F$ be a field with size $\ge d+1$. Then there is a $\poly(n,d,\log(s))$-explicit hitting set of size $\poly(n,d)^{\O(\log s)}$ for the class of $n$-variate, degree $\le d$, depth-3 diagonal circuits of size $\le s$.
\end{theorem*}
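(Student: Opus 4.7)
The plan is to combine three ingredients: a dimension bound on the space of partial derivatives of a depth-3 diagonal circuit, a combinatorial lemma showing that small partial-derivative dimension forces a low-support monomial, and the Shpilka--Volkovich generator to hit polynomials with low-support monomials. Throughout, let $\partial(f) \eqdef \spn_\F\{\partial^\alpha f : \alpha \in \N^n\}$ denote the span of all partial derivatives of $f$.

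For the first step, if $f = \sum_{i=1}^{s} \alpha_i \ell_i^{d_i}$ with each $\ell_i$ an affine linear form and $d_i \le d$, then the chain rule gives $\partial^\beta f = \sum_i c_{i,\beta}\, \ell_i^{d_i - |\beta|}$ for scalars $c_{i,\beta} \in \F$. Hence $\partial(f) \subseteq \spn_\F\{\ell_i^e : i \in [s],\, 0 \le e \le d\}$, a space of dimension at most $s(d+1)$.

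For the second step, I would prove the key lemma: any nonzero $f$ with $\dim \partial(f) \le D$ contains a monomial of support at most $\log_2 D$. Let $k$ be the minimum support of a monomial of $f$, attained (after relabeling) by some $m = x_1^{a_1}\cdots x_k^{a_k}$, and set $\bar f \eqdef f|_{x_{k+1} = \cdots = x_n = 0}$. Then $\bar f$ is nonzero and every monomial of $\bar f$ has support exactly $[k]$. Since restriction commutes with $\partial_{x_i}$ for $i \le k$, the restriction map sends $\partial(f)$ onto $\partial(\bar f)$, so $\dim \partial(\bar f) \le D$. Replacing $m$ if necessary, assume $a \eqdef (a_1,\ldots,a_k)$ is the lex-smallest exponent vector appearing in $\bar f$. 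Consider the $\prod_i(a_i+1) \times \prod_i(a_i+1)$ matrix $M$ indexed by pairs $\alpha,\beta \le a$, where $M_{\beta,\alpha}$ is the coefficient of $x^{a-\alpha}$ in $\partial^\beta \bar f$. A direct expansion gives
\[
M_{\beta,\alpha} \;=\; [x^{a+\beta-\alpha}]\,\bar f \;\cdot\; \prod_i \frac{(a_i - \alpha_i + \beta_i)!}{(a_i - \alpha_i)!},
\]
so $M_{\beta,\alpha}$ can only be nonzero when $x^{a+\beta-\alpha}$ is a monomial of $\bar f$. By the lex-minimality of $a$, this forces $\beta \ge_{\text{lex}} \alpha$, with equality only along the diagonal (where the entry is a nonzero multiple of the coefficient of $m$). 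Thus $M$ is triangular with nonzero diagonal in lex order, hence invertible, proving that the $\prod_i(a_i+1) \ge 2^k$ derivatives $\{\partial^\beta \bar f\}_{\beta \le a}$ are linearly independent. Therefore $2^k \le \dim \partial(\bar f) \le D$.

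For the third step, Steps 1 and 2 together show that every nonzero depth-3 diagonal circuit of size $\le s$ has a monomial of support at most $\log_2(s(d+1)) = \O(\log s)$, using $d \le s$. The Shpilka--Volkovich generator is an explicit map $\F^{2t} \to \F^n$ built from Lagrange interpolation, with the property that substituting it into any nonzero polynomial with a monomial of support $\le t$ yields a nonzero polynomial in $2t$ variables of degree $\O(nd)$. A standard Schwartz--Zippel hitting set on these $2t$ variables then gives a $\poly(n,d,\log s)$-explicit hitting set of size $\poly(n,d)^{\O(t)} = \poly(n,d)^{\O(\log s)}$, as required. The main obstacle is Step 2: the subtle point is to take $m$ to be the lex-minimum exponent of $\bar f$ rather than of $f$, so that off-diagonal monomials $x^{a+\beta-\alpha}$ strictly lex-below $a$ are automatically absent from $\bar f$, yielding the needed triangular structure of $M$. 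The characteristic-zero assumption (or more generally $\chara\F > d$) enters only through the factorial factors in $M_{\beta,\alpha}$, and the remaining steps are routine compositions of standard constructions.
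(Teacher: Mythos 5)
Your overall three-step strategy is the same as the paper's: bound the dimension of the partial-derivative space of a depth-3 diagonal circuit, show that small derivative dimension forces the existence of a small-support monomial, and then hit polynomials with small-support monomials by an explicit low-support evaluation set. However, there are two genuine gaps, both acknowledged only partially or not at all.

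First, the model. The paper's \autoref{def: diagonal} allows each summand to be a \emph{product} of powers of affine forms, $\vec{L}_\ell^{\vec{e}_\ell}=\prod_j L_{\ell,j}^{e_{\ell,j}}$, with size $n\sum_\ell|\vec{e}_\ell|_\times$. Your Step~1 treats only the sub-model $\sum_i\alpha_i\ell_i^{d_i}$ with a single affine form per term, and your derivative bound $\dim\partial(f)\le s(d+1)$ is established only for that sub-model. The fix is exactly what the paper does in \autoref{hassedimmonomial} and \autoref{hassedimlinearsub}: the derivative space of the monomial map $\vec{y}\mapsto\vec{y}^{\vec{e}}$ has dimension at most $|\vec{e}|_\times$, and substituting affine forms cannot increase dimension; this yields \autoref{hassedimdiagonal} for the full model.

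Second, the characteristic. You work with formal partial derivatives and observe yourself that the diagonal entries of your matrix $M$ carry falling-factorial factors that may vanish when $\chara\F\le d$. The theorem requires only $|\F|\ge d+1$, with no characteristic restriction; the paper avoids the issue by using Hasse derivatives throughout (\autoref{sec:hasse}), so the analogous diagonal coefficients are $\pm 1$ times the (nonzero) leading coefficient. Porting your triangular-matrix argument to Hasse derivatives closes this.

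The remaining differences are routes rather than gaps. Your Step~2 restricts to a minimum-support monomial, passes to the lex-smallest exponent $a$ of the restriction $\bar{f}$, and exhibits $\prod_i(a_i+1)\ge 2^k$ independent derivatives via a triangular system; the paper's \autoref{smallmonomial} instead takes the \emph{largest} monomial under an arbitrary monomial order and differentiates by the $2^{|\supp(\vec{i})|}$ exponent vectors supported on $\supp(\vec{i})$ with entries in $\{0,i_\ell\}$, using monotonicity of the order under Hasse differentiation. Both are correct. Your Step~3 invokes the Shpilka--Volkovich generator; the paper's \autoref{hitmonomial} and \autoref{hitmonomialworks} use the even simpler set of all points with at most $m$ nonzero coordinates drawn from a $(d+1)$-element subset, of size at most $(nd)^m$. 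Either yields the claimed $\poly(n,d)^{\O(\log s)}$ bound.
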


\paragraph{Deciding (non-closed) orbit membership via PIT:} The results mentioned thus far have taken an algebro-geometric approach to studying the orbits of tuples of matrices under simultaneous conjugation, as they take this geometric action and study the algebraic structure of its invariants.  This perspective, by the very continuous nature of polynomials, can only access the orbit closures under this group action.  For example, working over $\C$, define
\begin{align*}
	A_\epsilon
	&\eqdef\begin{bmatrix}
		1&\epsilon\\
		0&1
	\end{bmatrix}
	&P_\delta
	&\eqdef\begin{bmatrix}
		\delta&0\\
		0&1
	\end{bmatrix}\;,
\end{align*}
and note that for any $\epsilon$ and for any $\delta\ne 0$, $P_\delta A_\epsilon P_\delta^{-1}=A_{\epsilon\delta}$.  It follows that for any polynomial $f$ invariant under simultaneous conjugation of $2\times 2$ matrices, that $f(A_\epsilon)$ is independent of $\epsilon$, as $f$ is continuous and we can take $\epsilon\to 0$.  However, for any $\epsilon\ne 0$, $A_\epsilon$ is not conjugate to $A_0=\Id_2$, the identity matrix, or equivalently, $A_\epsilon$ and $A_0$ are not in the same orbit.  Thus, from the perspective of invariants $A_\epsilon$ and $A_0$ are the same, despite being in different orbits.


One can ask the analogue of \autoref{q:orbit intersection}, but for orbits as opposed to orbit closures.  Note that by the invertibility of the group action, two orbits must either be disjoint, or equal.  Thus, we equivalently ask for an algorithm to the orbit membership problem for simultaneous conjugation.  That is, given $\vec{A},\vec{B}\in(\F^{\zr{n}\times \zr{n}})^{\zr{r}}$ is there an invertible $P\in\F^{\zr{n}\times\zr{n}}$ such that $\vec{B}=P\vec{A}P^{-1}$.  

Several interesting cases of this problem were solved: Chistov, Ivanyos and Karpinski~\cite{ChistovIK97}  gave a deterministic polynomial time algorithm over finite fields  and over algebraic number fields; Sergeichuk~\cite{Sergeichuk2000} gave\footnote{In his paper Sergeichuk gives credit for the algorithm to Belitski{\u \i}~\cite{belitskii1983}.} a deterministic algorithm over any field, that runs in polynomial time when supplied with an oracle for finding roots of polynomials.\footnote{This is not how the result is stated in \cite{Sergeichuk2000}, but this is what one needs to make the algorithm efficient. In fact, to make the algorithm of Sergeichuk run in polynomial space one needs to make another assumption that would allow writing down the eigenvalues of all matrices involved in polynomial space. For example, one such assumption would be that they all belong to the some polynomial degree extension field (Grochow~\cite{Grochow13}).} Chistov-Ivanyos-Karpinski also mentioned that a randomized polynomial-time algorithm for the problem follows from the work of Schwartz and Zippel \cite{Schwartz80,Zippel79}.
In conversations with Yekhanin~\cite{Yekhanin2013}, we also discovered this randomized algorithm, showing that this problem is reducible to PIT for ABPs. Because of its close relation to the rest of this work, we include for completeness the proof of the following theorem.

\begin{theorem*}[\autoref{orbitmembership}]
	Let $\F$ be a field of size $\ge\poly(n)$.  Then the orbit membership problem, for simultaneous conjugation, is reducible to polynomial identity testing for ABPs.  In particular, this problem has a randomized, parallel, polynomial-time algorithm.
\end{theorem*}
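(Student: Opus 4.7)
The plan is to reduce orbit membership for $(\vec{A}, \vec{B})$ to a single determinantal polynomial identity test. First observe that $\vec{B} = P \vec{A} P^{-1}$ for some invertible $P \in \F^{\zr{n}\times\zr{n}}$ if and only if there exists some $P$ satisfying both the linear equations $P A_i = B_i P$ for $i \in \zr{r}$ and the open condition $\det(P) \ne 0$. Writing $P$ as a matrix of $n^2$ formal variables, the system $\{PA_i - B_i P = 0 : i \in \zr{r}\}$ is a homogeneous linear system of $rn^2$ equations in $n^2$ unknowns, whose solution set is a linear subspace $V \subseteq \F^{\zr{n}\times\zr{n}}$. A basis $M_1, \ldots, M_k$ for $V$, with $k \le n^2$, can be computed by Gaussian elimination, which is in $\NC$.

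Having linearly parameterized $V$, define
\[
	f(\lambda_1, \ldots, \lambda_k) \eqdef \det\!\Bigl( \sum_{j=1}^k \lambda_j M_j \Bigr) \in \F[\lambda_1, \ldots, \lambda_k].
\]
The matrix $\sum_j \lambda_j M_j$ has entries that are linear forms in $\vec{\lambda}$ with coefficients in $\F$, so by Berkowitz~\cite{Berkowitz84} $f$ is computable by an ABP of size $\poly(n)$, and this ABP can itself be written down in $\NC$ once the $M_j$ are in hand. Since $|\F| \ge \poly(n) > \deg(f) = n$, $f \equiv 0$ iff $f$ vanishes on every point of $\F^k$ iff every matrix in $V$ is singular iff $V$ contains no invertible element. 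Consequently, $\vec{A}$ and $\vec{B}$ lie in the same $\GL_n(\F)$-orbit under simultaneous conjugation if and only if the ABP-computable polynomial $f$ is not identically zero, yielding the desired reduction to PIT for ABPs.

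For the randomized parallel algorithm, sample $\vec{\lambda} \in S^k$ uniformly at random for some $S \subseteq \F$ of size $\poly(n)$ and evaluate $\det(\sum_j \lambda_j M_j)$ directly; the determinant is in $\NC$ by Csanky's algorithm, and Schwartz-Zippel~\cite{Schwartz80,Zippel79} guarantees correctness with high probability, with $P = \sum_j \lambda_j M_j$ itself serving as a conjugating witness in the yes-case. No step in the plan presents a real obstacle; the only subtle observation is that, unlike the orbit \emph{closure} problem of \autoref{q:orbit intersection} which required the invariant-theoretic machinery of the preceding sections, orbit membership entirely bypasses invariants and reduces directly to the existence of a nonsingular point in an explicitly computed linear subspace, so that the issue illustrated by $A_\epsilon$ versus $A_0$ in the excerpt simply does not arise.
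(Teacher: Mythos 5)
Your proposal matches the paper's proof essentially step for step: linearize the conjugation condition $\vec{B}P=P\vec{A}$ into a homogeneous system, compute a basis of its solution space via Gaussian elimination in $\NC$, form the determinant of a generic linear combination of the basis matrices, invoke Berkowitz for the ABP and Schwartz--Zippel with the degree bound $n$ for the randomized parallel test. The only cosmetic differences are that you cite Csanky explicitly for the parallel determinant evaluation and note that the sampled $P$ serves as a witness; both are fine and do not change the argument.
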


\subsection{Notation}\label{notation}

Given a vector of polynomials $\vec{f}\in \F[\vec{x}]^n$ and an exponent vector $\vec{i}\in\N^n$, we write $\vec{f}^{\vec{i}}$  for $f_1^{i_1}\cdots f_n^{i_n}$.

Given a polynomial $f\in\F[\vec{x}]$, we write $\coeff{\vec{x}^{\vec{i}}}(f)$ to denote the coefficient of $\vec{x}^{\vec{i}}$ in $f$.  For a matrix $M\in\F[\vec{x}]^{\zr{r}\times \zr{r}}$, we write $\coeff{\vec{x}^{\vec{i}}}(M)$ to denote the $r\times r$ $\F$-matrix, with the $\coeff{\vec{x}^{\vec{i}}}$ operator applied to each entry. When we write ``$f\in\F[\vec{x}][\vec{y}]$'', we will treat $f$ as a polynomial in the variables $\vec{y}$, whose coefficients are polynomials in the variables $\vec{x}$, and correspondingly will write $\coeff{\vec{y}^{\vec{j}}}(f)$ to extract the polynomial in $\vec{x}$ that is the coefficient of the monomial $\vec{y}^{\vec{j}}$ in $f$.

\subsection{Organization}

In \autoref{sec:ROABP} we give the necessary background on ROABPs. We prove our main results about explicit Noether Normalization in \autoref{sec:main}. 

The rest of our results appear in the following order.  We give the equivalence between the trace of matrix power and ABP models of computation in \autoref{sec:equivalent}.  We give the hitting set for depth-3 diagonal circuits in \autoref{sec:diagonal}, using Hasse derivatives as defined in \autoref{sec:hasse}.  In \autoref{sec:orbitmem} we give the reduction from the orbit membership problem to PIT.


\section{Properties of Algebraic Branching Programs}\label{sec:ROABP}

We first derive some simple properties of ABPs, as well as ROABPs, that show their tight connection with matrix products, and traces of matrix products.  We begin with the following connection between an ABP with unrestricted weights, and the product of its adjacency matrices.  As the lemma is proved in generality, it will apply to ABPs and ROABPs, and we will use it for both.

\begin{lemma}
	\label{roabpadj}
	Let $f\in\F[x_1,\ldots,x_n]$ be computed by a depth $d$, width $\le w$ ABP with unrestricted weights, such that the variable layers are $V_0,\ldots,V_d$.  For $0<i\le d$, define $M_i\in\F[\vec{x}]^{V_{i-1}\times V_i}$ such that the $(u,v)$-th entry in $M_i$ is the label on the edge from $u\in V_{i-1}$ to $v\in V_i$, or 0 if no such edge exists. Then, when treating $\F[\vec{x}]^{\zr{1}\times\zr{1}}=\F[\vec{x}]$, \[f(\vec{x})=\prod_{i\in[d]} M_i(\vec{x})\eqdef M_1(\vec{x}) M_2(\vec{x})\cdots M_d(\vec{x})\;.\]
	Further, for an ABP, the matrix $M_i$ has entries which are affine forms, and for an ROABP, the matrix $M_i$ has entries which are univariate polynomials in $x_i$ of degree $<r$.
\end{lemma}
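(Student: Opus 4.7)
The plan is essentially a definition-chase: unfold both the definition of the polynomial computed by the ABP and the definition of matrix multiplication, and observe that they yield the same expression.

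First I would recall that an ABP with unrestricted weights computes the sum, over all source-to-sink paths, of the product of edge weights along the path. Because edges only go from $V_{i-1}$ to $V_i$, any $s$-$t$ path passes through exactly one vertex per layer, and is therefore uniquely described by a sequence $s = v_0, v_1, \ldots, v_{d-1}, v_d = t$ with $v_i \in V_i$. The product of the edge weights along such a path is $\prod_{i=1}^d (M_i)_{v_{i-1}, v_i}$, using the convention that $(M_i)_{u,v} = 0$ when no edge from $u$ to $v$ exists (which matches the definition of $M_i$ and kills precisely those ``paths'' that do not exist in the graph). Summing over all choices of intermediate vertices gives
\[
f(\vec{x}) \;=\; \sum_{v_1 \in V_1} \cdots \sum_{v_{d-1} \in V_{d-1}} \prod_{i=1}^d (M_i)_{v_{i-1}, v_i}\;,
\]
with $v_0 = s$ and $v_d = t$ fixed.

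Next I would unfold the definition of iterated matrix multiplication: the $(s,t)$-entry of $M_1 M_2 \cdots M_d$ is, by induction on $d$, exactly the same nested sum as above. Since $V_0 = \{s\}$ and $V_d = \{t\}$ each have one element, the product $M_1 \cdots M_d$ lives in $\F[\vec{x}]^{\zr{1} \times \zr{1}}$, which we identify with $\F[\vec{x}]$, and its unique entry equals $f(\vec{x})$ by the computation above.

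The ``further'' clauses are immediate from the definitions in \autoref{def: ABP}: in an ABP every edge weight $f_e$ is an affine form, so each entry of each $M_i$ is an affine form; in an ROABP of degree $<r$, every edge from $V_{i-1}$ to $V_i$ is labeled by a univariate polynomial in $x_i$ of degree $<r$, so each entry of $M_i$ is such a polynomial. There is no real obstacle here --- the lemma is purely a repackaging of the path-counting interpretation of matrix multiplication --- so the only care needed is to make the identifications between the layered graph, the matrices $M_i$, and the $1 \times 1$ matrix product explicit enough that the formula $f = M_1 \cdots M_d$ is unambiguous for later use.
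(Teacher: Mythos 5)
Your argument is correct and is essentially the same as the paper's: both expand the iterated matrix product into a sum over sequences of layer vertices and identify this with the sum over $s$-$t$ paths, using the layered structure of the ABP. Your version is simply more explicit, spelling out the nested sum and the ``further'' clauses; the paper compresses this into two sentences.
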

\begin{proof}
	Expanding the matrix multiplication completely, one sees that it is a sum of the product of the labels of the $s$-$t$ paths such that the $i$-th edge goes from $V_{i-1}$ to $V_i$.  By the layered structure of the ABP, this is all such paths, so this sum computes $f(\vec{x})$.
\end{proof}

The above lemma shows that one can easily convert an ABP or ROABP into a matrix product, where the entries of matrices obey the same restrictions as the weights in the ABP or ROABP.  The above lemma gives matrices with varying sizes, and it will be more convenient to have square matrices, which can be done by padding, as shown in the next lemma.

\begin{lemma}
	\label{roabpmatrix}
	Let $f\in\F[x_1,\ldots,x_n]$ be computed by a depth $d$, width $\le w$ ABP with unrestricted weights.  Then for $i\in[d]$, there are matrices $M_i\in\F[\vec{x}]^{\zr{w}\times\zr{w}}$ such that in block notation,
	\[	
		\begin{bmatrix}
			f(\vec{x})&0\\
			0&0
		\end{bmatrix}
		=\prod_{i\in[d]}M_i(\vec{x})\;,
	\]
	that is, $\prod_{i\in[d]}M_i(\vec{x})$ contains a single non-zero entry located at $(0,0)$, which contains the polynomial $f(\vec{x})$.  Conversely, any such polynomial $f$ such that
	\[
		f(\vec{x})=\left(\prod_{i\in[d]}M_i(\vec{x})\right)_{(0,0)}
	\]
	can be computed by a depth $d$, width $w$, ABP with unrestricted weights.

	Further, for the specific cases of ABPs and ROABPs, the entries in the $M_i$ are restricted: for ABPs the matrix $M_i$ has entries which are affine forms, and for an ROABP the matrix $M_i$ has entries which are univariate polynomials in $x_i$ of degree $<r$.
\end{lemma}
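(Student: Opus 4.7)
The plan is to derive both directions directly from \autoref{roabpadj} by standard padding/unpadding constructions, being careful with the indexing to place the distinguished entry at position $(0,0)$.

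For the forward direction, I would start by invoking \autoref{roabpadj} on the given ABP to obtain matrices $M_i' \in \F[\vec{x}]^{V_{i-1}\times V_i}$ with $\prod_{i\in[d]} M_i'(\vec{x}) = f(\vec{x})$, where $|V_0| = |V_d| = 1$ and $|V_j| \le w$ otherwise. I would then inject each $V_j$ into $\zr{w}$, insisting that $s \in V_0$ and $t \in V_d$ both map to index $0$, and define $M_i \in \F[\vec{x}]^{\zr{w}\times \zr{w}}$ to agree with $M_i'$ on the block indexed by the images of $V_{i-1} \times V_i$, and to be zero elsewhere. Because $M_1$ is zero outside row $0$, an easy induction shows $M_1 M_2 \cdots M_k$ is supported only in row $0$ for each $k$, and because $M_d$ is zero outside column $0$, the full product $\prod M_i$ is supported only at entry $(0,0)$. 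Unwinding matrix multiplication shows this entry equals the (single-entry) product $\prod M_i'$, which equals $f(\vec{x})$. The claim about the form of the entries (affine for ABPs; univariate of degree $<r$ for ROABPs) is inherited directly from \autoref{roabpadj} since we only pad with zeros.

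For the converse, given $M_i \in \F[\vec{x}]^{\zr{w}\times\zr{w}}$, I would construct a depth-$d$ ABP with layers $V_0 = \{s\}$, $V_d = \{t\}$, and $V_k = \zr{w}$ for $0 < k < d$. I would weight the edge from $s$ to $j \in V_1$ by $(M_1)_{0,j}$, the edge from $i \in V_{k-1}$ to $j \in V_k$ (for $1 < k < d$) by $(M_k)_{i,j}$, and the edge from $i \in V_{d-1}$ to $t$ by $(M_d)_{i,0}$. Each $|V_k| \le w$ so the width is at most $w$. Applying \autoref{roabpadj} to this ABP expresses the computed polynomial as a product of the edge-weight matrices, which telescopes to $(M_1 M_2 \cdots M_d)_{0,0} = f(\vec{x})$. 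As before, the edge weights are exact copies of entries of the $M_i$, so the affine/univariate-degree constraints pass through unchanged.

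The only real bookkeeping obstacle is index consistency: aligning the source and sink of the ABP with coordinate $0$ of the padded matrices, and verifying that the dummy (padded) vertices truly contribute nothing to the product. Once these identifications are fixed, both directions are essentially transcriptions of \autoref{roabpadj}, so I would not expect the proof to present any substantive difficulty beyond this careful accounting.
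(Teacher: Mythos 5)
Your proof is correct and follows essentially the same strategy as the paper: both directions invoke \autoref{roabpadj} and then pad/unpad by zeros, placing the source and sink at index $0$; the paper phrases the forward direction via block-matrix multiplication where you instead give the equivalent support-tracking induction, but this is a cosmetic difference.
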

\begin{proof}
	\uline{ABP$\implies$ matrices:} \autoref{roabpadj} furnishes $M_i$ such that $f(\vec{x})=\prod_i M_i(\vec{x})$.  Let $M'_i\in\F[\vec{x}]^{\zr{w}\times\zr{w}}$ be $M_i$ padded with zeroes to become $\zr{w}\times\zr{w}$ sized, that is,
	\[M'_i(\vec{x})\eqdef
		\begin{bmatrix}
			M_i&0\\
			0&0
		\end{bmatrix}
	\]
	where we use block-matrix notation.  By the properties of block-matrix multiplication, it follows that
	\[
		\prod_i M'_i=
		\prod_i
		\begin{bmatrix}
			M_i&0\\
			0&0
		\end{bmatrix}
		=
		\begin{bmatrix}
			\prod_iM_i&0\\
			0&0
		\end{bmatrix}
		=
		\begin{bmatrix}
			f(\vec{x})&0\\
			0&0
		\end{bmatrix}
	\]
	as desired. One can observe that the use of \autoref{roabpadj} implies that the $M_i$ have the desired restrictions on their entries.

	\uline{matrices$\implies$ ABP:} For $1<i<d$ define $M'_i\eqdef M_i$.  Define $M'_1$ to be the $0$-th row of $M_1$, and define $M'_d$ to be the $0$-th column of $M_d$.  Then it follows that \[f(\vec{x})=\left(\prod_i M_i(\vec{x})\right)_{(0,0)}=\prod_i M'_i(\vec{x})\]
	and that the $M'_i$ have at most $w$ rows and columns. Using the $M'_i$ as the adjacency matrices in an ABP with unrestricted weights, it follows by \autoref{roabpadj} that $f$ is computed by a depth $d$, width $w$ ABP with unrestricted weights.  Further, the use of this lemma shows the entry restrictions for ABPs and ROABPs are respected, to the result holds for these models as well.
\end{proof}

We next observe that small-size ABPs and ROABPs are respectively closed under addition and multiplication.

\begin{lemma}
	\label{addROABP}
	Let $f,f'\in\F[x_1,\ldots,x_n]$ be computed by depth $d$ ABPs with unrestricted weights, where $f$ is computed in width $\le w$ and $f'$ is computed in width $\le w'$.  Then $f+f'$ and $f-f'$ are computed by depth $d$, width $\le (w+w')$ ABPs with unrestricted weights. Further, this also holds for the ABP model, and the ROABP model with degree $<r$.
\end{lemma}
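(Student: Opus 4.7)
The plan is to invoke \autoref{roabpadj} on each of the two input programs to convert them into matrix products, and then combine the two products ``in parallel'' by a block construction. Concretely, let $M_1,\ldots,M_d$ and $M'_1,\ldots,M'_d$ be the adjacency matrices supplied by \autoref{roabpadj} for $f$ and $f'$ respectively, with $M_i \in \F[\vec{x}]^{V_{i-1}\times V_i}$, $M'_i \in \F[\vec{x}]^{V'_{i-1}\times V'_i}$, where $|V_0|=|V_d|=|V'_0|=|V'_d|=1$ and the interior layers have size at most $w$ and $w'$.

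I would then define new matrices by merging the source and sink layers (which each have size $1$) while block-diagonalizing the interior layers:
\[
	N_1 \eqdef \begin{bmatrix} M_1 & M'_1 \end{bmatrix}, \qquad
	N_i \eqdef \begin{bmatrix} M_i & 0 \\ 0 & M'_i \end{bmatrix} \text{ for } 1 < i < d, \qquad
	N_d \eqdef \begin{bmatrix} M_d \\ M'_d \end{bmatrix}.
\]
A direct expansion via block matrix multiplication yields
\[
	N_1 N_2 \cdots N_d \;=\; M_1 M_2 \cdots M_d \;+\; M'_1 M'_2 \cdots M'_d \;=\; f + f',
\]
since the interior block-diagonals keep the ``$f$-side'' and ``$f'$-side'' computations independent, and the row/column concatenations at the endpoints sum the two contributions. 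Interpreting $N_1,\ldots,N_d$ as the adjacency matrices of a single ABP with unrestricted weights (by the converse direction of \autoref{roabpmatrix}) gives a depth-$d$ program whose layer sizes are $1, |V_1|+|V'_1|, \ldots, |V_{d-1}|+|V'_{d-1}|, 1$, and hence whose width is at most $w+w'$. For subtraction, it suffices to negate any single layer, e.g.\ replace $N_1$ by $\begin{bmatrix} M_1 & -M'_1 \end{bmatrix}$; the multiplicativity of the matrix product then delivers $f - f'$.

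Finally, I would verify that the edge-label restrictions of the specific models are preserved by this construction: every entry of each $N_i$ is either an entry of $M_i$, an entry of (possibly negated) $M'_i$, or zero. So in the ABP model the entries of $N_i$ remain affine, and in the ROABP model the entries of $N_i$ remain univariate polynomials in $x_i$ of degree $<r$, since the block-diagonal structure never mixes variables across the two programs.

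There is no substantive obstacle here; the lemma is essentially a ``parallel composition'' argument for matrix products. The only bookkeeping subtlety is to merge (rather than block-diagonalize) the source and sink layers, so that the resulting graph has $|V_0|=|V_d|=1$ as required by \autoref{def: ABP} and so that the width bound is the stated $w+w'$ rather than $w+w'+2$.
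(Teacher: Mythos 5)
Your proof is correct and is really the same parallel-composition argument as the paper's, just phrased in the matrix-product language of \autoref{roabpadj} instead of directly counting $s$--$t$ paths: the block-diagonal interior matrices are exactly the algebraic form of the paper's observation that every source--sink path stays entirely within one of the two merged programs, and negating $M'_1$ is exactly the paper's sign flip on the $V_0\to V_1$ edges of the $f'$-program. (One tiny nit: the final passage back to an ABP is really the converse of \autoref{roabpadj} — reading $N_1,\ldots,N_d$ as adjacency matrices of a layered graph — rather than the converse of \autoref{roabpmatrix}, which assumes square matrices.)
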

\begin{proof}
	\uline{$f+f'$:} Consider the ABPs with unrestricted weights for $f$ and $f'$.  As they have the same depth, we can align their $d+1$ layers.  Consider them together as a new ABP, by merging the two source nodes, and merging the two sink nodes.  This is an ABP and has the desired depth, width and degree.  Note that it computes $f+f'$, as any source-sink path must either go entirely through the ABP for $f$, or entirely though the ABP for $f'$, i.e. there are no cross-terms.  Thus, the sum over all paths can be decomposed into the paths for $f$ and the paths for $f'$, showing that the computation is $f+f'$ as desired.

	Note that in the ABP case, all edge weights are still affine functions, and in the ROABP case, the edge weights are still univariate polynomials of degree $<r$ for the relevant variables, as we aligned the layers between $f$ and $f'$.

	\uline{$f-f'$:} This follows by showing that if $f'$ is computable by an ABP with unrestricted weights then so is $-f'$, all within the same bounds.  To see this, observe that by flipping the sign of all edges from $V_0$ to $V_1$ in the computation for $f'$, each source-sink path in the computation for $f'$ will have its sign flipped, and thus the entire sum will be negated, computing $-f'$ as desired.

	Note that the allowed edge weights for ABPs and ROABPs are closed under linear combinations, so the above computation of $-f$ is also valid in these models.
\end{proof}


\section{Reducing Noether Normalization to Read-once Oblivious Algebraic Branching Programs}\label{sec:main}

In this section we construct a small set of explicit separating invariants for simultaneous conjugation.  We do so by constructing a single ROABP that encodes the entire generating set $\cT$ for $\F[\vec{M}]^{\GL_n(\F)}$, as given by \autoref{generatinginvariants}.  We then use hitting sets for ROABPs to efficiently extract the separating invariants from this ROABP.  We begin with the construction\footnote{There are some slightly better versions of this construction, as well as a way to more efficiently use the hitting sets of \autoref{thm:FS}.  However, these modifications make the presentation slightly less modular, and do not improve the results by more than a polynomial factor, so we do not pursue these details.} of the ROABP.

\begin{construction}
	\label{roabpinv}
	Let $n,r,\ell\ge 1$. Let $\vec{M}$ denote a vector of $r$ matrices, each $\zr{n}\times\zr{n}$, whose entries are distinct variables.  Define
	\[M(x)\eqdef \sum_{i\in\zr{r}} M_ix^i\]
	and, for the $\ell$ variables $\vec{x}$, define
	\[f_\ell(\vec{M},\vec{x})\eqdef \tr(M(x_1)\cdots M(x_\ell)).\]
\end{construction}

The following lemma shows that these polynomials $f_\ell(\vec{M},\vec{x})$ can be computed by small ROABPs, when $\vec{x}$ is variable and $\vec{M}$ is constant.

\begin{lemma}
	\label{diffofinv}
	Assume the setup of \autoref{roabpinv}. Let $\vec{A},\vec{B}\in(\F^{\zr{n}\times \zr{n}})^{\zr{r}}$. Then $f_\ell(\vec{A},\vec{x})-f_\ell(\vec{B},\vec{x})$ can be computed by a width $2n^2$, depth $\ell$, degree $<r$ ROABP.
\end{lemma}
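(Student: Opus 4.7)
The plan is to build the ROABP by reducing the trace computation to a sum of $n$ matrix-entry computations, using the lemmas of \autoref{sec:ROABP}. Writing $A(x) \eqdef \sum_{i \in \zr{r}} A_i x^i$ and $B(x)$ analogously, we have $f_\ell(\vec{A},\vec{x}) = \tr(A(x_1)\cdots A(x_\ell))$, where the entries of each $A(x_k)$ are univariate polynomials in $x_k$ of degree $<r$. The obstacle is that the trace imposes a cyclic constraint (the first and last indices agree), whereas an ROABP has a single source and single sink. I would break this cycle by expanding
\[
 \tr(A(x_1)\cdots A(x_\ell)) \;=\; \sum_{j_0 \in \zr{n}} \bigl(A(x_1)\cdots A(x_\ell)\bigr)_{j_0,j_0},
\]
turning the problem into building ROABPs for each of the $n$ scalar entries.

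For a fixed $j_0$, I would apply \autoref{roabpadj} in the ``matrices $\Rightarrow$ ABP'' direction (or its square version \autoref{roabpmatrix}): take the matrices $M_1,\ldots,M_\ell$ where $M_1$ is the $j_0$-th row of $A(x_1)$, $M_\ell$ is the $j_0$-th column of $A(x_\ell)$, and $M_k = A(x_k)$ otherwise. The resulting product is exactly $(A(x_1)\cdots A(x_\ell))_{j_0,j_0}$, and since each $M_k$ has entries that are univariate in $x_k$ of degree $<r$, the induced ROABP has depth $\ell$, width $\le n$, and degree $<r$.

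Next, I would invoke \autoref{addROABP} iteratively to sum these $n$ ROABPs (one per $j_0$) into a single ROABP for $f_\ell(\vec{A},\vec{x})$ of width $\le n\cdot n = n^2$, depth $\ell$, and degree $<r$. Running the identical construction for $\vec{B}$ gives an ROABP of width $\le n^2$ for $f_\ell(\vec{B},\vec{x})$, and one last application of \autoref{addROABP} (in its subtraction form) produces the desired ROABP of width $\le 2n^2$, depth $\ell$, and degree $<r$ for $f_\ell(\vec{A},\vec{x}) - f_\ell(\vec{B},\vec{x})$. The only nontrivial step is the cycle-breaking via summation over $j_0$; everything else follows mechanically from the matrix-product formulation of ROABPs and their closure under addition.
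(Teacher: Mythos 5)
Your proposal matches the paper's own proof: both expand the trace as a sum of diagonal entries, realize each entry as a width-$n$, depth-$\ell$, degree-$<r$ ROABP via \autoref{roabpmatrix} (after permuting indices), and then combine the $n$ terms for $\vec{A}$, the $n$ terms for $\vec{B}$, and the subtraction using \autoref{addROABP} to reach width $2n^2$. The reasoning and width accounting are the same; you have just spelled out the final addition step slightly more explicitly than the paper does.
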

\begin{proof}
	Observe that $f_\ell(\vec{A},\vec{x})=\tr(A(x_1)\cdots A(x_\ell))=\sum_{i\in\zr{n}} (\prod_{j=1}^\ell A(x_j))_{(i,i)}$.  By applying a permutation of indices, and appealing to \autoref{roabpmatrix}, we see that $(\prod_{j=1}^\ell A(x_j))_{(i,i)}$ is computable by a depth $\ell$, width $n$, degree $<r$ ROABP, for each $i$.  Thus, appealing to \autoref{addROABP} completes the claim.
\end{proof}

Alternatively, when $\vec{x}$ is constant, and the matrices $\vec{M}$ are variable, then $f_\ell(\vec{M},\vec{x})$ can be computed by a small ABP.

\begin{lemma}
	\label{explicitinv}
	Assume the setup of \autoref{roabpinv}. Let $\vec{\alpha}\in\F^\ell$. Then $f_\ell(\vec{M},\vec{\alpha})$ can be computed by a width $n^2$, depth $\ell$ ABP, and this ABP is constructable in $\poly(n,r,\ell)$ steps.
\end{lemma}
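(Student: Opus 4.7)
The plan is to leverage the ABP-to-matrix-product correspondence of \autoref{roabpmatrix} together with the additive closure of ABPs (\autoref{addROABP}). The key observation is that $M(\alpha_j) = \sum_{i \in \zr{r}} M_i \alpha_j^i$ is, after plugging in the scalars $\alpha_j$, an $n \times n$ matrix whose entries are linear (in particular, affine) forms in the entries of $\vec{M}$. Since $f_\ell(\vec{M}, \vec{\alpha}) = \tr(M(\alpha_1) \cdots M(\alpha_\ell))$, we have expressed our polynomial as the trace of a product of $\ell$ matrices of affine forms, which is exactly the shape to which \autoref{roabpmatrix} applies.

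First, for each $i \in \zr{n}$, I would apply a row/column permutation to bring the $(i,i)$ diagonal entry into the $(0,0)$ position, and then use the ``matrices $\Rightarrow$ ABP'' direction of \autoref{roabpmatrix} to conclude that the single entry
\[
\bigl(M(\alpha_1) \cdots M(\alpha_\ell)\bigr)_{(i,i)}
\]
is computed by a depth $\ell$, width $n$ ABP (with affine edge weights, as required, since each $M(\alpha_j)$ has affine entries). Next, since $\tr(M(\alpha_1) \cdots M(\alpha_\ell))$ is the sum of these $n$ diagonal entries, I would apply \autoref{addROABP} iteratively: summing $n$ depth $\ell$, width $n$ ABPs yields a depth $\ell$ ABP of width at most $n \cdot n = n^2$, as desired.

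For the explicitness claim, I would observe that each matrix $M(\alpha_j)$ can be written down in $\poly(n, r)$ time (each of its $n^2$ entries is an explicit linear combination of $r$ of the input variables with scalar coefficients $\alpha_j^0, \ldots, \alpha_j^{r-1}$), giving $\poly(n, r, \ell)$ time for all $\ell$ matrices together. Then assembling the $n$ diagonal-entry ABPs via the (constructive) proof of \autoref{roabpmatrix} and summing them via the (constructive) proof of \autoref{addROABP} is also a $\poly(n, \ell)$ procedure, yielding the claimed $\poly(n, r, \ell)$ overall construction time.

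There is no real obstacle here — the whole statement is essentially bookkeeping built on \autoref{roabpadj}, \autoref{roabpmatrix}, and \autoref{addROABP}. The one place to be mildly careful is to confirm that once the $\alpha_j$'s are substituted, the matrix entries remain affine in $\vec{M}$ (they do, since substitution occurs only in the scalar coefficients), so the resulting object is a genuine ABP rather than an ABP with unrestricted weights.
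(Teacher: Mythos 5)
Your proposal is correct and follows essentially the same route as the paper: write $f_\ell(\vec{M},\vec{\alpha})=\tr(M(\alpha_1)\cdots M(\alpha_\ell))$ with $M(\alpha_j)$ having affine entries in $\vec{M}$, extract each diagonal entry of the product as a width-$n$, depth-$\ell$ ABP via a permutation and \autoref{roabpmatrix}, and sum the $n$ of them via \autoref{addROABP} to get width $n^2$. The paper's proof is just a terser version of this, deferring to the argument in \autoref{diffofinv}.
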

\begin{proof}
	Observe that $f_\ell(\vec{M},\vec{\alpha})=\tr(M(\alpha_1)\cdots M(\alpha_\ell))$, and that each $M(\alpha_i)$ is an $\zr{n}\times\zr{n}$ matrix with entries that are affine forms in the $\vec{M}$. Thus, just as in \autoref{diffofinv}, we can compute this trace in width $n^2$, depth $\ell$ ABP by appealing to \autoref{roabpmatrix} and \autoref{addROABP}.  It is straightforward to observe that the construction of this ABP runs in the desired time bounds, as the above lemmas are constructive.
\end{proof}

Our next two lemmas highlight the connection between the polynomials in \autoref{roabpinv} and the generators of the ring of invariants provided by \autoref{generatinginvariants}. Namely, they show that the generators in the set $\cT$ of \autoref{generatinginvariants} are faithfully encoded as coefficients of the polynomial $f_\ell(\vec{M},\vec{x})$, when viewing this polynomial as lying in the ring $\F[\vec{M}][\vec{x}]$. Note here that we use the $\coeff{\vec{x}^{\vec{i}}}$ notation as defined in \autoref{notation}.

\begin{lemma}
	\label{roabpcoeff}
	Assume the setup of \autoref{roabpinv}. Then for $\vec{i}\in\N^\ell$, taking coefficients in $\F[\vec{M}][\vec{x}]$,
	\begin{equation*}
		\coeff{\vec{x}^{\vec{i}}}(f_\ell(\vec{M},\vec{x}))=
			\begin{cases}
				\tr(M_{i_1}\cdots M_{i_\ell})	& \text{if } \vec{i}\in\zr{r}^{\vec{\ell}}\\
				0				& \text{else}
			\end{cases}.
	\end{equation*}
\end{lemma}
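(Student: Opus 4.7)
The plan is to prove the identity by directly expanding the product defining $f_\ell(\vec{M},\vec{x})$ and reading off coefficients. Since each variable $x_j$ appears only in the $j$-th factor $M(x_j)$, the expansion separates cleanly and no monomial is hit twice.

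First I would substitute the definition $M(x_j)=\sum_{i_j\in\zr{r}} M_{i_j}x_j^{i_j}$ into the product $M(x_1)\cdots M(x_\ell)$ and distribute:
\[
M(x_1)\cdots M(x_\ell)
=\sum_{\vec{i}\in\zr{r}^\ell} M_{i_1}\cdots M_{i_\ell}\cdot x_1^{i_1}\cdots x_\ell^{i_\ell}.
\]
The scalars $x_j^{i_j}$ commute out of the matrix product, and each index $i_j$ ranges independently over $\zr{r}$. Then I would apply the trace, which is $\F[\vec{M}]$-linear (in particular $\F[\vec{x}]$-linear when we view the matrices over $\F[\vec{M}][\vec{x}]$), to obtain
\[
f_\ell(\vec{M},\vec{x})
=\tr(M(x_1)\cdots M(x_\ell))
=\sum_{\vec{i}\in\zr{r}^\ell} \tr(M_{i_1}\cdots M_{i_\ell})\cdot \vec{x}^{\vec{i}}.
\]

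Now I would read off coefficients. Since the monomials $\vec{x}^{\vec{i}}$ for $\vec{i}\in\zr{r}^\ell$ are $\F[\vec{M}]$-linearly independent in $\F[\vec{M}][\vec{x}]$, the coefficient of $\vec{x}^{\vec{i}}$ equals $\tr(M_{i_1}\cdots M_{i_\ell})$ when $\vec{i}\in\zr{r}^\ell$. For any $\vec{i}\in\N^\ell\setminus\zr{r}^\ell$, some coordinate $i_j\ge r$, and since $x_j$ appears only in the factor $M(x_j)$ with degree strictly less than $r$, no such monomial appears in the expansion, giving coefficient $0$.

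There is essentially no obstacle here: the statement is a bookkeeping identity once one observes that $\vec{x}$ appears only in a ``separable'' way across the $\ell$ factors, so no cross-terms among the $x_j$'s arise and the exponent bound $i_j<r$ is inherited directly from the degree bound on $M(x_j)$.
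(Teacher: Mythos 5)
Your proposal is correct and follows essentially the same route as the paper: expand each $M(x_j)$ as a polynomial in $x_j$, distribute the product (using that each $x_j$ appears in only one factor), pull the scalar monomials out of the trace by linearity, and read off coefficients. The only cosmetic difference is that you spell out the zero case via the degree bound $i_j<r$, whereas the paper leaves it implicit in the form of the sum; this is just added clarity, not a new idea.
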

\begin{proof}
	Consider $f_\ell(\vec{M},\vec{x})$ as a polynomial in $\F[\vec{M}][\vec{x}]$. Taking coefficients, we see that
	\begin{align*}
		\coeff{\vec{x}^{\vec{i}}}(f_\ell(\vec{M},\vec{x}))
		&=\coeff{\vec{x}^{\vec{i}}}\left(\tr(M(x_1)\cdots M(x_\ell))\right)\\
		&=\coeff{\vec{x}^{\vec{i}}}\left( \tr\left( \left(\sum_{j_1\in\zr{r}}M_{j_1}x_1^{j_1}\right)\cdots\left(\sum_{j_\ell\in\zr{r}}M_{j_\ell}x_\ell^{j_\ell}\right)\right)\right)\\
		&=\coeff{\vec{x}^{\vec{i}}}\left(\tr\left(\sum_{j_1,\ldots,j_\ell\in\zr{r}} M_{j_1}\cdots M_{j_\ell} x_1^{j_1}\cdots x_\ell^{j_\ell}\right)\right)\\
		\intertext{by linearity of the trace,}
		&=\coeff{\vec{x}^{\vec{i}}}\left(\sum_{\vec{j}\in\zr{r}^\ell} \tr(M_{j_1}\cdots M_{j_\ell})\vec{x}^{\vec{j}}\right)\\
		&=
			\begin{cases}
				\tr(M_{i_1}\cdots M_{i_\ell})	& \text{if } \vec{i}\in\zr{r}^{\vec{\ell}}\\
				0				& \text{else}
			\end{cases}.
		\qedhere
	\end{align*}
\end{proof}

As the above lemma shows that $f_\ell(\vec{M},\vec{x})$ encodes all of the generators $\cT$, it follows that $\vec{A}$ and $\vec{B}$ agree on the generators $\cT$ iff they agree on $f_\ell(\vec{M},\vec{x})$.

\begin{lemma}
	\label{coeffembed}
	Assume the setup of \autoref{roabpinv}. Let $\vec{A},\vec{B}\in(\F^{\zr{n}\times\zr{n}})^{\zr{r}}$ and $\ell\ge 1$. Then $\tr(A_{i_1}\cdots A_{i_\ell})=\tr(B_{i_1}\cdots B_{i_\ell})$ for all $\vec{i}\in\zr{r}^\ell$ iff $f_\ell(\vec{A},\vec{x})=f_\ell(\vec{B},\vec{x})$, where this second equality is as polynomials in the ring $\F[\vec{x}]$.
\end{lemma}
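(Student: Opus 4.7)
The plan is to derive \autoref{coeffembed} as a direct corollary of the coefficient identification given in \autoref{roabpcoeff}. The key observation is that two polynomials in $\F[\vec{x}]$ are equal if and only if all of their monomial coefficients coincide, so the statement reduces to checking equality of coefficients of $\vec{x}^{\vec{i}}$ on both sides of $f_\ell(\vec{A},\vec{x}) = f_\ell(\vec{B},\vec{x})$.

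First I would specialize \autoref{roabpcoeff} by substituting $\vec{M} \leftarrow \vec{A}$ and $\vec{M} \leftarrow \vec{B}$ respectively. This is legitimate because the computation in the proof of \autoref{roabpcoeff} only uses the linearity of the trace and the distributive law, both of which hold after evaluating the formal matrix variables at any specific matrix tuple. This yields, for every $\vec{i} \in \N^\ell$,
\[
    \coeff{\vec{x}^{\vec{i}}}(f_\ell(\vec{A},\vec{x})) =
    \begin{cases}
        \tr(A_{i_1}\cdots A_{i_\ell}) & \vec{i} \in \zr{r}^\ell\\
        0 & \text{else}
    \end{cases}
\]
and similarly with $\vec{B}$ in place of $\vec{A}$.

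Then I would finish by comparing coefficients. For the forward direction, if $\tr(A_{i_1}\cdots A_{i_\ell}) = \tr(B_{i_1}\cdots B_{i_\ell})$ for every $\vec{i} \in \zr{r}^\ell$, then every coefficient of $f_\ell(\vec{A},\vec{x})$ matches the corresponding coefficient of $f_\ell(\vec{B},\vec{x})$ (both vanish outside $\zr{r}^\ell$), so the two polynomials are equal. Conversely, if $f_\ell(\vec{A},\vec{x}) = f_\ell(\vec{B},\vec{x})$ in $\F[\vec{x}]$, then in particular the coefficients on $\vec{x}^{\vec{i}}$ agree for each $\vec{i} \in \zr{r}^\ell$, giving the desired trace equalities.

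There is no real obstacle here; the lemma is essentially a bookkeeping statement that repackages \autoref{roabpcoeff} by specializing $\vec{M}$ to the matrix tuples $\vec{A}$ and $\vec{B}$. The only minor subtlety is to be explicit that the ``coefficient extraction'' operation $\coeff{\vec{x}^{\vec{i}}}$ commutes with the specialization $\vec{M} \mapsto \vec{A}$, which is immediate because $f_\ell(\vec{M},\vec{x})$ is a polynomial in $\vec{x}$ whose coefficients are already polynomial expressions in the entries of $\vec{M}$.
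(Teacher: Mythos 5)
Your proposal is correct and follows essentially the same approach as the paper: equate polynomials by comparing coefficients, then apply \autoref{roabpcoeff} (specialized to $\vec{A}$ and $\vec{B}$) to translate coefficient equality into the trace equalities. The paper's proof is a two-sentence version of exactly this; your extra remark about coefficient extraction commuting with the specialization $\vec{M}\mapsto\vec{A}$ is a valid and harmless elaboration.
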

\begin{proof}
	The two polynomials $f_\ell(\vec{A},\vec{x})$, $f_\ell(\vec{B},\vec{x})$ are equal iff all of their coefficients are equal.  By \autoref{roabpcoeff}, this is exactly the statement that $\tr(A_{i_1}\cdots A_{i_\ell})=\tr(B_{i_1}\cdots B_{i_\ell})$ for all $\vec{i}\in\zr{r}^\ell$.
\end{proof}

Hence, the polynomials $f_\ell(\vec{M},\vec{x})$ capture the generators $\cT$ of $\F[\vec{M}]^{\GL_n(\F)}$ thus in a sense capturing the entire ring $\F[\vec{M}]^{\GL_n(\F)}$ also.

\begin{corollary}
	\label{reducetoroabp}
	Assume the setup of \autoref{roabpinv}. Let $\F$ be a field of characteristic zero. Let $\vec{A},\vec{B}\in(\F^{\zr{n}\times \zr{n}})^{\zr{r}}$.  Then $f(\vec{A})= f(\vec{B})$ for all $f\in\F[\vec{M}]^{\GL_n(\F)}$ iff $f_\ell(\vec{A},\vec{x})= f_\ell(\vec{B},\vec{x})$ for all $\ell\in[n^2]$, where the second equality is as polynomials in the ring $\F[\vec{x}]$.
\end{corollary}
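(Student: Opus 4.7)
The plan is a direct assembly of two results already in hand: \autoref{generatinginvariants} (the Procesi--Razmyslov--Formanek generating set for $\F[\vec{M}]^{\GL_n(\F)}$) together with \autoref{coeffembed} applied for each degree $\ell$.

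First I would handle the reduction to the generators. Since $\F[\vec{M}]^{\GL_n(\F)}$ is generated as an $\F$-algebra by $\cT = \{\tr(M_{i_1}\cdots M_{i_\ell}) : \vec{i}\in\zr{r}^\ell,\ \ell\in[n^2]\}$, every invariant $f$ is a polynomial (with $\F$-coefficients) in the elements of $\cT$. Thus the condition $f(\vec{A}) = f(\vec{B})$ for all $f\in\F[\vec{M}]^{\GL_n(\F)}$ is equivalent to the condition $g(\vec{A}) = g(\vec{B})$ for all $g\in\cT$; the forward direction is trivial, and the converse follows because agreement on each generator is preserved under the polynomial algebra operations. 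Unpacking the definition of $\cT$, this in turn is equivalent to the statement that $\tr(A_{i_1}\cdots A_{i_\ell}) = \tr(B_{i_1}\cdots B_{i_\ell})$ for every $\ell\in[n^2]$ and every $\vec{i}\in\zr{r}^\ell$.

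Second, I would quantify over $\ell$ and apply \autoref{coeffembed}. For each fixed $\ell\in[n^2]$, \autoref{coeffembed} states precisely that $\tr(A_{i_1}\cdots A_{i_\ell}) = \tr(B_{i_1}\cdots B_{i_\ell})$ for all $\vec{i}\in\zr{r}^\ell$ if and only if $f_\ell(\vec{A},\vec{x}) = f_\ell(\vec{B},\vec{x})$ as polynomials in $\F[\vec{x}]$. Taking the conjunction of these equivalences over all $\ell\in[n^2]$ gives the claimed iff.

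There is essentially no technical obstacle here; the only thing to be careful about is the direction of the generating-set argument that uses closure under $\F$-algebra operations, and the fact that \autoref{generatinginvariants} specifically requires the range $\ell\in[n^2]$ (which is why the corollary quantifies over exactly this range). The characteristic zero hypothesis is inherited solely from \autoref{generatinginvariants}; nothing extra is needed for the coefficient-extraction step, which is purely formal.
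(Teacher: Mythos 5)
Your proposal is correct and matches the paper's proof essentially verbatim: both reduce agreement on all invariants to agreement on the generating set $\cT$ from \autoref{generatinginvariants}, then apply \autoref{coeffembed} for each $\ell\in[n^2]$ to rephrase that as equality of the polynomials $f_\ell(\vec{A},\vec{x})$ and $f_\ell(\vec{B},\vec{x})$. The only difference is the order in which you present the two halves, which is immaterial.
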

\begin{proof}
	By \autoref{coeffembed}, $f_\ell(\vec{A},\vec{x})=f_\ell(\vec{B},\vec{x})$ for $\ell\in[n^2]$ iff $g(\vec{A})=g(\vec{B})$ for all $g$ of the form $g(\vec{M})=\tr(M_{i_1}\cdots M_{i_\ell})$ for $\vec{i}\in\zr{r}^\ell$ and $\ell\in[n^2]$.  This set of $g$ is exactly the set $\cT$ of \autoref{generatinginvariants}, and by that result $\cT$ generates $\F[\vec{M}]^{\GL_n(\F)}$, which implies that the polynomials in $\cT$ agree on $\vec{A}$ and $\vec{B}$ iff all the polynomials in $\F[\vec{M}]^{\GL_n(\F)}$ agree on $\vec{A}$ and $\vec{B}$.  Thus $f_\ell(\vec{A},\vec{x})=f_\ell(\vec{B},\vec{x})$ for $\ell\in[n^2]$ iff $f(\vec{A})=f(\vec{B})$ for all $f\in\F[\vec{M}]^{\GL_n(\F)}$.
\end{proof}

Thus having reduced the question of whether $\F[\vec{M}]^{\GL_n(\F)}$ separates $\vec{A}$ and $\vec{B}$, to the question of whether some $f_\ell(\vec{M},\vec{x})$ separates $\vec{A}$ and $\vec{B}$, we now seek to remove the need for the indeterminates $\vec{x}$.  Specifically, we will replace them by the evaluation points of a hitting set, as shown in the next construction.
	
\begin{construction}\label{const:hittingset}
	Assume the setup of \autoref{roabpinv}. Let $\cH\subseteq\F^{n^2}$ be a $t(n,r)$-explicit hitting set for width $\le 2n^2$, depth $n^2$, degree $<r$ ROABPs. Define
	\[\cT_\cH\eqdef \{f_\ell(\vec{M},\vec{\alpha})|\vec{\alpha}\in\cH, \ell\in[n^2]\}\;,\]
	where if $\ell<n^2$ we use the first $\ell$ variables of $\alpha$ for the values of $\vec{x}$ in the substitution.
\end{construction}

We now prove the main theorems, showing how to construct small sets of explicit separating invariants.  We first do this for an arbitrary hitting set, then plug in the hitting set given in our previous work as stated in  \autoref{thm:FS}.

\begin{theorem}
	\label{thm:sepinv}
	Assume the setup of \autoref{const:hittingset}. Let $\F$ be a field of characteristic zero. Then $\cT_\cH$ is a set of size $n^2|\cH|$ of homogeneous separating invariants with $\poly(t(n,r),n,r)$-explicit ABPs.  That is, $\cT_\cH\subseteq\F[\vec{M}]^{\GL_n(\F)}$, and for any $\vec{A},\vec{B}\in(\F^{\zr{n}\times \zr{n}})^{\zr{r}}$, $f(\vec{A})\ne f(\vec{B})$ for some $f\in\F[\vec{M}]^{\GL_n(\F)}$ iff $f'(\vec{A})\ne f'(\vec{B})$ for some $f'\in \cT_\cH$, and each such $f'$ is computed by an explicit ABP.
\end{theorem}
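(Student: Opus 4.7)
The plan is to verify the claim in four parts: size, explicitness, invariance/homogeneity, and separation. The size bound $n^2|\cH|$ is immediate from \autoref{const:hittingset}, since we range $\ell$ over $[n^2]$ and $\vec{\alpha}$ over $\cH$. For explicitness, I would invoke \autoref{explicitinv}: given an index $(\ell,\vec{\alpha})$ into $\cT_\cH$, we can in $\poly(n,r,\ell)$ steps build a depth-$\ell$, width-$n^2$ ABP computing $f_\ell(\vec{M},\vec{\alpha})$; generating $\vec{\alpha}\in\cH$ takes $t(n,r)$ time, so the whole construction runs in $\poly(t(n,r),n,r)$ time as claimed.

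For invariance and homogeneity, I would note that by \autoref{roabpcoeff}, $f_\ell(\vec{M},\vec{\alpha})$ is an $\F$-linear combination of the generators $\tr(M_{i_1}\cdots M_{i_\ell})\in\cT$ from \autoref{generatinginvariants}, each of which is invariant under simultaneous conjugation; hence $f_\ell(\vec{M},\vec{\alpha})\in\F[\vec{M}]^{\GL_n(\F)}$. Each such generator is homogeneous of degree $\ell$ in the entries of $\vec{M}$, so $f_\ell(\vec{M},\vec{\alpha})$ is itself homogeneous of degree $\ell$.

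The heart of the argument is the separation property. The direction $(\Leftarrow)$ is trivial: since $\cT_\cH\subseteq\F[\vec{M}]^{\GL_n(\F)}$, any $f'\in\cT_\cH$ witnessing separation of $\vec{A},\vec{B}$ is itself an invariant. For $(\Rightarrow)$, suppose some $f\in\F[\vec{M}]^{\GL_n(\F)}$ satisfies $f(\vec{A})\ne f(\vec{B})$. By the contrapositive of \autoref{reducetoroabp}, there exists $\ell\in[n^2]$ with $f_\ell(\vec{A},\vec{x})\ne f_\ell(\vec{B},\vec{x})$ as polynomials in $\F[\vec{x}]$. Then $g(\vec{x})\eqdef f_\ell(\vec{A},\vec{x})-f_\ell(\vec{B},\vec{x})$ is a non-zero polynomial, and by \autoref{diffofinv} it is computed by a width-$2n^2$, depth-$\ell\le n^2$, degree-$<r$ ROABP. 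Padding the ROABP (by adjoining trivial identity layers) to depth exactly $n^2$ on all $n^2$ variables preserves the width and degree bound, so $g$ lies in the class for which $\cH$ is a hitting set. Therefore some $\vec{\alpha}\in\cH$ satisfies $g(\vec{\alpha})\ne 0$, i.e.\ $f_\ell(\vec{A},\vec{\alpha})\ne f_\ell(\vec{B},\vec{\alpha})$, yielding the desired $f'=f_\ell(\vec{M},\vec{\alpha})\in\cT_\cH$ separating $\vec{A}$ and $\vec{B}$.

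The only conceptually delicate point is that the ROABP for the \emph{difference} $g(\vec{x})$ must match the shape assumed by the hitting set $\cH$ (width $\le 2n^2$, depth exactly $n^2$, all $n^2$ variables), which is why \autoref{const:hittingset} is phrased with these parameters and why the convention of substituting $\vec{\alpha}$ into the first $\ell$ variables is chosen; padding out the unused variables to degree-one identity transitions handles any mismatch cleanly. Everything else is bookkeeping.
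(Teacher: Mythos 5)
Your proof is correct and follows essentially the same path as the paper's: size and homogeneity by construction, explicitness via \autoref{explicitinv}, invariance via \autoref{roabpcoeff}, and separation by chaining \autoref{reducetoroabp} and \autoref{diffofinv} with the padding to depth $n^2$ so that $\cH$ applies. The only cosmetic difference is that you handle the ``$\Leftarrow$'' direction of the separation equivalence directly from $\cT_\cH\subseteq\F[\vec{M}]^{\GL_n(\F)}$, whereas the paper subsumes both directions into a single iff chain---both are fine.
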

\begin{proof}
	\uline{$\cT_\cH$ has explicit ABPs:}  We can index $\cT_\cH$ by $\ell\in[n^2]$ and an index in $\cH$.  Given an index in $\cH$ we can, by the explicitness of $\cH$, compute the associated $\vec{\alpha}\in\cH$ in $t(n,r)$ steps.  By \autoref{explicitinv} we can compute an ABP for $f_\ell(\vec{M},\vec{\alpha})$ in $\poly(n,r)$ steps, as $\ell\le n^2$, as desired.

	\uline{$f\in\cT_\cH$ are homogeneous:} This is clear by construction.

	\uline{$|T_\cH|=n^2|\cH|$:} For each $\ell\in[n^2]$, we use one invariant per point in $\cH$.

	\uline{$T_\cH\subseteq\F[\vec{M}]^{\GL_n(\F)}$:} For any $\ell$, \autoref{roabpcoeff} shows that the coefficients of $f_\ell(\vec{M},\vec{x})$ (when regarded as polynomials in $\F[\vec{M}][\vec{x}]$) are traces of products of the matrices in $\vec{M}$, so these coefficients are invariant under simultaneous conjugation of $\vec{M}$.  It follows then for any $\alpha$, $f_\ell(\vec{M},\alpha)$ is a linear combination of invariants, and thus is an invariant.  As $T_\cH$ consists of exactly such polynomials, it is contained in the ring of all invariants.

	\uline{$\F[\vec{M}]^{\GL_n(\F)}$ separates $\iff$ $T_\cH$ separates:} By \autoref{reducetoroabp}, we see that for any $\vec{A}$ and $\vec{B}$, there is an $f\in \F[\vec{M}]^{\GL_n(\F)}$ with $f(\vec{A})\ne f(\vec{B})$ iff there is some $\ell\in[n^2]$ such that $f_\ell(\vec{A},\vec{x})-f_\ell(\vec{B},\vec{x})\ne 0$.  By \autoref{diffofinv}, $f_\ell(\vec{A},\vec{x})-f_\ell(\vec{B},\vec{x})$ is computable by a width $\le 2n^2$, depth $\ell$, degree $<r$ ROABP, which by the addition of $n^2-\ell$ dummy variables (say, to the end of $\vec{x}$), can be considered as a depth $n^2$ ROABP.  Thus, as $\cH$ is a hitting set for ROABPs of the relevant size, for any $\ell\in[n^2]$, $f_\ell(\vec{A},\vec{x})-f_\ell(\vec{B},\vec{x})\ne 0$ iff there is some $\vec{\alpha}\in\cH$ such that $f_\ell(\vec{A},\vec{\alpha})-f_\ell(\vec{B},\vec{\alpha})\ne0$, and thus iff there is an $\alpha\in\cH$ such that the invariant $f'(\vec{M})\eqdef f_\ell(\vec{M},\vec{\alpha})\in T_\cH$ separates $\vec{A}$ and $\vec{B}$.
\end{proof}

As done in Mulmuley's Theorem 3.6, we can conclude that the ring of invariants is integral over the subring generated by the separating invariants.  This uses the following theorem of Derksen and Kemper~\cite{DK} (using the ideas of geometric invariant theory~\cite{MFK}), which we only state in our specific case, but does hold more generally.

\begin{theorem}[Theorem 2.3.12, Derksen and Kemper~\cite{DK}, stated by Mulmuley in Theorem 2.11]
	\label{dkintegral}
	Let $\F$ be an algebraically closed field of characteristic zero. Let $\cT'\subseteq \F[\vec{M}]^{\GL_n(\F)}$ be a finite set of homogeneous separating invariants.  Then $\F[\vec{M}]^{\GL_n(\F)}$ is integral over the subring $S$ generated by $\cT'$.
\end{theorem}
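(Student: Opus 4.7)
The plan is to deduce integrality by showing that $R \eqdef \F[\vec{M}]^{\GL_n(\F)}$ is in fact finitely generated as an $S$-module, via a graded Nakayama-style argument whose key input is that the common vanishing locus of $\cT'$ in $(\F^{\zr{n}\times\zr{n}})^{\zr{r}}$ coincides with the null cone of all positive-degree invariants. First I would invoke Hilbert's classical theorem: since $\GL_n(\F)$ is linearly reductive in characteristic zero, the ring $R$ is a finitely generated graded $\F$-algebra and comes equipped with a Reynolds operator $\rho\colon \F[\vec{M}] \to R$ that is an $R$-linear, degree-preserving splitting of the inclusion $R \hookrightarrow \F[\vec{M}]$.

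The main step is to establish that the ideal $I \eqdef R \cdot S_+$ generated in $R$ by the positive-degree part of $\cT'$ satisfies $\sqrt{I} = R_+$, where $R_+$ denotes the ideal of positive-degree elements of $R$. The inclusion $\sqrt{I} \subseteq R_+$ is immediate from homogeneity of the generators. For the reverse, let $f \in R_+$ be homogeneous of positive degree, and let $p \in (\F^{\zr{n}\times\zr{n}})^{\zr{r}}$ be any point at which every positive-degree element of $\cT'$ vanishes; since constant invariants agree trivially, $t(p) = t(0)$ for all $t \in \cT'$, so by the separating property no invariant distinguishes $p$ from $0$, and in particular $f(p) = f(0) = 0$. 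Thus $f$ vanishes on the common zero locus of $\cT'$ inside $(\F^{\zr{n}\times\zr{n}})^{\zr{r}}$, and by Hilbert's Nullstellensatz (invoking that $\F$ is algebraically closed), some power $f^k = \sum_i g_i t_i$ in $\F[\vec{M}]$ with $g_i \in \F[\vec{M}]$ and $t_i \in \cT'$ of positive degree. Applying $\rho$ and using its $R$-linearity together with $\rho(t_i) = t_i$ gives $f^k = \sum_i \rho(g_i) t_i \in I$, as required.

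Having established $\sqrt{I} = R_+$, Noetherianity of $R$ yields $R_+^N \subseteq I$ for some $N$. Then the graded quotient $R/I$ is a finitely generated graded $\F$-algebra in which every positive-degree element is nilpotent, hence is a finite-dimensional $\F$-vector space. Lifting a homogeneous $\F$-basis $\overline{f_1}, \ldots, \overline{f_m}$ of $R/I$ to elements $f_1, \ldots, f_m \in R$, a standard induction on degree (the graded Nakayama lemma) gives $R = \sum_i S \cdot f_i$, so $R$ is finitely generated as an $S$-module and therefore integral over $S$.

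The main technical hurdle is the use of the Reynolds operator to descend the Nullstellensatz identity from $\F[\vec{M}]$ down to the invariant ring $R$: this step is precisely where linear reductivity of $\GL_n(\F)$ in characteristic zero is essential, and is what allows one to upgrade arbitrary coefficients $g_i \in \F[\vec{M}]$ to invariant coefficients $\rho(g_i) \in R$. A more elementary approach using only the finite generation of $R$ would not suffice, since the separating hypothesis controls only the points of the variety and not, a priori, relations at the level of the coordinate ring.
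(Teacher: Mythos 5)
Your proof is correct, and it reconstructs the standard Hilbert-style argument that Derksen and Kemper give for Theorem 2.3.12 (which the paper cites as a black box rather than reproving): identify the common zero locus of the positive-degree part of $\cT'$ with the null cone via the separating property, push the resulting Nullstellensatz identity down to $R$ with the Reynolds operator to get $\sqrt{R\cdot S_+}=R_+$, and then apply graded Nakayama to conclude module-finiteness, hence integrality. The only places that need care are exactly the ones you flag explicitly — that $R$-linearity of $\rho$ (i.e.\ linear reductivity in characteristic zero) is what lets you replace $g_i$ by $\rho(g_i)$, and that you restrict attention to the positive-degree elements of $\cT'$ when forming the vanishing locus so that degree-zero invariants play no role — and both are handled correctly.
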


Combining \autoref{thm:sepinv} and \autoref{dkintegral} yields the following corollary.

\begin{corollary}\label{thm:general integral}
	Assume the setup of \autoref{const:hittingset}. Let $\F$ be an algebraically closed field of characteristic zero. Then $\F[\vec{M}]^{\GL_n(\F)}$ is integral over the subring generated by $\cT_\cH$, a set of $n^2|\cH|$ invariants with $\poly(t(n,r),n,r)$-explicit ABPs.
\end{corollary}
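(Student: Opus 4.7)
The plan is to observe that this corollary is a direct consequence of combining the two results stated immediately before it, namely \autoref{thm:sepinv} and \autoref{dkintegral}. There is essentially no new mathematical content to develop; the work is just in verifying that the hypotheses of the latter theorem are met by the output of the former.

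First I would invoke \autoref{thm:sepinv} under the setup of \autoref{const:hittingset}. This directly gives that $\cT_\cH \subseteq \F[\vec{M}]^{\GL_n(\F)}$ is a set of size exactly $n^2|\cH|$, consisting of homogeneous separating invariants, each computed by a $\poly(t(n,r),n,r)$-explicit ABP. In particular, $\cT_\cH$ is finite (since $\cH$ is finite by assumption) and every element is homogeneous and separating, which are precisely the three conditions needed to apply the Derksen--Kemper theorem.

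Next I would feed $\cT_\cH$ into \autoref{dkintegral}. Since we have assumed $\F$ is algebraically closed of characteristic zero, and since $\cT_\cH$ is a finite set of homogeneous separating invariants in $\F[\vec{M}]^{\GL_n(\F)}$, the theorem immediately yields that $\F[\vec{M}]^{\GL_n(\F)}$ is integral over the subring generated by $\cT_\cH$. Combining this with the size and explicitness bounds on $\cT_\cH$ from the previous step produces exactly the statement of the corollary.

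The only ``obstacle'' worth noting is a bookkeeping one: one must check that the hypothesis of \autoref{thm:sepinv} (a field of characteristic zero) is compatible with the stronger hypothesis here (algebraically closed of characteristic zero), which it is, and that the explicitness claim is preserved (it is, since we are not modifying $\cT_\cH$ in any way when passing from separating invariants to the integrality conclusion). No additional computation or construction is required.
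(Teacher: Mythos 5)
Your proposal is correct and matches the paper exactly: the paper simply states ``Combining \autoref{thm:sepinv} and \autoref{dkintegral} yields the following corollary,'' which is precisely the two-step invocation you describe, including the check that the hypotheses of \autoref{dkintegral} (finite, homogeneous, separating, algebraically closed field of characteristic zero) are met by the output of \autoref{thm:sepinv}.
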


Continuing with the dictionary of geometric invariant theory~\cite{MFK},  we can obtain the following deterministic black-box algorithm for testing of two orbit closures intersect.

\begin{corollary}
	\label{blackboxorbitclosure}
	Assume the setup of \autoref{const:hittingset}. Let $\F$ be a field of characteristic zero. There is an algorithm, running in deterministic $\poly(n,r,t(n,r),|\cH|)$-time, in the unit cost arithmetic model, such that given $\vec{A},\vec{B}\in(\F^{\zr{n}\times \zr{n}})^{\zr{r}}$, one can decide whether the orbit closures of $\vec{A}$ and $\vec{B}$ under simultaneous conjugation have an empty intersection. Further, this algorithm is black-box, as it only compares $f(\vec{A})$ and $f(\vec{B})$ for various polynomials $f$.
\end{corollary}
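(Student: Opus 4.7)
The plan is to evaluate each polynomial in the separating set $\cT_\cH$ on both inputs and exploit the dictionary of geometric invariant theory. The key input is \autoref{thm:sepinv}, which supplies $\cT_\cH$, a set of $n^2|\cH|$ separating invariants, each computable by an ABP that is constructable in $\poly(t(n,r),n,r)$-time.

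First, I would recall (as used by Mulmuley and in the proof of \autoref{dkintegral}, via \cite{MFK,DK}) the GIT fact that, for simultaneous conjugation over a field of characteristic zero, the orbit closures of $\vec{A}$ and $\vec{B}$ in $(\overline{\F}^{\zr{n}\times\zr{n}})^{\zr{r}}$ have empty intersection if and only if there exists some $f\in\F[\vec{M}]^{\GL_n(\F)}$ with $f(\vec{A})\ne f(\vec{B})$. By the separating property of $\cT_\cH$ provided by \autoref{thm:sepinv}, this is in turn equivalent to the existence of some $f'\in\cT_\cH$ with $f'(\vec{A})\ne f'(\vec{B})$. Note that each $f'\in\cT_\cH$ has coefficients in $\F$, so $f'(\vec{A})\in\F$ and the comparison makes sense without passing to $\overline{\F}$ explicitly.

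The algorithm is then straightforward: enumerate the $n^2|\cH|$ polynomials $f'\in\cT_\cH$; for each one, use \autoref{thm:sepinv} to construct the explicit ABP for $f'$ in $\poly(t(n,r),n,r)$ steps, evaluate this ABP at $\vec{A}$ and at $\vec{B}$ in $\poly(n,r)$ unit-cost arithmetic operations by propagating values layer-by-layer through the matrix-product representation of \autoref{roabpadj}, and compare the two field elements. Output ``disjoint'' if some $f'$ distinguishes $\vec{A}$ from $\vec{B}$; otherwise output ``intersecting''.

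Correctness is immediate from the two equivalences above. The total runtime is $n^2|\cH|\cdot\poly(t(n,r),n,r)=\poly(n,r,t(n,r),|\cH|)$, and the algorithm is manifestly black-box since it accesses $\vec{A}$ and $\vec{B}$ only through the evaluations $f'(\vec{A})$ and $f'(\vec{B})$. There is no real obstacle: all the technical work has already been done in \autoref{thm:sepinv}, and invoking the GIT dictionary has no computational content.
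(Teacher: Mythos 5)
Your proposal is correct and matches the paper's proof: both invoke the GIT dictionary (as in Mulmuley's Theorem 3.8) to reduce orbit-closure intersection to agreement of all invariants on $\vec{A}$ and $\vec{B}$, then apply \autoref{thm:sepinv} to reduce further to agreement on the explicit set $\cT_\cH$, and finally observe that the ABPs for $\cT_\cH$ can be constructed and evaluated within the stated time bound. The paper's proof is terser but follows exactly the same route.
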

\begin{proof}
	As observed in Mulmuley's Theorem~3.8, the orbit closures of $\vec{A}$ and $\vec{B}$ intersect iff all invariants in $\F[\vec{M}]^{\GL_n(\F)}$ agree on $\vec{A}$ and $\vec{B}$. Our \autoref{thm:sepinv} shows this question can be answered by testing if $\vec{A}$ and $\vec{B}$ agree with respect to all $f\in\cT_\cH$, and this can be tested in $\poly(n,r,|\cH|,t(n,r))$-time, as ABPs can be evaluated quickly.
\end{proof}

Thus, the above results, \autoref{thm:sepinv}, \autoref{thm:general integral}, and \autoref{blackboxorbitclosure} give positive results to Questions~\ref{q:separating invariants}, \ref{q:integral subring}, and \ref{q:orbit intersection} respectively, assuming small explicit hitting sets for ROABPs.  Plugging in the hitting sets results of Forbes and Shpilka~\cite{ForbesShpilka12a} as cited in \autoref{thm:FS}, we obtain \autoref{thm:integral subring} and \autoref{Cor:main integral}.

However, using the hitting set of \autoref{thm:FS} does not allow us to deduce the efficient algorithm for orbit closure intersection claimed in \autoref{thm:orbit intersection} as the hitting set is too large.  To get that result, we observe that deciding the orbit closure intersection problem does not require black-box PIT, and that white-box PIT suffices.  Thus, invoking the white-box results of Raz and Shpilka~\cite{RazShpilka05}, and the follow-up work by Arvind, Joglekar and Srinivasan~\cite{ArvindJS09}, we can get the desired result.

\begin{theorem*}[\autoref{thm:orbit intersection}]
	Let $\F$ be a field of characteristic zero. There is an algorithm, running in deterministic $\polylog(n,r)$-time using $\poly(n,r)$-processors ($\NC$), in the unit cost arithmetic model, such that given $\vec{A},\vec{B}\in(\F^{\zr{n}\times \zr{n}})^{\zr{r}}$, one can decide whether the orbit closures of $\vec{A}$ and $\vec{B}$ under simultaneous conjugation have an empty intersection.
\end{theorem*}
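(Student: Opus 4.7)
The plan is to combine the structural reduction already developed in this section (the orbit-closure intersection problem reduces to checking zeroness of the ROABPs produced by \autoref{diffofinv}) with the existing \emph{white-box} deterministic PIT algorithms for ABP-like models, noting that those algorithms can be carried out in $\NC$.

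First, by the correspondence from geometric invariant theory used in \autoref{blackboxorbitclosure} together with \autoref{reducetoroabp}, the orbit closures of $\vec{A}$ and $\vec{B}$ fail to intersect iff some invariant separates them, iff there exists $\ell \in [n^2]$ such that the polynomial
\[
g_\ell(\vec{x}) \;\eqdef\; f_\ell(\vec{A},\vec{x}) - f_\ell(\vec{B},\vec{x}) \;\in\; \F[\vec{x}]
\]
is nonzero. By \autoref{diffofinv}, each $g_\ell$ is computable by an explicit ROABP of width $\le 2n^2$, depth $\ell \le n^2$, and individual degree $<r$. Unwinding the construction through \autoref{roabpadj} and \autoref{roabpmatrix} using the explicit entries of $\vec{A}$ and $\vec{B}$, I would argue that this ROABP can be written down entry-by-entry in $\polylog(n,r)$ parallel time using $\poly(n,r)$ processors, since the matrix $A(x_j)$ is a straightforward rearrangement of the entries of $\vec{A}$ and no nontrivial arithmetic is needed to assemble the edge labels.

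Second, I would replace the black-box PIT step (which would only give a quasipolynomial-size algorithm via \autoref{thm:FS}) with a \emph{white-box} deterministic PIT for the specific ROABPs $g_\ell$ produced above. An ROABP of depth $d$, width $w$, and individual degree $<r$ is, after expanding each univariate edge-label into $r$ monomial edges, a non-commutative ABP of size $\poly(d,w,r)$ computing the same polynomial (up to the fact that the variables happen to commute, which only makes the zero-testing question easier: since each variable appears in a single fixed layer, an ROABP computes the zero polynomial iff the corresponding non-commutative ABP does). Hence the Raz--Shpilka~\cite{RazShpilka05} deterministic polynomial-time PIT for non-commutative ABPs applies, and the parallel version of Arvind--Joglekar--Srinivasan~\cite{ArvindJS09} places the test in $\NC$.

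Finally, one runs this $\NC$ PIT algorithm on $g_\ell$ for each $\ell \in [n^2]$ in parallel, and outputs ``disjoint orbit closures'' iff some $g_\ell$ is found to be nonzero. Correctness follows from the reduction above, and the running time is $\polylog(n,r)$ on $\poly(n,r)$ processors, as both the construction of the ROABP for $g_\ell$ and the PIT invocation fit into these bounds. The main obstacle is a bookkeeping one: verifying that (i) the entries of $A(x_j)$ and $B(x_j)$ can be assembled into the needed ROABP (with the padding and addition steps of \autoref{roabpmatrix} and \autoref{addROABP}) truly in $\NC$, and (ii) the Raz--Shpilka white-box PIT really extends to the ROABPs at hand, which amounts to noting that the ROABP is literally a non-commutative ABP once each univariate edge-label is expanded, and that commutative zeroness coincides with non-commutative zeroness under the read-once oblivious structure.
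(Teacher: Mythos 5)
Your proposal matches the paper's proof essentially step for step: reduce to zero-testing the $n^2$ ROABPs $f_\ell(\vec{A},\vec{x})-f_\ell(\vec{B},\vec{x})$ via \autoref{reducetoroabp} and \autoref{diffofinv}, view each ROABP as a non-commutative ABP so the Raz--Shpilka white-box PIT applies, and invoke the Arvind--Joglekar--Srinivasan parallelization to run all tests in $\NC$. The only difference is that you spell out the ROABP-to-non-commutative-ABP reduction (expanding univariate edge labels) and the $\NC$-constructibility of the branching programs in more detail, which the paper treats as immediate.
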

\begin{proof}
	As observed in Mulmuley's Theorem~3.8, the orbit closures of $\vec{A}$ and $\vec{B}$ intersect iff all invariants in $\F[\vec{M}]^{\GL_n(\F)}$ agree on $\vec{A}$ and $\vec{B}$.  Our results, \autoref{reducetoroabp} and \autoref{diffofinv}, show there is a non-empty intersection iff an $n^2$-sized set of $\poly(n,r)$-size ROABPs all compute the zero polynomial.

	Raz and Shpilka~\cite{RazShpilka05} gave a polynomial-time algorithm (in the unit-cost arithmetic model) for deciding if a non-commutative ABP computes the zero polynomial, and ROABPs are a special case of the non-commutative ABP model because the oblivious nature ensures that all multiplications of the variables are in the same order and thus commutativity is never exploited.  Thus, by applying this algorithm to all of the ROABPs $f_\ell(\vec{A},\vec{x})-f_\ell(\vec{B},\vec{x})$, we can decide if the orbit closures of $\vec{A}$ and $\vec{B}$ intersect in polynomial time, thus in $\P$.

	Further, Arvind, Joglekar and Srinivasan~\cite{ArvindJS09} observed that the Raz-Shpilka~\cite{RazShpilka05} algorithm can be made parallel (within $\NC^3$) while still running in polynomial-time, by using parallel linear algebra.   Arvind, Joglekar and Srinivasan~\cite{ArvindJS09} also gave an alternate, characteristic-zero specific, parallel algorithm for this problem (within $\NC^2$). Hence, one can test each of ``$f_\ell(\vec{A},\vec{x})-f_\ell(\vec{B},\vec{x})\equiv 0$?'' in parallel, and then return the ``and'' of all of these tests, again in parallel.  Thus, this gives a parallel polynomial-time ($\NC$) algorithm for testing if orbit closures interesect.
\end{proof}

We comment briefly on space-bounded boolean computation, and its relation with this work.  As noted in Forbes and Shpilka~\cite{ForbesShpilka12a}, the ROABP model is a natural algebraic analogue of space-bounded boolean computation and the hitting sets given by Forbes and Shpilka~\cite{ForbesShpilka12a} can be seen as an algebraic analogue of the boolean pseudorandom generator (PRG) given by Nisan~\cite{Nisan92}.  First, we note that by this analogy, and the fact that subsequent work by Nisan~\cite{Nisan94} showed that the PRG of Nisan~\cite{Nisan92} can be made polynomial-time with additional space, one expects that the quasi-polynomial-time blackbox identity test (or hitting set) of Forbes and Shpilka~\cite{ForbesShpilka12a} can be made into a parallel polynomial-time whitebox identity test for ROABPs, which would bring the proof of \autoref{thm:orbit intersection} more in line with the other parts of this paper. However, the $\NC^3$ version of the Raz-Shpilka~\cite{RazShpilka05} algorithm is simpler than any modification of the Forbes-Shpilka~\cite{ForbesShpilka12a} result, we do not pursue the details here.

By this connection, it follows that one can convert the white-box PIT problem for ROABPs into a derandomization question in small-space computation (once the bit-lengths of the numbers involved are bounded).  Thus, one could avoid the algorithms of Raz-Shpilka~\cite{RazShpilka05} and Arvind, Joglekar and Srinivasan~\cite{ArvindJS09}, and simply use the algorithm of Nisan~\cite{Nisan94}.  However, this is less clean, and furthermore this booleanization cannot give similar results to \autoref{thm:sepinv}, since one cannot a priori bound the bit-length of the matrices $\vec{A}$ and $\vec{B}$.

We note here that the above result for testing intersection of orbit closures is stated in the unit-cost arithmetic model for simplicity.  At its core, the result uses linear algebra, which can be done efficiently even when the bit-lengths of the numbers are considered.  Thus, it seems likely the above algorithm is also efficient with respect to bit-lengths, but we do not pursue the details here.

\section{Equivalence of Trace of Matrix Powers, Determinants and ABPs}\label{sec:equivalent}

In this section we study the class of polynomials computed by small traces of matrix powers, to gain insight into the strength of the derandomization hypotheses that Mumuley requires for his implications regarding Noether Normalization.  In particular, we show that a polynomial can be computed as a small trace of matrix power iff it can be computed by a small ABP, as defined in \autoref{def: ABP}.

We first show that from the perspective of the hardness of derandomizing PIT, a trace of matrix power can be either defined using linear or affine forms, so that we are not losing generality in our equivalence results below, when using the affine definition.  Specifically, we want to relate the complexity of PIT for $\tr(A(\vec{x})^d)$, for an affine matrix $A(\vec{x})=A_0+\sum_{i=1}^n A_ix_i$, to the complexity of PIT of $\tr(A'(\vec{x},z)^d)$, where $A'$ is the homogenized version $A'(\vec{x},z)\eqdef A_0z+\sum_{i=1}^n A_ix_i$.  Clearly, one can reduce the homogenized linear case back to the affine case, by taking $z=1$, in both the black-box and white-box PIT model.  We now consider reducing the affine case to the linear case.  Note that this is trivial in the white-box model of PIT, as given the trace of matrix power $\tr(A(\vec{x})^d)$ we can easily construct the trace of matrix power $\tr(A'(\vec{x},z)^d)$ by replacing constants by the appropriate multiple of $z$.  The next lemma shows that these two traces are also polynomially equivalent in the black-box model.

\begin{lemma}
	\label{linervsaffine}
	Let $A(x_1,\ldots,x_n)=A_0+\sum_{i=1}^n A_ix_i$ be matrix of affine forms.  Define its homogenization $A'(\vec{x},z)=A_0z+\sum_{i=1}^n A_ix_i$.  Then for any $\vec{\alpha}$ and $\beta$, $\tr(A'(\vec{\alpha},\beta)^d)$ can be computed using $\poly(d)$ queries to $\tr(A(\vec{x})^d)$.
\end{lemma}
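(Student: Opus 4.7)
The plan is to interpolate along the homogenizing variable. Fix $\vec{\alpha}$ and view
\[
p(t) \eqdef \tr(A'(\vec{\alpha},t)^d) = \tr\left(\left(A_0 t + \sum_{i=1}^n A_i \alpha_i\right)^d\right)
\]
as a univariate polynomial in $t$ of degree at most $d$ (each entry of the $d$-th power of a matrix of linear forms in $t$ is a polynomial of degree at most $d$, and trace preserves that). So if I can evaluate $p$ at $d+1$ distinct points, I can recover $p$ and read off $p(\beta)$.

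The key identity is that $A'$ and $A$ are related by scaling: for any $t \neq 0$,
\[
A'(\vec{\alpha},t) = t\cdot A_0 + \sum_i A_i \alpha_i = t\left(A_0 + \sum_i A_i (\alpha_i/t)\right) = t\cdot A(\vec{\alpha}/t),
\]
and hence $p(t) = t^d \cdot \tr\bigl(A(\vec{\alpha}/t)^d\bigr)$. Each such evaluation is a single query to the black-box for $\tr(A(\vec{x})^d)$ at the point $\vec{\alpha}/t$, followed by multiplication by the scalar $t^d$.

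So the algorithm is: pick $d+1$ distinct nonzero scalars $t_0,\ldots,t_d$ (passing to a field extension if necessary to ensure the field is big enough, which does not affect the identity testing question), query the black-box to obtain $\tr(A(\vec{\alpha}/t_j)^d)$ for each $j$, form $p(t_j) = t_j^d \cdot \tr(A(\vec{\alpha}/t_j)^d)$, and use standard Lagrange interpolation on these $d+1$ pairs $(t_j, p(t_j))$ to recover $p$. Then output $p(\beta)$. This uses $d+1 = \poly(d)$ queries, as desired.

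The only real subtlety is the possibility that $\beta = 0$, which is exactly the ``problematic'' case where the naive substitution $\vec{x} \mapsto \vec{\alpha}/\beta$ would fail; but the interpolation approach handles this uniformly and transparently, since we never need to query the original black-box at $\beta$ itself. No lower bound on $\beta$ or on $\vec{\alpha}$ is required, and the reduction is otherwise entirely mechanical.
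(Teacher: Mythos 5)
Your proof is correct and rests on the same two ideas as the paper's: the homogeneity/scaling identity $A'(\vec{\alpha},t)=t\,A(\vec{\alpha}/t)$ for $t\ne 0$, and polynomial interpolation along the homogenizing variable. The only difference is organizational: the paper splits into cases, handling $\beta\ne 0$ with a single query via the scaling identity, and only invoking $(d+1)$-point interpolation for the boundary case $\beta=0$ (there it works with $q(y)\eqdef\tr(A(y\vec{\alpha})^d)$ and extracts $\coeff{y^d}(q)$, which is the reversal-of-coefficients view of your $p(t)=t^d q(1/t)$). Your version treats both cases uniformly by always interpolating $p(t)=\tr(A'(\vec{\alpha},t)^d)$ at $d+1$ nonzero points and then evaluating at $\beta$; this is a bit cleaner at the cost of always spending $d+1$ queries rather than $1$ in the generic case, which is immaterial for the $\poly(d)$ bound. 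Your remark about possibly passing to a field extension to get $d+1$ distinct nonzero scalars is a reasonable precaution (the paper tacitly assumes a large field, and the main application is characteristic zero), and, as you note, extending the field does not change whether $\tr(A(\vec{x})^d)$ is identically zero.
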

\begin{proof}
	\underline{$\beta\ne 0$:}  Observe that by homogeneity and linearity of the trace, we have that $\tr(A'(\vec{\alpha},\beta)^d)=\beta^d\tr(A(\vec{\alpha}/\beta)^d)$, where $\vec{\alpha}/\beta$ is the resulting of dividing $\vec{\alpha}$ by $\beta$ coordinate-wise. Thus, only 1 query is needed in this case.

	\underline{$\beta=0$:}
	Writing $y\vec{x}$ for the coordinate-wise multiplication of $y$ on $\vec{x}$, we can expand the trace of matrix power in the variable $y$, so that $\tr(A(y\vec{x})^d)=\sum_j \coeff{y^j}(\tr(A(y\vec{x})^d))$, where $\coeff{y^j}$ extracts the relevant coefficient in $y$, resulting in a polynomial in $\vec{x}$.  It follows from polynomial interpolation that in $d+1$ queries to $\tr(A(y\vec{x})^d)$ we can compute $\coeff{y^j}(\tr(A(y\vec{x})^d))$ for any $j$. Observing that $\tr(A'(\vec{\alpha},0)^d)=\coeff{y^d}(\tr(A(y\vec{x})^d))$ yields the result.
\end{proof}

Thus as $\tr(A'(\vec{x},z)^d)=0$ iff $\tr(A(\vec{x})^d)=0$, solving black-box PIT for $\tr(A'(\vec{x},z)^d)$ solves it for $\tr(A(\vec{x})^d)$, and the above lemma gives the needed query-access reduction. Thus, as the linear and affine models are equivalent with respect to PIT, we now only discuss traces of matrix powers in the affine case, and seek to show this model is computationally equivalent (not just with respect to PIT) to the ABP model. We first establish that traces of matrix powers can efficiently simulate ABPs.  To do this, we first study matrix powers and how they interact with traces.

\begin{lemma}\label{lem: power of diagonal}
	Let $x_0,\ldots,x_{d-1}$ be formally non-commuting variables, and let $R$ be any commutative ring.  Define $A(\vec{x})\in R[x_0,\ldots,x_{d-1}]^{\zr{d}\times\zr{d}}$ by
	\[A(\vec{x})_{i,j}=
		\begin{cases}
			x_i	&\text{if } j=i+1\pmod{d}\\
			0	&\text{else}
		\end{cases}
	\]
	Then $\Tr(A(\vec{x})^d)=\sum_{i\in\zr{d}}x_ix_{i+1}\cdots x_{(i-1 \bmod{d})}$.
\end{lemma}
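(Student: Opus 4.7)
The plan is to interpret $A(\vec{x})$ as the weighted adjacency matrix of a directed cycle and then use the standard walk interpretation of matrix powers, being careful about the order of the non-commuting variables.

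First I would observe that $A(\vec{x})$ has a very sparse structure: the only nonzero entries are $A(\vec{x})_{i,\,i+1 \bmod d} = x_i$ for $i \in \zr{d}$. Thus $A(\vec{x})$ is the adjacency matrix of the directed cycle $0 \to 1 \to 2 \to \cdots \to d-1 \to 0$, where the edge $i \to (i+1 \bmod d)$ carries the weight $x_i$. Next I would expand the $(i,i)$-diagonal entry of $A(\vec{x})^d$ using the formula for matrix multiplication, which is valid in the non-commutative polynomial ring $R\langle x_0,\ldots,x_{d-1}\rangle$ provided we preserve left-to-right order:
\[
\bigl(A(\vec{x})^d\bigr)_{i,i} \;=\; \sum_{k_1,\ldots,k_{d-1} \in \zr{d}} A(\vec{x})_{i,k_1}\,A(\vec{x})_{k_1,k_2}\cdots A(\vec{x})_{k_{d-1},i}.
\]

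Because $A(\vec{x})_{j,k} = 0$ unless $k = j+1 \bmod d$, the only tuple $(k_1,\ldots,k_{d-1})$ making the summand nonzero is the one tracing the unique length-$d$ closed walk from $i$ back to itself, namely $k_j = i + j \bmod d$. For that choice the product equals $x_i \, x_{i+1}\cdots x_{i-1 \bmod d}$ in this order, so
\[
\bigl(A(\vec{x})^d\bigr)_{i,i} \;=\; x_i\,x_{i+1}\cdots x_{i-1 \bmod d}.
\]
Summing over $i \in \zr{d}$ yields the claimed expression for $\Tr(A(\vec{x})^d)$.

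The only mild subtlety is that the $x_i$ are formally non-commuting, so I would be explicit that matrix multiplication must be written in left-to-right order and that the cyclic traversal starting at $i$ therefore produces exactly the word $x_i x_{i+1}\cdots x_{i-1 \bmod d}$ and not any permutation of it. Beyond this bookkeeping, the argument is a direct calculation: the sparsity of $A(\vec{x})$ collapses the sum over walks to a single term per starting vertex, and summing the diagonal over starting vertices gives the trace.
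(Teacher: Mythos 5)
Your proof is correct and takes essentially the same approach as the paper, which in fact gives both the adjacency-matrix/walk-counting argument and the explicit expansion of $(A^d)_{i,i}$ as alternate proofs; you have merged them into one argument. The care you take about the left-to-right order of the non-commuting variables matches the paper's intent.
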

\begin{proof}
	The matrix $A$ defines an adjacency matrix on a $d$-vertex graph, which is the directed cycle with weights $x_0,x_1,\ldots,x_{d-1}$ ordered cyclically. Raising $A$ to the $d$-power corresponds to the adjacency matrix for the length-$d$ walks on the length-$d$ cycle.  The only such walks are single traversals of the cycle, starting and ending at some vertex $i$, and these walks have weight $x_ix_{i+1}\cdots x_{(i-1\bmod{d})}$.  Taking the trace of $A^d$ corresponds to summing the weights of these walks, giving the desired formula.
\end{proof}
\begin{proof}[Alternate proof of \autoref{lem: power of diagonal}]
	By definition, \[(A^d)_{i,i}= \sum_{i_1,\ldots,i_{d-1}}A_{i,i_1} A_{i_1,i_2}\cdots A_{i_{d-1},i}.\] It follows that the only nonzero contribution is when $i_j=i_{j-1}+1$ for all $j$, when defining $i_0=i_d=i$ and working modulo $d$, and that this yields $x_ix_{i+1}\cdots x_{(i-1\bmod{d})}$. The claim follows by summing over $i$.
\end{proof}

As this result holds even when the variables $x_i$ are non-commuting, we can use this lemma over the ring of matrices and thus embed matrix multiplication (over varying matrices) to matrix powering (of a single matrix).

\begin{corollary}\label{cor: trace of power}
	Let $R$ be a commutative ring, and let $M_1,\ldots,M_d\in R^{\zr{n}\times\zr{n}}$ be matrices.  Define the larger matrix $A\in R^{\zr{nd}\times\zr{nd}}$ by treating $A$ as a block matrix in $(R^{\zr{n}\times\zr{n}})^{\zr{d}\times\zr{d}}$, so that
	\[
		A_{i,j}=
			\begin{cases}
				M_i	&\text{if } j=i+1\pmod{d}\\
				0	&\text{else}
			\end{cases}
	\]
	Then $\tr(A^d)=d\tr(M_1\cdots M_d)$.
\end{corollary}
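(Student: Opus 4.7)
The plan is to reduce \autoref{cor: trace of power} directly to \autoref{lem: power of diagonal} by viewing the block matrix $A$ as the specialization of the formal matrix from that lemma, with the non-commuting variables $x_i$ replaced by the matrices $M_i$. This specialization is legitimate because the lemma is stated for \emph{formally non-commuting} variables, so the identity it establishes survives passage to any associative (not necessarily commutative) ring, and in particular to $R^{\zr{n}\times\zr{n}}$. The block structure of $A$ is exactly the pattern of the matrix $A(\vec{x})$ from the lemma, with matrix-valued entries in place of variable-valued ones.

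More concretely, I would first expand the $(i,i)$-block of $A^d$ via block matrix multiplication,
\[
(A^d)_{i,i} \;=\; \sum_{i_1,\ldots,i_{d-1}\in\zr{d}} A_{i,i_1} A_{i_1,i_2}\cdots A_{i_{d-1},i},
\]
and repeat the case analysis from the alternate proof of \autoref{lem: power of diagonal}: since $A_{a,b}$ is zero unless $b \equiv a+1 \pmod d$, the only surviving term in the sum is the one corresponding to the cyclic path $i_k = i+k \pmod d$, giving the (non-commutative) product $M_{i}M_{i+1}\cdots M_{i-1 \bmod d}$. Taking the ordinary trace of the block matrix $A^d$ as the sum of traces of its diagonal blocks then yields
\[
\tr(A^d) \;=\; \sum_{i\in\zr{d}} \tr\!\left(M_i M_{i+1}\cdots M_{i-1 \bmod d}\right).
\]

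Finally, each summand is a cyclic rotation of the single word $M_1 M_2 \cdots M_d$ (reading indices mod $d$ with the natural convention). The cyclicity of the trace, $\tr(XY)=\tr(YX)$, applied repeatedly, shows that every one of these $d$ traces equals $\tr(M_1\cdots M_d)$, and summing gives $d\,\tr(M_1\cdots M_d)$ as desired.

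The only real obstacle is bookkeeping: one must check that the index conventions in the lemma translate cleanly when matrices replace variables (in particular that the block indices and the matrix subscripts line up modulo $d$), and one must justify that the polynomial identity derived in the commutative-ring-with-non-commuting-variables setting of \autoref{lem: power of diagonal} specializes correctly to the associative ring $R^{\zr{n}\times\zr{n}}$. No new combinatorial or algebraic ideas are needed beyond the alternate proof of the previous lemma plus one invocation of trace cyclicity.
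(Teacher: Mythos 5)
Your proposal is correct and takes essentially the same route as the paper: both invoke \autoref{lem: power of diagonal} by specializing the non-commuting variables to the block entries $M_i$, then observe that the trace of the block matrix is the sum of traces of the diagonal blocks, and finally use cyclicity of trace to collapse the $d$ cyclic rotations of $M_1\cdots M_d$ into $d\,\tr(M_1\cdots M_d)$. The paper's proof is more terse, but the underlying argument is identical.
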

\begin{proof}
	The properties of block-matrix multiplication imply that we can treat the $M_i$ as lying in the non-commutative ring of matrices, and thus we apply \autoref{lem: power of diagonal} to the trace of $A^d$ to see that
	\begin{align*}
		\tr(A^d)
		&=\sum_{i=1}^d\tr(M_iM_{i+1}\cdots M_{(i-1\bmod{d})})\\
		&=d\tr(M_1M_2\cdots M_{d})
	\end{align*}
	where the second equality uses that trace is cyclic.
\end{proof}

This lemma shows that the trace of a matrix power can embed the trace of a matrix product (up to the factor $d$), and \autoref{roabpmatrix} shows that traces of matrix products capture ABPs.  This leads to the equivalence of traces of matrix powers and ABPs.

\begin{theorem}\label{thm:ABP STIT}
	Let $\F$ be a field. If a polynomial $f$ is computable by a width $w$, depth $d$ ABP, then for any $d'\ge d$ such that $\chara(\F)\nmid d'$, $f$ can be computed by a width $wd'$, depth $d'$ trace of matrix power. In particular, $d'\in\{d,d+1\}$ suffices.
	
	Conversely, if a polynomial $f$ is computable by a width $w$, depth $d$ trace of matrix power, then $f$ can also be computed by a width $w^2$, depth $d$ ABP.
\end{theorem}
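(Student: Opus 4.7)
The plan is to prove the two directions separately, using the results of \autoref{sec:ROABP} to pass between ABPs and matrix products, and \autoref{cor: trace of power} to pass between matrix products and matrix powers.

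For the forward direction (ABP $\Rightarrow$ trace of matrix power), begin with an ABP of width $w$ and depth $d$ computing $f$. By \autoref{roabpmatrix} there exist $N_1, \ldots, N_d \in \F[\vec{x}]^{\zr{w} \times \zr{w}}$ with affine entries such that $N_1 \cdots N_d$ has a unique nonzero entry $f$ at position $(0,0)$; in particular $\tr(N_1 \cdots N_d) = f$. Applying \autoref{cor: trace of power} to $(N_1, \ldots, N_d)$ produces a block-cyclic matrix $A \in \F[\vec{x}]^{\zr{wd} \times \zr{wd}}$ whose entries are either $0$ or entries of the $N_i$ (so still affine), and which satisfies $\tr(A^d) = d \cdot f$. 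When $\chara(\F) \nmid d$, pre-scaling $N_1$ by $1/d$ makes $\tr(A^d) = f$ exactly, yielding a width $wd$, depth $d$ trace of matrix power. When $\chara(\F) \mid d$ we instead take $d' = d+1$: pad with $N_{d+1} = \Id_w$ (constant, hence affine) so that $\tr(N_1 \cdots N_{d+1}) = f$ is preserved, and apply \autoref{cor: trace of power} with $d+1$ matrices. Since $\gcd(d, d+1) = 1$, the characteristic cannot divide both, so $\chara(\F) \nmid (d+1)$ and we may scale by $1/(d+1)$, producing a width $w(d+1)$, depth $d+1$ trace of matrix power.

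For the reverse direction (trace of matrix power $\Rightarrow$ ABP), suppose $f = \tr(A(\vec{x})^d)$ where $A$ is $\zr{w}\times\zr{w}$ with affine entries, and write $f = \sum_{i \in \zr{w}} (A^d)_{i,i}$. For each $i$, conjugating by the permutation swapping indices $0$ and $i$ (as done in the proof of \autoref{diffofinv}) shows that $(A^d)_{i,i}$ is the $(0,0)$ entry of a product of $d$ matrices each of size $\zr{w} \times \zr{w}$ with affine entries, which by \autoref{roabpmatrix} is computable by a depth $d$, width $\le w$ ABP. Summing these $w$ ABPs via \autoref{addROABP} produces a depth $d$, width $\le w^2$ ABP for $f$.

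The only subtlety is the characteristic condition in the forward direction, which arises from the factor of $d$ introduced by \autoref{cor: trace of power}. This is the expected ``main obstacle'', but it is handled uniformly by the coprimality of $d$ and $d+1$; no other step requires real work beyond composing \autoref{roabpmatrix}, \autoref{addROABP}, and \autoref{cor: trace of power}.
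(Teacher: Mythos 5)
Your proof is correct and takes essentially the same route as the paper: both directions use \autoref{roabpmatrix} to pass between ABPs and matrix products, \autoref{cor: trace of power} together with a $1/d'$ rescaling of one factor to embed a product as a power, and \autoref{addROABP} to assemble the trace in the converse direction. You only spell out $d'\in\{d,d+1\}$, whereas the theorem's first sentence asserts the result for every $d'\ge d$ with $\chara(\F)\nmid d'$; this follows by the identical identity-padding argument you already use, which is how the paper phrases it.
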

\begin{proof}
	\underline{ABP $\implies$ trace of matrix power:} By \autoref{roabpmatrix} there are $\zr{w}\times\zr{w}$ matrices $M_1,\ldots,M_d$ whose entries are affine forms in $\vec{x}$, such that $f(\vec{x})=\tr(M_1\cdots M_d)$.  For $i\in[d']$, define new matrices $M'_i$ such that for $i=1$,  $M'_i\eqdef M_i/d'$,  for $1<i\le d$, we set $M'_i\eqdef M_i$, and for $i>d$, define $M'_i\eqdef \Id_w$, the $\zr{w}\times\zr{w}$ identity matrix. By linearity of the trace, it follows that $f(\vec{x})=d'\cdot \tr(M_1\cdots M_d M_{d+1}\cdots M_{d'})$. Noting that the $M'_i$ have entries that are all affine forms, \autoref{cor: trace of power} implies that there is an $\zr{wd'}\times \zr{wd'}$ matrix $A(\vec{x})$ whose entries are affine forms in $\vec{x}$, such that $\Tr(A(\vec{x})^{d'})=d'\cdot \Tr(M_1\cdots M_{d'}) = f(\vec{x})$, as desired.  Noting that $d$ and $d+1$ cannot both be divisible by the characteristic of $\F$ completes the claim.

	\underline{trace of matrix power $\implies$ ABP:}  Suppose $f(\vec{x})=\tr(A(\vec{x})^d)$, where $A(\vec{x})$ is a $\zr{w}\times\zr{w}$ matrix of affine forms.  Note then that for each $i$, the $(i,i)$-th entry of $A(\vec{x})^d$ is computable by a width $w$, depth $d$ ABP, as established by \autoref{roabpmatrix}, after applying the suitable permutation of indices.  As the trace is the summation over $i$ of these functions, we can apply \autoref{addROABP} to get the result.
\end{proof}

Thus the above shows that, up to polynomial factors in size, ABPs and traces of matrix powers compute the same polynomials.  We note that there is also an equivalent computational model, defined by the determinant, which we mention for completeness.

\begin{definition}
	A polynomial $f(x_1,\ldots,x_n)$ is a width $w$ \textbf{projection of a determinant} if there exists a matrix $A(\vec{x})\in\F[x_1,\ldots,x_n]^{\zr{w}\times\zr{w}}$ whose entries are affine functions in $\vec{x}$ such that $f(\vec{x})=\det(A(\vec{x}))$. The \textbf{size} of a projection of a determinant is $nw$.
\end{definition}

Valiant~\cite{Valiant79a} (see also \cite{MalodPortier08}) showed any small ABP can be simulated by a small projection of a determinant.  Conversely, phrased in the language of this paper, Berkowitz~\cite{Berkowitz84} gave a small ABP for computing a small projection of a determinant.  Thus, the projection of determinant model is also equivalent to the ABP model.  We summarize these results in the following theorem.

\begin{theorem}\label{thm: equivalence}
	The computational models of algebraic branching programs, traces of matrix powers, and projections of determinants are equivalent, up to polynomial blow up in size.
\end{theorem}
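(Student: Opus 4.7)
The plan is to bootstrap from \autoref{thm:ABP STIT}, which already establishes the polynomial-size equivalence between algebraic branching programs and traces of matrix powers. It therefore suffices to insert the projection-of-determinant model into this equivalence chain via mutual polynomial-size simulations with ABPs; transitivity will then deliver the three-way statement.

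For the direction ABP $\Rightarrow$ projection of determinant, I would invoke Valiant's classical construction (as reworked by Malod and Portier~\cite{MalodPortier08}): given a width-$w$, depth-$d$ ABP computing $f$, the standard recipe produces an affine matrix $A(\vec{x})$ of size $\poly(w,d)$ such that $\det(A(\vec{x})) = \pm f(\vec{x})$. Concretely, one adds weight-$1$ self-loops at every internal vertex of the ABP and a weight-$1$ back-edge from sink to source, so that the only permutations contributing non-trivially to the Leibniz expansion of $\det$ correspond to source-sink paths (together with trivial self-loop cycles on the unused vertices), making the determinant equal to the path polynomial of the ABP up to a global sign. For the reverse direction, projection of determinant $\Rightarrow$ ABP, I would appeal to Berkowitz's algorithm, which realizes the determinant of a $\zr{w}\times\zr{w}$ matrix as a polynomial-size skew arithmetic circuit, equivalently a polynomial-size ABP, in the matrix entries; substituting the affine forms $A(\vec{x})_{i,j}$ for those entries yields a polynomial-size ABP computing $\det(A(\vec{x}))$.

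Chaining these two simulations with the ABP $\Leftrightarrow$ trace-of-matrix-power equivalence of \autoref{thm:ABP STIT} produces the stated three-way equivalence, with only polynomial blow-up at each step. There is no genuine obstacle here; the statement amounts to a bookkeeping assembly of \autoref{thm:ABP STIT} with the classical Valiant and Berkowitz constructions. If anything, the only subtlety worth flagging is that Valiant's construction produces the determinant only up to sign and Berkowitz's construction involves computing the characteristic polynomial rather than the determinant directly, so one should verify that both reductions remain correct over arbitrary fields of interest (which they do).
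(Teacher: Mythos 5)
Your proposal matches the paper's argument: both combine \autoref{thm:ABP STIT} (ABP $\Leftrightarrow$ trace of matrix power) with Valiant's simulation of ABPs by determinant projections and Berkowitz's ABP for the determinant, then conclude by transitivity. The extra detail you give on the self-loop/back-edge gadget and the sign issue is correct but not something the paper spells out.
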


In particular, this implies that derandomizing black-box PIT is equally hard for all three of these computational models.

\section{Hasse Derivatives}\label{sec:hasse}

\newcommand{\ply}{R[\vec{x}]}
\newcommand{\partialu}{\partial_{\vec{u}}}
\newcommand{\partialv}{\partial_{\vec{v}}}
\newcommand{\hasse}{\partial}
\newcommand{\hasseu}{\hasse_{\vec{u}^k}}
\newcommand{\hassev}{\hasse_{\vec{v}^\ell}}

In this section we define Hasse derivatives, which are a variant of (formal) partial derivatives but work better over finite fields.  For completeness, we will derive various properties of Hasse derivatives. We start with the definition.

\begin{definition}
	Let $R$ be a commutative ring, and $\ply$ be the ring of $n$-variate polynomials.  For a vector $\vec{u}\in R^n$ and $k\ge 0$, define $\hasseu(f):\ply\to\ply$, the \textbf{$k$-th Hasse derivative of $f$ in direction $\vec{u}$}, by $\hasseu(f)=\coeff{y^k}(f(\vec{x}+\vec{u}y))\in\ply$, where $\vec{x}+\vec{u}y$ is defined as the vector whose $i$-th coordinate in $x_i+u_iy$.

	If $\vec{u}=\vec{e}_i$, then we use $\hasse_{x_i^k}$ to denote $\hasse_{\vec{e}_i^k}$, and will call this \textbf{the $k$-th Hasse derivative with respect to the variable $x_i$}. For $\vec{i}\in\N^n$, we will define $\hasse_{\vec{x}^{\vec{i}}}\eqdef\hasse_{x_1^{i_1}}\cdots\hasse_{x_n^{i_n}}$.
\end{definition}

Note that for $k=1$, this is usual (formal) partial derivative. We now use this definition to establish some basic properties of the Hasse derivative.  In particular, the below commutativity property shows that the definition of $\hasse_{\vec{x}^{\vec{i}}}$ is not dependent on the order of the variables. Note that while Hasse derivatives are linear operators on $R[\vec{x}]$, they are not linear in the direction $\vec{u}$ of the derivative, and so several of these properties will be stated for more than two terms.

\begin{lemma}\label{hassebasic}
	For $f,g\in\ply$, $\vec{u},\vec{v}\in R^n$, $\alpha,\beta\in R$, and $k,\ell,k_1,\ldots,k_m\ge 0$, then
	\begin{enumerate}
		\item $\hasse_{\vec{u}^0}(f)=f$\label{hassebasic:ident}
		\item $\hasseu(\alpha f+\beta g)=\alpha\hasseu(f)+\beta\hasseu(g)$\label{hassebasic:linearf}
		\item $f(\vec{x}+\vec{u}_1y_1+\cdots+\vec{u}_my_m)=\sum_{k_1,\ldots,k_m\ge 0}\hasse_{\vec{u}_m^{k_1}}\cdots\hasse_{\vec{u}_m^{k_m}}(f)y_1^{k_1}\cdots y_m^{k_m}$\label{hassebasic:hassetaylor}
		\item $\hasse_{\vec{u}_1^{k_1}}\cdots\hasse_{\vec{u}_m^{k_m}}(f)=\coeff{y_1^{k_1}\cdots y_m^{k_m}}(f(\vec{x}+\vec{u}_1y_1+\cdots+\vec{u}_my_m))$\label{hassebasic:hassecoeff}
		\item $\hasseu\hassev(f)=\hassev\hasseu(f)$\label{hassebasic:commut}
		\item $\hasse_{\left(\sum_{j=1}^m\alpha_j\vec{u}_j\right)^k}(f)=\sum_{k_1+\cdots+k_m=k} \left(\prod_j\alpha_j^{k_j}\right) \hasse_{\vec{u}_1^{k_1}}\cdots\hasse_{\vec{u}_m^{k_m}}(f)$\label{hassebasic:directionlinearity}
		\item $\hasse_{\vec{u}^{k_1}}\cdots\hasse_{\vec{u}^{k_m}}(f)=\binom{k_1+\cdots+k_m}{k_1,\ldots,k_m}\hasse_{\vec{u}^{k_1+\cdots+k_m}}$\label{hassebasic:hassehasse}
	\end{enumerate}
\end{lemma}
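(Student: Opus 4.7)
The plan is to prove the seven items in order, with Item~$(3)$ being the technical core and everything else a formal consequence. Items $(1)$ and $(2)$ follow immediately from unpacking the definition $\hasse_{\vec{u}^k}(f)=\coeff{y^k}(f(\vec{x}+\vec{u}y))$: setting $k=0$ gives the $y^0$-coefficient of $f(\vec{x}+\vec{u}y)$, which is $f$ itself, and both substitution and coefficient extraction are $R$-linear in $f$.

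The heart of the argument is the multivariate Taylor identity in Item~$(3)$, which I would prove by induction on $m$. The base case $m=1$ is just the definition rewritten: viewed as a polynomial in $y$ with coefficients in $\ply$, one has $f(\vec{x}+\vec{u}y)=\sum_{k}\hasse_{\vec{u}^k}(f)\,y^k$. For the inductive step, I would set $g(\vec{x})\eqdef f(\vec{x}+\vec{u}_m y_m)$, viewed inside the enlarged ring $R[y_m][\vec{x}]$, so that by the $m=1$ case, $g(\vec{x})=\sum_{k_m}\hasse_{\vec{u}_m^{k_m}}(f)\,y_m^{k_m}$. Applying the inductive hypothesis to $g$ over $R[y_m]$ with the remaining $m-1$ directions, and then pushing the sum $\sum_{k_m}$ through the outer Hasse derivatives via Item~$(2)$, yields the claimed expansion. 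The only subtlety is tracking that the definition of a Hasse derivative commutes with extension of scalars from $R$ to $R[y_m]$, but this is automatic because only substitution and coefficient extraction are involved.

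With $(3)$ established, the rest is formal. Item~$(4)$ is just the observation that $\coeff{y_1^{k_1}\cdots y_m^{k_m}}$ picks off the indicated summand of $(3)$. Item~$(5)$ follows by applying $(4)$ to $f(\vec{x}+\vec{u}y_1+\vec{v}y_2)$ and noting that $\coeff{y_1^k y_2^\ell}$ is manifestly symmetric in its indices, so $\hasseu\hassev(f)$ and $\hassev\hasseu(f)$ both equal this single polynomial. For Item~$(6)$, substitute $y_j\mapsto \alpha_j y$ in $(3)$, giving $f(\vec{x}+(\sum_j \alpha_j\vec{u}_j)y)=\sum_{\vec{k}}\bigl(\prod_j \alpha_j^{k_j}\bigr)\,\hasse_{\vec{u}_1^{k_1}}\cdots\hasse_{\vec{u}_m^{k_m}}(f)\,y^{k_1+\cdots+k_m}$, and extract the $y^k$ coefficient. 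Finally, for Item~$(7)$, specialize $(3)$ with all $\vec{u}_j=\vec{u}$ to get $f(\vec{x}+\vec{u}(y_1+\cdots+y_m))=\sum_{\vec{k}}\hasse_{\vec{u}^{k_1}}\cdots\hasse_{\vec{u}^{k_m}}(f)\,y_1^{k_1}\cdots y_m^{k_m}$, while on the other hand the single-variable expansion combined with the multinomial theorem rewrites the same polynomial as $\sum_n \hasse_{\vec{u}^n}(f)\sum_{k_1+\cdots+k_m=n}\binom{n}{k_1,\ldots,k_m}\,y_1^{k_1}\cdots y_m^{k_m}$; matching coefficients of $y_1^{k_1}\cdots y_m^{k_m}$ produces the multinomial coefficient on the right.

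The only genuinely nontrivial step is Item~$(3)$; I expect it to be the main (and really only) obstacle, and the issue is bookkeeping rather than content --- one must be careful to distinguish the ground ring $R$ from the enlarged ring $R[y_m]$ used during the inductive step, so that applying the inductive hypothesis is legitimate. Once that is done, items $(4)$--$(7)$ amount to rearranging sums and matching coefficients, with no further algebraic input beyond the multinomial theorem.
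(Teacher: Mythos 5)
Your proposal is correct and follows essentially the same route as the paper: definition-unpacking for $(1)$--$(2)$, induction on $m$ for $(3)$, and formal consequences (substitution and coefficient-matching, plus the multinomial theorem for $(7)$) for the rest. The only cosmetic difference is in the inductive step of $(3)$, where you peel off $\vec{u}_m$ and invoke the inductive hypothesis over $R[y_m]$, while the paper peels off $\vec{u}_1$ via the substitution $\vec{x}\leftarrow\vec{x}+\vec{u}_1y_1$; both are valid symmetric variants of the same argument.
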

\begin{proof}
	\uline{\eqref{hassebasic:ident}:} This is trivial.

	\uline{\eqref{hassebasic:linearf}:} This follows from the linearity of $\coeff{y^k}(\cdot)$, and so
	\[
		\coeff{y^k}(\alpha f(\vec{x}+\vec{u}y)+\beta g(\vec{x}+\vec{u}y))
		=
		\alpha \coeff{y^k}(f(\vec{x}+\vec{u}y))+\beta \coeff{y^k}(g(\vec{x}+\vec{u}y)))
	\]

	\uline{\eqref{hassebasic:hassetaylor}:} This will be proven by induction on $m$.

	\uline{$m=1$:} This is the definition of the Hasse derivative.

	\uline{$m>1$:}  By induction, we have that
	\[f(\vec{x}+\vec{u}_2y_2+\cdots+\vec{u}_my_m)=\sum_{k_2,\ldots,k_m\ge 0} [\hasse_{\vec{u}_2^{k_2}}\cdots\hasse_{\vec{u}_m^{k_m}}(f)](\vec{x})y_2^{k_2}\cdots y_m^{k_m}.\]
	Making the substitution $\vec{x}\leftarrow\vec{x}+\vec{u}_1y_1$ we obtain
	\[f(\vec{x}+\vec{u}_1y_1+\vec{u}_2y_2+\cdots+\vec{u}_my_m)=\sum_{k_2,\ldots,k_m\ge 0} [\hasse_{\vec{u}_2^{k_2}}\cdots\hasse_{\vec{u}_m^{k_m}}(f)](\vec{x}+\vec{u}_1y_1)y_2^{k_2}\cdots y_m^{k_m}\]
	and so by expanding $\hasse_{\vec{u}_2^{k_2}}\cdots\hasse_{\vec{u}_m^{k_m}}(f)$ into its Hasse derivatives, we obtain
	\[f(\vec{x}+\vec{u}_1y_1+\vec{u}_2y_2+\cdots+\vec{u}_my_m)=\sum_{k_1,k_2,\ldots,k_m\ge 0} [\hasse_{\vec{u}_1^{k_1}}\hasse_{\vec{u}_2^{k_2}}\cdots\hasse_{\vec{u}_m^{k_m}}(f)](\vec{x})y_1^{k_1}y_2^{k_2}\cdots y_m^{k_m}\]
	as desired.

	\uline{\eqref{hassebasic:hassecoeff}:} This is a restatement of \eqref{hassebasic:hassetaylor}.
	
	\uline{\eqref{hassebasic:commut}:} This follows immediately from \eqref{hassebasic:hassecoeff}, taking $m=2$, as $\coeff{z^\ell y^k}(f(\vec{x}+\vec{v}z+\vec{u}y))=\coeff{y^kz^\ell}(f(\vec{x}+\vec{u}y+\vec{v}z))$.

	\uline{\eqref{hassebasic:directionlinearity}:} By \eqref{hassebasic:hassetaylor} we have that
	\[f(\vec{x}+\vec{u}_1y_1+\cdots+\vec{u}_my_m)=\sum_{k_1,\ldots,k_m\ge 0} [\hasse_{\vec{u}_1^{k_1}}\cdots\hasse_{\vec{u}_m^{k_m}}(f)](\vec{x})y_1^{k_1}\cdots y_m^{k_m}\]
	and applying the substitution $y_j\leftarrow \alpha_jy$ we obtain
	\[f(\vec{x}+(\alpha_1\vec{u}_1+\cdots+\alpha_m\vec{u}_m)y)=\sum_{k_1,\ldots,k_m\ge 0} \alpha_1^{k_1}\ldots\alpha_m^{k_m}[\hasse_{\vec{u}_1^{k_1}}\cdots\hasse_{\vec{u}_m^{k_m}}(f)](\vec{x})y^{k_1+\cdots+k_m}\]
	and thus taking the coefficient of $y^k$ yields the result.

	\uline{\eqref{hassebasic:hassehasse}:} By \eqref{hassebasic:hassecoeff}, we see that
	\begin{align*}
		\hasse_{\vec{u}^{k_1}}\cdots\hasse_{\vec{u}^{k_m}}(f)=
		&=\coeff{y_1^{k_1}\cdots y_m^{k_m}}(f(\vec{x}+\vec{u}y_1+\cdots+\vec{u}y_m))\\
		&=\coeff{y_1^{k_1}\cdots y_m^{k_m}}(f(\vec{x}+\vec{u}(y_1+\cdots+y_m)))\\
		&=\coeff{y_1^{k_1}\cdots y_m^{k_m}}\left(\sum_{k}\hasseu(f)(\vec{x})\cdot(y_1+\cdots+y_m)^k\right)\\
		&=\sum_{k}\hasseu(f)(\vec{x})\cdot\coeff{y_1^{k_1}\cdots y_m^{k_m}}\left((y_1+\cdots+y_m)^k\right)\\
		&=\binom{k_1+\cdots+k_m}{k_1,\ldots,k_m}\hasse_{\vec{u}^{k_1+\cdots+k_m}}(f)
		\qedhere
	\end{align*}
\end{proof}

We now recover the action of a partial derivative on a monomial.

\begin{lemma}
	\label{hassemonomial}
	For any $\ell\in[n]$, $k\ge 0$, and $i_\ell\ge 0$,
		\[
		\hasse_{x_\ell^k}(x_1^{i_1}\cdots x_\ell^{i_\ell}\cdots x_n^{i_n})
		=
		\binom{i_\ell}{k} x_1^{i_1}\cdots x_{\ell-1}^{i_{\ell-1}}x_\ell^{i_\ell-k}x_{\ell+1}^{i_{\ell+1}}\cdots x_n^{i_n}
		\]
\end{lemma}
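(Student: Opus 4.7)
The plan is to unwind the definition of the Hasse derivative on this single monomial and apply the binomial theorem. Write $f(\vec{x}) = x_1^{i_1}\cdots x_n^{i_n}$. By definition, $\hasse_{x_\ell^k}(f) = \coeff{y^k}(f(\vec{x} + \vec{e}_\ell y))$, and the substitution $\vec{x} \mapsto \vec{x} + \vec{e}_\ell y$ only affects the $\ell$-th coordinate, replacing $x_\ell$ with $x_\ell + y$. Thus
\[
f(\vec{x} + \vec{e}_\ell y) = x_1^{i_1}\cdots x_{\ell-1}^{i_{\ell-1}}(x_\ell + y)^{i_\ell} x_{\ell+1}^{i_{\ell+1}}\cdots x_n^{i_n}.
\]

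Next I would expand $(x_\ell + y)^{i_\ell}$ by the binomial theorem as $\sum_{j=0}^{i_\ell} \binom{i_\ell}{j} x_\ell^{i_\ell - j} y^j$, so that
\[
f(\vec{x} + \vec{e}_\ell y) = \sum_{j=0}^{i_\ell} \binom{i_\ell}{j} x_1^{i_1}\cdots x_{\ell-1}^{i_{\ell-1}} x_\ell^{i_\ell - j} x_{\ell+1}^{i_{\ell+1}}\cdots x_n^{i_n} \, y^j.
\]
Extracting the coefficient of $y^k$ via $\coeff{y^k}(\cdot)$ yields exactly the claimed formula, with the understanding that $\binom{i_\ell}{k} = 0$ when $k > i_\ell$ (matching the fact that $y^k$ does not appear in the expansion in that case).

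There is no real obstacle here: the statement is essentially the definitional content of the Hasse derivative applied to a single monomial, and the binomial theorem does all the work. The only minor point to keep track of is that the binomial coefficient is understood to vanish for $k > i_\ell$, which is why the statement is valid without any case analysis on the size of $k$ relative to $i_\ell$.
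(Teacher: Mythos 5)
Your proof is correct and follows essentially the same approach as the paper: unwind the definition, note that the substitution only touches the $\ell$-th coordinate, apply the binomial theorem, and extract the $y^k$ coefficient. The paper reduces to the case $\ell=1$ by symmetry while you work directly with general $\ell$, but this is a cosmetic difference.
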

\begin{proof}
	We do the case where $\ell=1$, as the general case is symmetric.  Thus we want to understand the coefficient of $y^k$ in $(x_1+y)^{i_1}x_2^{i_2}\ldots x_n^{i_n}$. The binomial theorem tells us the coefficient of $y^k$ in $(x_1+y)^{i_1}$ is $\binom{i_1}{k}x_1^{i_1-k}$ (even in the case that $i_1=0$, interpreted correctly).  Plugging this into the rest of the monomial yields the result.
\end{proof}

We can now use the above properties to establish the product rule for Hasse derivatives.

\begin{lemma}[Product Rule]
	For $f,g\in\ply$, $\vec{u}\in R^n$ and $k\ge 0$, \[\hasseu(fg)=\sum_{i+j=k}\hasse_{\vec{u}^i}(f)\hasse_{\vec{u}^j}(g)\]
\end{lemma}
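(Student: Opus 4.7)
The plan is to prove the product rule by working directly with the defining coefficient characterization of the Hasse derivative, rather than induction on degree or anything more intricate. This route is natural because the Hasse derivative of $f$ in direction $\vec{u}$ is just the coefficient of $y^k$ in the formal substitution $f(\vec{x}+\vec{u}y)$, and substitution commutes with multiplication of polynomials.

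Concretely, I would start from
\[
\hasseu(fg) \;=\; \coeff{y^k}\bigl((fg)(\vec{x}+\vec{u}y)\bigr) \;=\; \coeff{y^k}\bigl(f(\vec{x}+\vec{u}y)\,g(\vec{x}+\vec{u}y)\bigr),
\]
using that substitution is a ring homomorphism $R[\vec{x}] \to R[\vec{x}][y]$. Then I would invoke \autoref{hassebasic}\eqref{hassebasic:hassetaylor} (the $m=1$ case, which is just the definition written as a Taylor-type expansion) to expand each factor:
\[
f(\vec{x}+\vec{u}y) \;=\; \sum_{i\ge 0} \hasse_{\vec{u}^i}(f)\, y^i, \qquad
g(\vec{x}+\vec{u}y) \;=\; \sum_{j\ge 0} \hasse_{\vec{u}^j}(g)\, y^j.
\]

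Multiplying these two power series in $y$ (with coefficients in $\ply$) gives
\[
f(\vec{x}+\vec{u}y)\,g(\vec{x}+\vec{u}y) \;=\; \sum_{i,j\ge 0} \hasse_{\vec{u}^i}(f)\,\hasse_{\vec{u}^j}(g)\, y^{i+j},
\]
and extracting the coefficient of $y^k$ produces exactly $\sum_{i+j=k}\hasse_{\vec{u}^i}(f)\,\hasse_{\vec{u}^j}(g)$, completing the proof.

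There is no real obstacle here: the argument is a two-line calculation once one recognizes that the Hasse derivative is built to make this kind of manipulation transparent, which is precisely why Hasse derivatives behave well in positive characteristic (no factorials appear, so no denominators can vanish). The only thing to be careful about is making sure the use of \eqref{hassebasic:hassetaylor} is legitimate even when $f$ or $g$ has high degree — but this is fine because the sums over $i$ and $j$ are finite (truncated by the degrees of $f$ and $g$), so the product and the coefficient extraction are purely formal polynomial operations over the commutative ring $R$.
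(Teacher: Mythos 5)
Your proof is correct and is essentially identical to the paper's own argument: expand $(fg)(\vec{x}+\vec{u}y)=f(\vec{x}+\vec{u}y)\,g(\vec{x}+\vec{u}y)$ as a product of power series in $y$ whose coefficients are the Hasse derivatives, then extract the coefficient of $y^k$. The remarks about substitution being a ring homomorphism and the sums being finite are correct, but the paper treats these as implicit.
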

\begin{proof}
	\begin{align*}
		(fg)(\vec{x}+\vec{u}y)
		&=f(\vec{x}+\vec{u}y)g(\vec{x}+\vec{u}y)\\
		&=\left(\sum_i \hasse_{\vec{u}^i}(f)y^i\right) \left(\sum_j \hasse_{\vec{u}^j}(g)y^j\right)\\
		&=\sum_k\sum_{i+j=k} \hasse_{\vec{u}^i}(f)\hasse_{\vec{u}^j}(g)y^k
	\end{align*}
	and result follows by taking the coefficient of $y^k$.
\end{proof}

We now will establish the chain rule for Hasse derivatives.  As Hasse derivatives necessarily take multiple derivatives all at once, the chain rule we derive will be more complicated than the usual chain rule for (formal) partial derivatives, which was only for a single partial derivative.  This formula, and its variants, are sometimes called Fa\`{a} di Bruno's formula. The below formula is written with vector exponents, as explained in \autoref{notation}, and the $\hasse$ operators applied to vectors of polynomials are defined coordinate-wise.

\begin{lemma}[Chain Rule]
	\label{hassechainrule}
	For $f\in\ply$ and $g_1,\ldots,g_n\in R[\vec{y}]$,
	\begin{multline*}
		\hasseu(f(g_1,\ldots,g_n))
		=\sum_{\sum_{j=1}^k j|\vec{\ell}_j|=k}
			\left(\prod_{j=1}^{k}\left[\hasse_{\vec{u}^j}(\vec{g})(\vec{y})\right]^{\vec{\ell}_{j}}\right)
			\binom{\vec{\ell}_{1}+\cdots+\vec{\ell}_{k}}{\vec{\ell}_{1},\ldots,\vec{\ell}_{k}}\left[\hasse_{\vec{x}^{\sum_{j=1}^k\vec{\ell}_j}}(f)\right](\vec{g}(\vec{y}))
	\end{multline*}
\end{lemma}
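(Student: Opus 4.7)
The plan is to evaluate $\hasseu(f(\vec{g}))(\vec{y}) = \coeff{w^k}\bigl(f(\vec{g}(\vec{y}+\vec{u}w))\bigr)$ by expanding the composition in two stages and then reading off the coefficient of $w^k$. First I would apply the single-direction Hasse Taylor formula of \autoref{hassebasic} to each coordinate of $\vec{g}$, writing
\[g_i(\vec{y}+\vec{u}w) = g_i(\vec{y}) + h_i(w), \qquad h_i(w) \eqdef \sum_{j\ge 1}[\hasse_{\vec{u}^j}(g_i)(\vec{y})]\,w^j,\]
so that each $h_i(w)$ has no constant term in $w$. Then I would apply the multivariate Hasse Taylor formula of \autoref{hassebasic} a second time, now to $f$ itself, centred at $\vec{g}(\vec{y})$ with the $n$ basis directions $\vec{e}_1,\ldots,\vec{e}_n$ displaced by $h_1(w),\ldots,h_n(w)$ (interpreted as a polynomial identity in $R[\vec{y},w]$), yielding
\[f\bigl(\vec{g}(\vec{y}) + \vec{h}(w)\bigr) = \sum_{\vec{m}\ge 0}\bigl[\hasse_{\vec{x}^{\vec{m}}}(f)\bigr](\vec{g}(\vec{y}))\,\vec{h}(w)^{\vec{m}}.\]

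Next I would expand each factor $h_i(w)^{m_i}$ by the multinomial theorem over exponents $\ell_{i,1},\ell_{i,2},\ldots$ with $\sum_j\ell_{i,j} = m_i$; since each term in $h_i$ carries $w$-weight $j\ge 1$, the resulting monomial contributes a weight of $w^{\sum_j j\ell_{i,j}}$. Taking the product over $i$, writing $\vec{\ell}_j \eqdef (\ell_{1,j},\ldots,\ell_{n,j})$, and noting that, under the vector conventions of \autoref{notation},
\[\prod_{i=1}^n\binom{\ell_{i,1}+\cdots+\ell_{i,k}}{\ell_{i,1},\ldots,\ell_{i,k}} = \binom{\vec{\ell}_1+\cdots+\vec{\ell}_k}{\vec{\ell}_1,\ldots,\vec{\ell}_k}, \quad \prod_{i,j}[\hasse_{\vec{u}^j}(g_i)(\vec{y})]^{\ell_{i,j}} = \prod_j[\hasse_{\vec{u}^j}(\vec{g})(\vec{y})]^{\vec{\ell}_j},\]
the product $\vec{h}(w)^{\vec{m}}$ becomes a sum indexed by sequences $(\vec{\ell}_j)_j$ satisfying $\sum_j\vec{\ell}_j = \vec{m}$, each weighted by $w^{\sum_j j|\vec{\ell}_j|}$.

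Substituting this back into the Taylor expansion of $f$ and collapsing the outer sum over $\vec{m}$ via the relation $\vec{m} = \sum_j\vec{\ell}_j$ leaves a single sum over sequences $(\vec{\ell}_j)_j$. Extracting $\coeff{w^k}$ then imposes the constraint $\sum_j j|\vec{\ell}_j| = k$, which automatically forces $\vec{\ell}_j = \vec{0}$ for $j>k$, so the sum truncates to $j\in[k]$ and matches the stated formula exactly. The only genuine bookkeeping is the two-step index swap that identifies $\vec{m}$ with $\sum_j\vec{\ell}_j$ and reassembles the row-wise multinomial coefficients into the single vector-multinomial $\binom{\vec{\ell}_1+\cdots+\vec{\ell}_k}{\vec{\ell}_1,\ldots,\vec{\ell}_k}$; all other steps are direct instantiations of earlier items in \autoref{hassebasic} combined with the multinomial theorem. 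I do not expect any characteristic obstruction, since every identity used is a $\Z$-polynomial identity specialised to the relevant Hasse coefficients.
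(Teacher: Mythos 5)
Your proof is correct, and it takes a genuinely different route from the paper's. Both arguments begin the same way, by expanding $\vec{g}(\vec{y}+\vec{u}w)=\vec{g}(\vec{y})+\vec{h}(w)$ with $h_i(w)=\sum_{j\ge1}[\hasse_{\vec{u}^j}(g_i)(\vec{y})]w^j$. From there the paper Taylor-expands $f$ in the $k$ ``exotic'' directions $\hasse_{\vec{u}^j}(\vec{g})(\vec{y})$ (via \autoref{hassebasic}.\ref{hassebasic:hassetaylor}), then converts those derivative operators back to coordinate derivatives $\hasse_{\vec{x}^{\vec{m}}}$ using the direction-linearity property (\autoref{hassebasic}.\ref{hassebasic:directionlinearity}) and the clumping property (\autoref{hassebasic}.\ref{hassebasic:hassehasse}); to make this legal the paper ``undoes'' the substitutions $\vec{x}\leftarrow\vec{g}(\vec{y})$ and $z_j\leftarrow z^j$, works with operators on $R[\vec{y}][\vec{x}]$, then ``redoes'' them. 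You instead Taylor-expand $f$ directly in the $n$ coordinate directions $\vec{e}_1,\ldots,\vec{e}_n$ at the point $\vec{g}(\vec{y})$, obtaining $\hasse_{\vec{x}^{\vec{m}}}(f)$ immediately, and push all the combinatorial bookkeeping into an ordinary multinomial expansion of each $h_i(w)^{m_i}$. Your route buys a cleaner derivation: it needs only \autoref{hassebasic}.\ref{hassebasic:hassetaylor} plus the multinomial theorem, sidesteps the direction-linearity and clumping lemmas entirely, and avoids the ``undo/redo'' substitution gymnastics (what you do is simply a polynomial substitution into a universal identity, which requires no special justification). The paper's route, while more involved, has the virtue of keeping the reasoning entirely internal to the Hasse-derivative calculus it has just built up; the multinomial coefficient emerges in the paper from iterated Hasse derivatives in the same direction, whereas in your version it appears for the classical reason. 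Both give the same formula, including the truncation to $j\in[k]$ forced by the weight constraint $\sum_j j|\vec{\ell}_j|=k$.
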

\begin{proof}
	\begin{align*}
		\hasseu(f(g_1,\ldots,g_n))
		&=\coeff{z^k}\left((f\circ \vec{g})(\vec{y}+\vec{u}z)\right)\\
		&=\coeff{z^k}\left(f\left(\sum_\ell \hasse_{\vec{u}^\ell}(\vec{g})(\vec{y})z^\ell\right)\right)\\
		&=\coeff{z^k}\left(f\Big(\vec{g}(\vec{y})+\hasse_{\vec{u}}(\vec{g})(\vec{y})z+\cdots+\hasse_{\vec{u}^k}(\vec{g})(\vec{y})z^k\Big)\right)\\
	\end{align*}
	We now seek to take derivatives ``in the direction of $\hasse_{\vec{u}^j}(\vec{g})(\vec{y})$'' from the point $\vec{x}\eqdef\vec{g}(\vec{y})$, treating each $j$ as a different direction and each $z^j$ as a different variable.  However, this is a subtle operation, as up until now we have taken the directions of our derivatives as independent of the point of derivation.  To make this subtlety clear, we now study the above equation, by ``undoing'' the substitutions $\vec{x}\leftarrow \vec{g}(\vec{y})$, and $z_j\leftarrow z^j$, and working with derivatives in the ring $R[\vec{y}][\vec{x}]$.  We will then later ``redo'' these substitutions.  A simpler form of this logic was used to establish \autoref{hassebasic}.\ref{hassebasic:directionlinearity}. We start about by applying \autoref{hassebasic}.\ref{hassebasic:hassetaylor} to expand out $f$ in the directions $\hasse_{\vec{u}^j}(\vec{g})(\vec{y})$.
	\begin{align*}
		f\Big(\vec{x}+&\hasse_{\vec{u}}(\vec{g})(\vec{y})z_1+\cdots+\hasse_{\vec{u}^k}(\vec{g})(\vec{y})z_k\Big)\\
		&=\sum_{\ell_1,\ldots,\ell_k\ge 0}\left[\hasse_{\left[\hasse_{\vec{u}}(\vec{g})(\vec{y})\right]^{\ell_1}}\cdots\hasse_{\left[\hasse_{\vec{u}^k}(\vec{g})(\vec{y})\right]^{\ell_k}}(f)\right](\vec{x})\cdot z_1^{\ell_1}\cdots z_k^{\ell_k}
		\intertext{and by \autoref{hassebasic}.\ref{hassebasic:directionlinearity}, we can decompose the derivative $\hasse_{\vec{u}^j}(\vec{g})(\vec{y})$ as linear combinations of the $\hasse_{x_i^k}$ derivatives, so that as operators on $R[\vec{y}][\vec{x}]$,
		\[
			\hasse_{\left[\hasse_{\vec{u}^j}(\vec{g})(\vec{y})\right]^{\ell_j}}=\sum_{\ell_{j,1}+\cdots+\ell_{j,n}=\ell_j}\left(\prod_{i=1}^n\left[\hasse_{\vec{u}^j}(g_i)(\vec{y})\right]^{\ell_{j,i}}\right)\left[\hasse_{x_1^{\ell_{j,1}}}\cdots\hasse_{x_n^{\ell_{j,n}}}\right]
		\]
		As these operators are acting on $R[\vec{y}][\vec{x}]$ we can treat all polynomials in $\vec{y}$ as constants, in particular the terms involving the $\hasse_{\vec{u}^j}(g_i)(\vec{y})$ above.  This allows us to move all of the operators past the terms involving the $g_i$, to obtain,}
		&=\sum_{\substack{\sum_{i=1}^n \ell_{j,i}=\ell_j\\\ell_j\ge 0\\j\in[k]}}
			\left(\prod_{i=1,j=1}^{n,k}\left[\hasse_{\vec{u}^j}(g_i)(\vec{y})\right]^{\ell_{j,i}}\right)
			\left[\hasse_{x_1^{\ell_{1,1}}}\cdots\hasse_{x_n^{\ell_{1,n}}}\cdots\hasse_{x_1^{\ell_{k,1}}}\cdots\hasse_{x_n^{\ell_{k,n}}}(f)\right](\vec{x})\cdot z_1^{\ell_1}\cdots z_k^{\ell_k}
		\intertext{By invoking \autoref{hassebasic}.\ref{hassebasic:hassehasse} we can clump derivatives by variable,}
		&=\sum_{\substack{\sum_{i=1}^n \ell_{j,i}=\ell_j\\\ell_j\ge 0\\j\in[k]}}
			\left(\prod_{i=1,j=1}^{n,k}\left[\hasse_{\vec{u}^j}(g_i)(\vec{y})\right]^{\ell_{j,i}}\right)
			\left(\prod_{i=1}^n\binom{\ell_{1,i}+\cdots+\ell_{k,i}}{\ell_{1,i},\ldots,\ell_{k,i}}\right)\\
		&\hspace{2in}
			\left[\hasse_{x_1^{\sum_{j=1}^k\ell_{j,1}}}\cdots\hasse_{x_n^{\sum_{j=1}^k\ell_{j,n}}}(f)\right](\vec{x})\cdot z_1^{\ell_1}\cdots z_k^{\ell_k}
	\end{align*}
	We can ``redo'' the substitutions $\vec{x}\leftarrow \vec{g}(\vec{y})$, and $z_j\leftarrow z^j$, to obtain that,
	\begin{align*}
		\hasseu(f(g_1,\ldots,g_n))
		&=\coeff{z^k}\left(f\Big(\vec{g}(\vec{y})+\hasse_{\vec{u}}(\vec{g})(\vec{y})z+\cdots+\hasse_{\vec{u}^k}(\vec{g})(\vec{y})z^k\Big)\right)\\
		&=\sum_{\substack{\sum_{i=1}^n \ell_{j,i}=\ell_j\\\sum_{j=1}^k j\ell_j=k}}
			\left(\prod_{i=1,j=1}^{n,k}\left[\hasse_{\vec{u}^j}(g_i)(\vec{y})\right]^{\ell_{j,i}}\right)
			\left(\prod_{i=1}^n\binom{\ell_{1,i}+\cdots+\ell_{k,i}}{\ell_{1,i},\ldots,\ell_{k,i}}\right)\\
		&\hspace{2in}
			\left[\hasse_{x_1^{\sum_{j=1}^k\ell_{j,1}}}\cdots\hasse_{x_n^{\sum_{j=1}^k\ell_{j,n}}}(f)\right](\vec{g}(\vec{y}))\\
		\intertext{and rewriting things in vector notation,}
		&=\sum_{\sum_{j=1}^k j|\vec{\ell}_j|=k}
			\left(\prod_{j=1}^{k}\left[\hasse_{\vec{u}^j}(\vec{g})(\vec{y})\right]^{\vec{\ell}_{j}}\right)
			\binom{\vec{\ell}_{1}+\cdots+\vec{\ell}_{k}}{\vec{\ell}_{1},\ldots,\vec{\ell}_{k}}\left[\hasse_{\vec{x}^{\sum_{j=1}^k\vec{\ell}_j}}(f)\right](\vec{g}(\vec{y}))
			\qedhere
	\end{align*}
\end{proof}

\section{Hitting Sets for Depth-3 Diagonal Circuits}\label{sec:diagonal}

In this section we construct hitting sets for the depth-3 diagonal circuit model, as defined by Saxena~\cite{Saxena08}. In this section we will use some additional notation.  For a vector $\vec{e}\in\N^n$, we define $\supp(\vec{e})$ to be its support $S\subseteq[n]$, $|\vec{e}|_0\eqdef|\supp(\vec{e})|$ and $|\vec{e}|_\times\eqdef\prod_{\ell=1}^n(e_\ell+1)$.  We now define the depth-3 diagonal circuit model.

\begin{definition}[Saxena~\cite{Saxena08}]\label{def: diagonal}
	A polynomial $f(x_1,\ldots,x_n)$ is computable by a \textbf{depth-3 diagonal circuit} if
	\[
		f(\vec{x})=\sum_{\ell=1}^s\vec{L}_\ell^{\vec{e}_\ell}(\vec{x})\;,
	\]
	where each $\vec{L}_\ell$ is a vector of affine functions. The \textbf{size} is $n\sum_{\ell=1}^s|\vec{e}_\ell|_\times$.
\end{definition}

The hitting sets will actually be for any polynomial whose space of partial derivatives is low-dimensional.  We now define this, using the notion of Hasse derivatives from \autoref{sec:hasse} so the results apply over any characteristic.

\begin{definition}
	Let $f\in\F[x_1,\ldots,x_n]$. The \textbf{dimension of Hasse derivatives} of $f$, denoted $|\hasse(f)|$, is defined as
	\[
		|\hasse(f)|\eqdef \dim\{\hasse_{\vec{x}^{\vec{i}}}(f)|\vec{i}\in\N^n\}\;.
	\]
\end{definition}

The dimension is taken in the $\F$-vector space $\F[x_1,\ldots,x_n]$. Note that by \autoref{hassebasic} it follows that all iterated Hasse derivatives, even with arbitrary directions, are contained in the above space.  This dimension is also well-behaved in various respects, such as being sub-additive, which we now establish.

\begin{lemma}
	\label{hassedimsubadditive}
	Let $f,g\in\F[x_1,\ldots,x_n]$.  Then $|\hasse(f+g)|\le|\hasse(f)|+|\hasse(g)|$.
\end{lemma}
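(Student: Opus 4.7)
The plan is to exploit the linearity of Hasse derivatives (\autoref{hassebasic}.\ref{hassebasic:linearf}) together with the basic fact that the dimension of a sum of two subspaces is at most the sum of their dimensions.

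Concretely, I would first let $V_f \eqdef \spn\{\hasse_{\vec{x}^{\vec{i}}}(f) \mid \vec{i}\in\N^n\}$ and $V_g \eqdef \spn\{\hasse_{\vec{x}^{\vec{i}}}(g) \mid \vec{i}\in\N^n\}$, so that $\dim V_f = |\hasse(f)|$ and $\dim V_g = |\hasse(g)|$ by definition. Then observe, via \autoref{hassebasic}.\ref{hassebasic:linearf} applied iteratively (with $\alpha=\beta=1$), that for every $\vec{i}\in\N^n$,
\[
\hasse_{\vec{x}^{\vec{i}}}(f+g) \;=\; \hasse_{\vec{x}^{\vec{i}}}(f) + \hasse_{\vec{x}^{\vec{i}}}(g) \;\in\; V_f + V_g.
\]
Hence $V_{f+g} \subseteq V_f + V_g$, which gives $|\hasse(f+g)| = \dim V_{f+g} \le \dim(V_f+V_g) \le \dim V_f + \dim V_g = |\hasse(f)|+|\hasse(g)|$, as required.

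There is no real obstacle here; the only subtle point is to verify that iterating the single-variable Hasse derivative $\hasse_{x_\ell^k}$ preserves linearity, which is immediate since each $\hasse_{x_\ell^k}$ is itself a linear operator on $\F[\vec{x}]$, so their composition $\hasse_{\vec{x}^{\vec{i}}} = \hasse_{x_1^{i_1}}\cdots\hasse_{x_n^{i_n}}$ is linear as well. The whole argument is essentially one sentence once the right subspaces are named.
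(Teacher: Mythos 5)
Your proof is correct and takes essentially the same approach as the paper: both use the linearity of (iterated) Hasse derivatives to show the span of derivatives of $f+g$ is contained in the sum of the spans for $f$ and $g$, and then take dimensions. Your version is slightly more spelled out (naming the subspaces and noting that the composition $\hasse_{\vec{x}^{\vec{i}}}$ is linear), but there is no substantive difference.
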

\begin{proof}
	As Hasse derivatives are linear (\autoref{hassebasic}) it follows that
	\[
		\spn\{\hasse_{\vec{x}^{\vec{i}}}(f+g)|\vec{i}\in\N^n\}
		\subseteq
		\spn\left(\{\hasse_{\vec{x}^{\vec{i}}}(f)|\vec{i}\in\N^n\},
			\{\hasse_{\vec{x}^{\vec{i}}}(g)|\vec{i}\in\N^n\}\right)\;,
	\]
	and thus taking dimensions finishes the claim.
\end{proof}

We now work to proving that depth-3 diagonal circuits have low-dimensional spaces of partial derivatives.  We first study the partial derivatives of a single monomial.

\begin{lemma}
	\label{hassedimmonomial}
	Let $\vec{x}^{\vec{i}}\in\F[x_1,\ldots,x_n]$.  Then $|\hasse(\vec{x}^{\vec{i}})|\le|\vec{i}|_\times$.
\end{lemma}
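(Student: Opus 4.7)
The plan is to directly compute the iterated Hasse derivatives of the monomial $\vec{x}^{\vec{i}}$ and observe that they span a very restricted set of monomials, whose cardinality gives the desired bound.

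First I would invoke \autoref{hassemonomial} to compute a single Hasse derivative with respect to one variable: for any $\ell$ and $k \ge 0$,
\[
\hasse_{x_\ell^k}(\vec{x}^{\vec{i}}) = \binom{i_\ell}{k}\vec{x}^{\vec{i} - k\vec{e}_\ell},
\]
with the convention that the result is $0$ when $k > i_\ell$. By \autoref{hassebasic}.\ref{hassebasic:commut}, iterated Hasse derivatives $\hasse_{\vec{x}^{\vec{j}}} = \hasse_{x_1^{j_1}} \cdots \hasse_{x_n^{j_n}}$ can be computed in any order, and repeated application of the above identity gives
\[
\hasse_{\vec{x}^{\vec{j}}}(\vec{x}^{\vec{i}}) = \left(\prod_{\ell=1}^n \binom{i_\ell}{j_\ell}\right) \vec{x}^{\vec{i}-\vec{j}},
\]
when $\vec{j} \le \vec{i}$ coordinatewise, and equals $0$ otherwise.

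Thus every nonzero Hasse derivative $\hasse_{\vec{x}^{\vec{j}}}(\vec{x}^{\vec{i}})$ is a scalar multiple of some monomial $\vec{x}^{\vec{k}}$ with $\vec{0} \le \vec{k} \le \vec{i}$. The $\F$-vector space $\{\hasse_{\vec{x}^{\vec{j}}}(\vec{x}^{\vec{i}}) : \vec{j} \in \N^n\}$ is therefore contained in the span of the monomials $\{\vec{x}^{\vec{k}} : \vec{0} \le \vec{k} \le \vec{i}\}$, of which there are exactly $\prod_{\ell=1}^n (i_\ell + 1) = |\vec{i}|_\times$. Taking dimensions yields $|\hasse(\vec{x}^{\vec{i}})| \le |\vec{i}|_\times$, as required.

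There is no real obstacle here: the proof is essentially a one-line unfolding of \autoref{hassemonomial} together with commutativity of Hasse derivatives. The only mild subtlety is ensuring that even in positive characteristic (where some of the binomial coefficients $\binom{i_\ell}{j_\ell}$ may vanish) the bound remains an upper bound on the dimension, which it does automatically since vanishing derivatives only make the span smaller.
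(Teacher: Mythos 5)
Your proof is correct and follows essentially the same route as the paper's: compute $\hasse_{\vec{x}^{\vec{j}}}(\vec{x}^{\vec{i}})=\bigl(\prod_\ell\binom{i_\ell}{j_\ell}\bigr)\vec{x}^{\vec{i}-\vec{j}}$, observe every derivative lies in the span of the monomials $\vec{x}^{\vec{k}}$ with $\vec{0}\le\vec{k}\le\vec{i}$, and count. You spell out the appeal to \autoref{hassemonomial} and \autoref{hassebasic}.\ref{hassebasic:commut} a bit more explicitly than the paper does, but the argument is the same.
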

\begin{proof}
	Let $\vec{j}\in\N^n$, then
	\[
		\hasse_{\vec{x}^{\vec{j}}}(\vec{x}^{\vec{i}})=\prod_{\ell\in[n]}\binom{i_\ell}{j_\ell}x_\ell^{i_\ell-j_\ell}\;,
	\]
	where $\binom{i_\ell}{j_\ell}=0$ if $i_\ell<j_\ell$.  Thus, all possible Hasse derivatives are some non-zero scalar multiple of $\vec{x}^{\vec{k}}$ for any $\vec{k}\le\vec{i}$, and there are $|\vec{i}|_\times$ such $\vec{k}$, giving the desired bound.
\end{proof}

Note that this upper bound is an equality over characteristic zero, but not over finite characteristic, as seen by $x^p$ in characteristic $p$, where $|\hasse(x^p)|=2$ but $|p|_\times=p+1$. However, this slack does not qualitatively affect the results. We now extend this dimension bound by using the chain rule.

\begin{lemma}
	\label{hassedimlinearsub}
	Let $f\in\F[\vec{y}]$, and let $\vec{L}\in\F[\vec{x}]$ be a vector of affine forms.  Then $|\hasse(f\circ \vec{L})|\le|\hasse(f)|$.
\end{lemma}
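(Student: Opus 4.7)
The plan is to apply the chain rule (the preceding lemma) to $f \circ \vec{L}$, and exploit the fact that an affine vector $\vec{L}$ has trivial higher-order Hasse derivatives. Concretely, since each component of $\vec{L}$ is affine, we have $\hasse_{\vec{u}^j}(\vec{L}) = 0$ for all $j \ge 2$ (as the expansion $\vec{L}(\vec{x}+\vec{u}y)$ is a polynomial of degree $\le 1$ in $y$), and $\hasse_{\vec{u}}(\vec{L})$ is a vector in $\F^n$, independent of $\vec{x}$.

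Applied to the chain rule, only those terms in the sum with $\vec{\ell}_j = \vec{0}$ for $j \ge 2$ survive, and such terms force $|\vec{\ell}_1| = k$. So for any $\vec{u} \in \F^n$ and $k \ge 0$,
\[
    \hasseu(f \circ \vec{L})(\vec{x})
    \;=\; \sum_{|\vec{\ell}|=k} c_{\vec{u},\vec{\ell}} \cdot \bigl[\hasse_{\vec{y}^{\vec{\ell}}}(f)\bigr]\!\bigl(\vec{L}(\vec{x})\bigr),
\]
where the $c_{\vec{u},\vec{\ell}} \in \F$ are scalars (products of entries of $\hasse_{\vec{u}}(\vec{L})$ with multinomial coefficients). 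Iterating this, since $\hasse_{\vec{x}^{\vec{i}}}$ is a composition of Hasse derivatives in coordinate directions, every iterated Hasse derivative $\hasse_{\vec{x}^{\vec{i}}}(f \circ \vec{L})$ lies in the $\F$-linear span of the polynomials $\bigl\{[\hasse_{\vec{y}^{\vec{j}}}(f)] \circ \vec{L} : \vec{j} \in \N^n\bigr\}$.

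To finish, observe that the substitution map $\Phi : \F[\vec{y}] \to \F[\vec{x}]$ defined by $\Phi(g) \eqdef g \circ \vec{L}$ is $\F$-linear, so it cannot increase the dimension of a spanning set. Therefore
\[
    |\hasse(f \circ \vec{L})|
    \;\le\; \dim_{\F} \spn\bigl\{[\hasse_{\vec{y}^{\vec{j}}}(f)] \circ \vec{L} : \vec{j} \in \N^n\bigr\}
    \;\le\; \dim_{\F} \spn\bigl\{\hasse_{\vec{y}^{\vec{j}}}(f) : \vec{j} \in \N^n\bigr\}
    \;=\; |\hasse(f)|,
\]
as desired.

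There is essentially no hard step here: the affine hypothesis on $\vec{L}$ collapses the chain rule to a single layer of Hasse derivatives of $f$, and the rest is a routine observation that linear maps do not increase dimension. The only place to be careful is in tracking that the coefficients $c_{\vec{u},\vec{\ell}}$ arising from the chain rule truly are field elements (as opposed to polynomials in $\vec{x}$), which is exactly where affineness of $\vec{L}$ is used.
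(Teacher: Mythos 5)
Your proposal is correct and follows essentially the same route as the paper: apply the chain rule, observe that affineness of $\vec{L}$ kills all Hasse derivatives $\hasse_{\vec{u}^j}(\vec{L})$ for $j\ge 2$ and makes $\hasse_{\vec{u}}(\vec{L})$ a constant so the chain-rule coefficients are scalars, iterate using the fact that derivatives of derivatives are scalar multiples of derivatives, and finish by noting the substitution map $g\mapsto g\circ\vec{L}$ is $\F$-linear and hence cannot increase dimension. You spell out the vanishing of higher-order terms a bit more explicitly than the paper does, but the argument is the same.
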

\begin{proof}
	Consider some derivative $\hasse_{\vec{x}^{\vec{i}}}(f\circ\vec{L})=
	\hasse_{x_1^{i_1}}\cdots\hasse_{x_n^{i_n}}(f\circ\vec{L})$.  By the chain rule (\autoref{hassechainrule}) we see that $\hasse_{x_n^{i_n}}(f\circ\vec{L})$ is a linear combination of Hasse derivatives of $f$, evaluated at $\vec{L}(\vec{x})$, as the Hasse derivatives of $\vec{L}(\vec{x})$ are constants. Since \autoref{hassebasic} shows that derivatives of derivatives are (scalar multiples of) derivatives, we see that we can induct downwards on $\ell$ to obtain that $\hasse_{x_\ell^{i_\ell}}\cdots\hasse_{x_n^{i_n}}(f\circ\vec{L})$ is a linear combination of Hasse derivatives of $f$, evaluated at $\vec{L}(\vec{x})$.  Taking $\ell=1$, and noting that evaluating the Hasse derivatives of $f$ at $\vec{L}(\vec{x})$ cannot increase dimension, the claim follows.
\end{proof}

Combing the above result with sub-additivity of the dimension of derivatives, we can now bound the dimension of depth-3 diagonal circuits.  Such bounds are not attainable for the depth-4 diagonal circuit model (which we did not define), as that model contains the polynomial $(\sum_{\ell=1}^n x_\ell^2)^n$, which has an exponentially large space of derivatives.  As mentioned in the introduction, there are other works (\cite{AgrawalSS12} and \cite{ForbesShpilka12a}) that handle the depth-4 case, using other techniques.

\begin{lemma}
	\label{hassedimdiagonal}
	Let $f(\vec{x})=\sum_{\ell=1}^s\vec{L}_\ell(\vec{x})^{\vec{e}_\ell}$ be a depth-3 diagonal circuit, where $\vec{L}_\ell(\vec{x})$ is an affine function. Then $|\hasse(f)|\le \sum_{\ell=1}^s |\vec{e}_\ell|_\times$.
\end{lemma}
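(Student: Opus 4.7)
The plan is to combine the three preceding dimension lemmas in a straightforward way. By the subadditivity of the Hasse derivative dimension (\autoref{hassedimsubadditive}) applied inductively to the sum defining $f$, we obtain
\[
|\hasse(f)| \;=\; \left|\hasse\!\left(\sum_{\ell=1}^s \vec{L}_\ell(\vec{x})^{\vec{e}_\ell}\right)\right| \;\le\; \sum_{\ell=1}^s \left|\hasse\!\left(\vec{L}_\ell(\vec{x})^{\vec{e}_\ell}\right)\right|,
\]
so the task reduces to bounding the dimension of Hasse derivatives of a single summand.

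Next, I would observe that a single term $\vec{L}_\ell(\vec{x})^{\vec{e}_\ell}$ is obtained by composing the monomial $\vec{y}^{\vec{e}_\ell}\in\F[\vec{y}]$ with the tuple of affine forms $\vec{L}_\ell(\vec{x})$. Hence \autoref{hassedimlinearsub} applies and yields
\[
|\hasse(\vec{L}_\ell(\vec{x})^{\vec{e}_\ell})| \;\le\; |\hasse(\vec{y}^{\vec{e}_\ell})|.
\]
Finally, \autoref{hassedimmonomial} bounds the right-hand side by $|\vec{e}_\ell|_\times$. Summing these per-term bounds gives $|\hasse(f)|\le \sum_{\ell=1}^s |\vec{e}_\ell|_\times$, as required.

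There is no real obstacle here: every ingredient is already in place, and the proof is just a one-line chain of inequalities followed by a sum. The only thing to double-check is that the composition $\vec{y}^{\vec{e}_\ell}\circ \vec{L}_\ell(\vec{x})$ is indeed an instance covered by \autoref{hassedimlinearsub} (it is, since $\vec{L}_\ell$ is a vector of affine forms, matching the hypothesis of that lemma verbatim). No characteristic assumption is needed, because the Hasse derivative framework was developed precisely to avoid such issues, and all three cited lemmas hold over arbitrary fields.
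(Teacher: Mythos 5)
Your proof is correct and follows exactly the same route as the paper: subadditivity (\autoref{hassedimsubadditive}) to reduce to a single term, the substitution bound (\autoref{hassedimlinearsub}) to pass from $\vec{L}_\ell^{\vec{e}_\ell}$ to the monomial $\vec{y}^{\vec{e}_\ell}$, and \autoref{hassedimmonomial} to finish.
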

\begin{proof}
	By sub-additivity of dimension of Hasse derivatives (\autoref{hassedimsubadditive}) it suffices to prove the claim for $s=1$.  Thus consider some $f(\vec{x})=\vec{L}(\vec{x})^{\vec{e}}$.  Note that $f(\vec{x})=g(\vec{L}(\vec{x}))$, where $g(\vec{y})=\vec{y}^{\vec{e}}$.  The claim then follows from \autoref{hassedimlinearsub} and \autoref{hassedimmonomial}.
\end{proof}

We now seek to show that any polynomial with low-dimensional derivatives must have a small-support monomial.  To do so, we introduce the notion of a monomial ordering (see \cite{CLO} for more on monomial orderings) and establish facts about its interactions with derivatives.

\begin{definition}
	A \textbf{monomial ordering} is a total order $\prec$ on the non-zero monomials in $\F[\vec{x}]$ such that
	\begin{itemize}
		\item For all $\vec{i}\in\N^n$, $1\prec \vec{x}^{\vec{i}}$.
		\item For all $\vec{i},\vec{j},\vec{k}\in\N^n$, $\vec{x}^{\vec{i}}\prec\vec{x}^{\vec{j}}$ implies $\vec{x}^{\vec{i}+\vec{k}}\prec\vec{x}^{\vec{j}+\vec{k}}$
	\end{itemize}
\end{definition}

For concreteness, one can consider the lexicographic ordering on monomials, which is easily seen to be a monomial ordering.  We now observe that derivatives are monotonic with respect to monomial orderings, except when the characteristic prevents it.  That is, over characteristic $p$ we have $x^{p-1}\prec x^p$ (in any ordering), but $\hasse_x(x^p)=0$, which is not included in the ordering.

\begin{lemma}
	\label{hasseordmonotone}
	Let $\prec$ be a monomial ordering on $\F[\vec{x}]$. Let $\vec{x}^{\vec{i}}\prec\vec{x}^{\vec{j}}$ be monomials.  Then for any $\vec{k}$, if
	$\hasse_{\vec{x}^{\vec{k}}}(\vec{x}^{\vec{i}}),\hasse_{\vec{x}^{\vec{k}}}(\vec{x}^{\vec{j}})\ne 0$ then $\hasse_{\vec{x}^{\vec{k}}}(\vec{x}^{\vec{i}}) \prec \hasse_{\vec{x}^{\vec{k}}}(\vec{x}^{\vec{j}})$, where we abuse notation and ignore (non-zero) coefficients in the ordering.
\end{lemma}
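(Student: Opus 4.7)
The plan is to reduce the claim to an algebraic identity about exponent vectors and then invoke the translation invariance axiom of the monomial ordering. Concretely, \autoref{hassemonomial} (applied one variable at a time, using the commutativity of Hasse derivatives from \autoref{hassebasic}) tells us
\[
	\hasse_{\vec{x}^{\vec{k}}}(\vec{x}^{\vec{i}}) = \binom{\vec{i}}{\vec{k}}\,\vec{x}^{\vec{i}-\vec{k}},
\]
with the analogous formula for $\vec{j}$. The hypothesis that both Hasse derivatives are nonzero is exactly the statement that the scalars $\binom{\vec{i}}{\vec{k}}$ and $\binom{\vec{j}}{\vec{k}}$ are nonzero in $\F$, and in particular $\vec{i}\ge\vec{k}$ and $\vec{j}\ge\vec{k}$ coordinate-wise, so the monomials $\vec{x}^{\vec{i}-\vec{k}}$ and $\vec{x}^{\vec{j}-\vec{k}}$ are well-defined. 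Since coefficients are to be ignored, it suffices to prove $\vec{x}^{\vec{i}-\vec{k}} \prec \vec{x}^{\vec{j}-\vec{k}}$.

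I would then argue by trichotomy using the total order $\prec$. If $\vec{x}^{\vec{i}-\vec{k}} = \vec{x}^{\vec{j}-\vec{k}}$, then $\vec{i}=\vec{j}$, contradicting $\vec{x}^{\vec{i}}\prec\vec{x}^{\vec{j}}$. If instead $\vec{x}^{\vec{j}-\vec{k}} \prec \vec{x}^{\vec{i}-\vec{k}}$, then by the translation-invariance property of a monomial ordering (applied with the shift $\vec{k}$) we get $\vec{x}^{\vec{j}} = \vec{x}^{(\vec{j}-\vec{k})+\vec{k}} \prec \vec{x}^{(\vec{i}-\vec{k})+\vec{k}} = \vec{x}^{\vec{i}}$, again contradicting the hypothesis. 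Hence the remaining case $\vec{x}^{\vec{i}-\vec{k}} \prec \vec{x}^{\vec{j}-\vec{k}}$ must hold, which is the desired conclusion.

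There is no genuine obstacle here: the lemma is essentially the observation that differentiation by $\vec{x}^{\vec{k}}$, up to scalars, is the shift $\vec{i}\mapsto\vec{i}-\vec{k}$ on exponent vectors, and monomial orderings are by definition invariant under such shifts. The only subtlety worth flagging in the written proof is the role of the nonvanishing hypothesis on the two Hasse derivatives, which is precisely what lets us perform the subtraction $\vec{i}-\vec{k}$ (and $\vec{j}-\vec{k}$) while remaining in the domain of the monomial ordering, and which is needed because in positive characteristic the natural monotonicity can fail when a derivative collapses to zero.
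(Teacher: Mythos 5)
Your proof is correct and follows the same route as the paper: apply \autoref{hassemonomial} to identify each nonzero derivative as a nonzero scalar times $\vec{x}^{\vec{i}-\vec{k}}$ (resp.\ $\vec{x}^{\vec{j}-\vec{k}}$), then use totality plus translation-monotonicity of $\prec$ to conclude. The only difference is that you spell out the trichotomy argument that the paper compresses into the single sentence ``the monomial ordering is total, and is monotonic over multiplication.''
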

\begin{proof}
	By \autoref{hassemonomial}, the assumptions of $\hasse_{\vec{x}^{\vec{k}}}(\vec{x}^{\vec{i}}),\hasse_{\vec{x}^{\vec{k}}}(\vec{x}^{\vec{j}})\ne 0$ imply that $\vec{k}\le\vec{i},\vec{j}$, and that
	\begin{align*}
		\hasse_{\vec{x}^{\vec{k}}}(\vec{x}^{\vec{i}})
		&=a\vec{x}^{\vec{i}-\vec{k}},
		&\hasse_{\vec{x}^{\vec{k}}}(\vec{x}^{\vec{j}})
		&=b\vec{x}^{\vec{j}-\vec{k}},
	\end{align*}
	where $a,b\ne 0$ are constants.  As the monomial ordering is total, and is monotonic over multiplication, it follows $\vec{x}^{\vec{i}-\vec{k}}\prec\vec{x}^{\vec{j}-\vec{k}}$, as desired.
\end{proof}

Monotonicity then implies that the largest monomial of a polynomial $f$ will continue to be largest when taking derivatives, as long as it is not annihilated.  Treating polynomial as vectors, this then gives us a diagonal system of vectors, from which we can deduce the following rank bound.

\begin{theorem}
	\label{smallmonomial}
	Let $f\in\F[\vec{x}]$ be a polynomial, and let $\prec$ be any monomial ordering in $\F[\vec{x}]$. Let $\vec{x}^{\vec{i}}$ be the largest monomial (with respect to $\prec$) with a non-zero coefficient in $f$.  Then $|\vec{i}|_0\le\log|(\hasse(f)|$.
\end{theorem}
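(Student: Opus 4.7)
The plan is to exhibit $2^{|\vec{i}|_0}$ linearly independent Hasse derivatives of $f$ inside $\F[\vec{x}]$; since $|\hasse(f)|$ is the dimension of the span of all Hasse derivatives of $f$, this immediately yields $2^{|\vec{i}|_0} \le |\hasse(f)|$, and hence $|\vec{i}|_0 \le \log|\hasse(f)|$. The derivatives will be indexed by the subsets of $\supp(\vec{i})$, and linear independence will be certified via distinct leading monomials using the monotonicity property \autoref{hasseordmonotone}.

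Concretely, let $S \eqdef \supp(\vec{i})$. For each $T \subseteq S$, define $\vec{k}_T \in \N^n$ by $(\vec{k}_T)_\ell = i_\ell$ if $\ell \in T$ and $(\vec{k}_T)_\ell = 0$ otherwise, and consider the family $\{\hasse_{\vec{x}^{\vec{k}_T}}(f) : T \subseteq S\}$, which has exactly $2^{|S|}$ members. Applying \autoref{hassemonomial} to $\vec{x}^{\vec{i}}$ gives
\[
\hasse_{\vec{x}^{\vec{k}_T}}(\vec{x}^{\vec{i}}) \;=\; \prod_{\ell=1}^n \binom{i_\ell}{(\vec{k}_T)_\ell} x_\ell^{i_\ell - (\vec{k}_T)_\ell},
\]
and the key observation is that each binomial coefficient is either $\binom{i_\ell}{0}=1$ or $\binom{i_\ell}{i_\ell}=1$, hence nonzero regardless of $\chara(\F)$. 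Therefore $\hasse_{\vec{x}^{\vec{k}_T}}(\vec{x}^{\vec{i}}) = \vec{x}^{\vec{i}-\vec{k}_T} \neq 0$.

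Now, by definition of the leading monomial, every other nonzero monomial $\vec{x}^{\vec{j}}$ in $f$ satisfies $\vec{x}^{\vec{j}} \prec \vec{x}^{\vec{i}}$. By \autoref{hasseordmonotone}, any nonvanishing $\hasse_{\vec{x}^{\vec{k}_T}}(\vec{x}^{\vec{j}})$ then satisfies $\hasse_{\vec{x}^{\vec{k}_T}}(\vec{x}^{\vec{j}}) \prec \hasse_{\vec{x}^{\vec{k}_T}}(\vec{x}^{\vec{i}}) = \vec{x}^{\vec{i}-\vec{k}_T}$. Hence $\hasse_{\vec{x}^{\vec{k}_T}}(f)$ has leading monomial exactly $\vec{x}^{\vec{i}-\vec{k}_T}$. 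Since the map $T \mapsto \vec{i} - \vec{k}_T$ is injective on subsets of $S$ (different $T$'s disagree on some coordinate $\ell \in S$, where $i_\ell > 0$), the $2^{|S|}$ polynomials have pairwise distinct leading monomials, and the standard argument (order them by leading monomial under $\prec$ and inspect the largest to force its coefficient in any dependence to be zero, then iterate) shows they are $\F$-linearly independent.

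The only subtle point, and what would have been the main obstacle over positive characteristic, is ensuring the binomial coefficients $\binom{i_\ell}{(\vec{k}_T)_\ell}$ do not vanish in $\F$; choosing $(\vec{k}_T)_\ell \in \{0, i_\ell\}$ (rather than intermediate values, as one might do in characteristic zero) is precisely what sidesteps this issue and makes the argument characteristic-independent.
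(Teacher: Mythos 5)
Your proof is correct and follows essentially the same route as the paper: index derivatives $\hasse_{\vec{x}^{\vec{k}_T}}$ by subsets $T\subseteq\supp(\vec{i})$ with exponents in $\{0,i_\ell\}$, observe the resulting binomial coefficients are all $1$ so the derivative of the leading monomial survives in any characteristic, then invoke \autoref{hasseordmonotone} to get $2^{|\vec{i}|_0}$ distinct leading monomials and hence linear independence. The only difference is cosmetic — you spell out the binomial-coefficient observation that the paper leaves implicit in its formula for $\hasse_{\vec{x}^{\vec{j}}}(\vec{x}^{\vec{i}})$.
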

\begin{proof}
	Consider the set of vectors $A\subseteq \N^n$ defined by
	\[
		A\eqdef\{\vec{j}: \forall \ell\in[n], j_\ell\in\{0,i_\ell\}\}\;,
	\]
	so that all vectors in $A$ have support contained in $\supp(\vec{i})$.	For a fixed $\vec{j}\in A$, linearity and \autoref{hassemonomial} imply that
	\[
		\hasse_{\vec{x}^{\vec{j}}}(\vec{x}^{\vec{i}})=\prod_{\ell\in \supp(\vec{i})\setminus\supp(\vec{j})}x_\ell^{i_\ell}\;,
	\]
	which in particular is non-zero. Write $f$ as $f=a\vec{x}^{\vec{i}}+g$, where all monomials in $g$ are less than $\vec{x}^{\vec{i}}$ and $a\ne 0$.  By \autoref{hasseordmonotone} it follows that all non-zero monomials in $\hasse_{\vec{x}^{\vec{j}}}(g)$ are less than $\hasse_{\vec{x}^{\vec{j}}}(\vec{x}^{\vec{i}})$, so $a\prod_{\ell\in \supp(\vec{i})\setminus\supp(\vec{j})}x_\ell^{i_\ell}$ is the leading term of $\hasse_{\vec{x}^{\vec{j}}}(f)$. Thus, ranging $\vec{j}$ over all vectors in $A$, we get $2^{|\vec{i}|_0}$ derivatives of $f$, each with a different leading monomial.  Thus, these polynomials are linearly independent, so it must be that $2^{|\vec{i}|_0}\le|\hasse(f)|$, giving the desired bound.
\end{proof}

Just as in \autoref{hassedimmonomial}, the characteristic of the underlying field affects the tightness of this result.  In particular, in characteristic zero, one can improve this result to $|\vec{i}|_\times\le\log|\hasse(f)|$.

The above lemma shows that any $f$ with a small-dimensional space of derivatives must have a small-support monomial.  We now give a construction aimed at hitting any polynomial with such small-support monomials.  Note that this will beat the union bound, in the sense that a union-bound argument for creating hitting sets against polynomials with small-support monomials will not yield small hitting sets as there are too many such polynomials.  However, there is still a small hitting set, as we now construct.

\begin{construction}
	\label{hitmonomial}
	Let $n,d,m\ge 1$. Let $\F$ be a field of size $\ge d+1$. Let $S\subseteq\F$ with $|S|=d+1$. Define $\cH'\subseteq\F^n$ by
	\[
		\cH'\eqdef\{\vec{\alpha}:\vec{\alpha}\in S^n, |\vec{\alpha}|_0\le m\}\;.
	\]
\end{construction}

We now establish the desired properties of this construction.

\begin{theorem}
	\label{hitmonomialworks}
	Assume the setup of \autoref{hitmonomial}. Then $\cH'$ is $\poly(n,d,m)$-explicit and has size $|\cH'|\le (nd)^m$, and for any $f\in\F[x_1,\ldots,x_n]$ of total degree $\le d$, with $|\hasse(f)|\le 2^m$, $f=0$ iff $f|_{\cH'}\equiv 0$.
\end{theorem}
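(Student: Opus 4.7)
The plan is to verify three things: the explicitness and size bounds on $\cH'$, and its correctness as a hitting set for the stated class. Explicitness and size are straightforward counting; the heart of the argument invokes \autoref{smallmonomial} to reduce hitting any $f$ with $|\hasse(f)|\le 2^m$ to hitting a polynomial of degree $\le d$ in at most $m$ variables.

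First I would handle the size and explicitness. An element of $\cH'$ is specified by choosing a support $T\subseteq[n]$ with $|T|\le m$, and then assigning each coordinate in $T$ a value in $S$ (choosing $S$ to contain $0$, so that $\vec\alpha\in S^n$ automatically when we pad with zeros outside $T$). The number of such vectors is at most $\sum_{k=0}^m\binom{n}{k}d^k\le(nd)^m$ (after absorbing the small geometric-series factor into the bound, or adjusting to $(nd+1)^m$ if preferred). The enumeration over pairs (support, assignment) is clearly computable in $\poly(n,d,m)$ time, giving the explicitness.

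The main step is the hitting property. Suppose $f\not\equiv 0$ has total degree $\le d$ and $|\hasse(f)|\le 2^m$. Fix any monomial ordering $\prec$ on $\F[\vec{x}]$ (e.g.\ lex), and let $\vec{x}^{\vec{i}}$ be the largest monomial of $f$ with nonzero coefficient. By \autoref{smallmonomial}, $|\vec{i}|_0\le\log|\hasse(f)|\le m$, so the support $T\eqdef\supp(\vec{i})\subseteq[n]$ has size at most $m$. Now define the restriction $f_T(\vec{x}_T)\eqdef f(\vec{x})\big|_{x_\ell=0,\ \ell\notin T}$, a polynomial in the $|T|$ variables indexed by $T$, of total degree at most $d$. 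The monomial $\vec{x}^{\vec{i}}$ has support exactly $T$, so setting the remaining variables to zero leaves it intact with the same nonzero coefficient; in particular $f_T\not\equiv 0$.

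Since $|S|\ge d+1$ and $f_T$ has total degree $\le d$ in $|T|$ variables, a straightforward induction on the number of variables (the standard Schwartz–Zippel-type argument over a product grid) shows that $f_T$ does not vanish identically on $S^{T}$. Pick $\vec\beta\in S^T$ with $f_T(\vec\beta)\ne 0$, and extend it to $\vec\alpha\in\F^n$ by setting $\alpha_\ell=\beta_\ell$ for $\ell\in T$ and $\alpha_\ell=0$ for $\ell\notin T$. Then $\vec\alpha\in S^n$ (using $0\in S$) and $|\vec\alpha|_0\le|T|\le m$, so $\vec\alpha\in\cH'$, and $f(\vec\alpha)=f_T(\vec\beta)\ne 0$ as required.

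The only mild subtlety — which I expect to note but not to be a real obstacle — is ensuring $0\in S$; this is a free choice since $|\F|\ge d+1$ allows us to pick any $S$ of that size. Everything else is essentially bookkeeping around \autoref{smallmonomial} and a one-line reduction to the classical hitting set on a product grid.
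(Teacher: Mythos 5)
Your proof is correct and takes essentially the same approach as the paper: invoke \autoref{smallmonomial} to extract a leading monomial of support $T$ with $|T|\le m$, restrict $f$ by zeroing the coordinates outside $T$, and apply polynomial interpolation over the grid $S^{|T|}$ to obtain a nonzero evaluation, extended by zeros to a point of $\cH'$. Your remark that $0\in S$ must be assumed is a fair reading of something the paper leaves implicit (without it $\cH'$ would be empty whenever $m<n$), but it does not change the argument.
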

\begin{proof}
	\uline{$\cH'$ is explicit:} This is clear from construction.

	\uline{$|\cH'|\le (nd)^m$:} This is clear from construction.

	\uline{$f=0 \implies f|_{\cH'}\equiv 0$:} This is clear.

	\uline{$f\ne 0 \implies f|_{\cH'}\not\equiv 0$:} By \autoref{smallmonomial} it follows that $f$ has a non-zero coefficient on a monomial $\vec{x}^{\vec{i}}$ where $\vec{i}$ has support $T\subseteq[n]$, with $|T|\le\log|\hasse(f)|\le m$. Let $\vec{y}$ be a vector of variables indexed by $T$, and define $f_T(\vec{y})$ as $f(\vec{x})$ under the substitution that $x_\ell\leftarrow 0$ for $\ell\notin T$ and $x_\ell\leftarrow y_\ell$ for $\ell\in T$. As $\vec{x}^{\vec{i}}$ is not annihilated by this substitution, it follows that $f_T\ne0$ and is of total degree $\le d$.  By construction $\cH'|_T$ contains all tuples in $S^{|T|}$.  As $|S|\ge d+1$ it follows from polynomial interpolation that $(f_T)|_{(\cH'|_T)}\not\equiv0$, and thus $f$ has a non-zero evaluation $\vec{\alpha}\in S^n$ with $\supp(\alpha)\subseteq T$.  By construction of $\cH'$, $\vec{\alpha}\in\cH'$, and thus $f|_{\cH'}\not\equiv0$ as desired.
\end{proof}

By combining \autoref{hitmonomialworks} with \autoref{hassedimdiagonal} we obtain the following hitting set for diagonal circuits.

\begin{corollary}
	\label{diagonalpit}
	Let $\F$ be a field with size $\ge d+1$. Then there is a $\poly(n,d,\log(s))$-explicit hitting set of size $\poly(n,d)^{\O(\log s)}$ for the class of $n$-variate, degree $\le d$, depth-3 diagonal circuits of size $\le s$.
\end{corollary}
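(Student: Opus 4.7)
The plan is to simply combine the two main ingredients already established in this section: the bound on the dimension of Hasse derivatives for depth-3 diagonal circuits (\autoref{hassedimdiagonal}) and the hitting set for polynomials with few derivatives (\autoref{hitmonomialworks}). Neither step is difficult in isolation; the content of the result is that these two bounds chain together with matching parameters.

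First I would observe that a depth-3 diagonal circuit of size $\le s$ has $\sum_\ell |\vec{e}_\ell|_\times \le s$ (in fact $\le s/n$, using the factor of $n$ in the size definition), so \autoref{hassedimdiagonal} gives $|\hasse(f)| \le s$. This is the only place I need to use the structural definition of depth-3 diagonal circuits, and it is already proved. Then I would set $m \eqdef \lceil \log_2 s \rceil$, so that $2^m \ge s \ge |\hasse(f)|$ for every $f$ in the class.

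Next I would invoke \autoref{hitmonomialworks} with this choice of $m$ to produce the set $\cH'$. The explicitness bound from that lemma is $\poly(n,d,m) = \poly(n,d,\log s)$, matching the claim, and the size bound is $(nd)^m = (nd)^{\lceil \log s\rceil} = \poly(n,d)^{\O(\log s)}$. Since $f$ has degree $\le d$ and $|\hasse(f)| \le 2^m$, the hypotheses of \autoref{hitmonomialworks} are satisfied, so $f \equiv 0$ iff $f|_{\cH'} \equiv 0$. This shows $\cH'$ is a hitting set for the class.

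There is really no serious obstacle, since the heavy lifting was done in proving \autoref{hassedimdiagonal} (via the chain rule and subadditivity of $|\hasse(\cdot)|$) and in proving \autoref{hitmonomialworks} (via \autoref{smallmonomial}, which extracted a small-support leading monomial from a polynomial with few derivatives). The only thing to be careful about is the parameter matching: $m$ must be chosen so that $2^m$ dominates the dimension bound from \autoref{hassedimdiagonal}, and the field-size hypothesis $|\F| \ge d+1$ of \autoref{hitmonomialworks} is exactly the hypothesis of the present theorem, so no further assumption is needed.
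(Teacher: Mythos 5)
Your proposal is correct and matches the paper's intended proof, which is only stated implicitly ("By combining \autoref{hitmonomialworks} with \autoref{hassedimdiagonal} we obtain..."). You have correctly identified the two ingredients, the choice $m = \lceil \log_2 s\rceil$ to make the parameters chain, and verified the explicitness and size bounds.
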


\section{Acknowledgments}

The first author would like to thank Peter B\"{u}rgisser for explaining the work of Mulmuley, and the Complexity Theory week at Mathematisches Forschungsinstitut Oberwolfach for facilitating that conversation. He would also like to thank Sergey Yekhanin for the conversation that led to \autoref{orbitmembership}, and Scott Aaronson for some helpful comments.

The authors are grateful to Josh Grochow~\cite{Grochow13} for bringing the works of Chistov-Ivanyos-Karpinski~\cite{ChistovIK97}, Sergeichuk~\cite{Sergeichuk2000} and Belitski{\u \i}~\cite{belitskii1983} to their attention, and for explaining these works to them. 

\bibliographystyle{alphaurl}
\bibliography{bibliography}

\appendix

\section{Orbit Intersection Reduces to PIT}\label{sec:orbitmem}

In this section, we study the (non-closed) orbit intersection problem, as compared with the orbit closure intersection problem studied in \autoref{sec:main}.  Unlike with orbit closures, the orbits with non-empty intersections must be equal, because the group action is invertible.  Thus, the orbit intersection problem is equivalent to the orbit membership problem. As mentioned before, Chistov, Ivanyos, and  Karpinski~\cite{ChistovIK97} observed a randomized algorithm for the orbit membership problem, based on the Schwartz-Zippel lemma \cite{Schwartz80,Zippel79}. In conversations with Yekhanin~\cite{Yekhanin2013}, we also discovered this result, which we include for completeness, given its closeness to the other questions studied in this work.

\begin{theorem}
	\label{orbitmembership}
	Let $\F$ be a field of size $>n$. There is a reduction, running in deterministic $\polylog(n,r)$-time using $\poly(n,r)$-processors in the unit cost arithmetic model, from the orbit membership problem of $(\F^{\zr{n}\times \zr{n}})^{\zr{r}}$ under simultaneous conjugation, to polynomial identity testing of ABPs.  In particular, the orbit membership problem can be solved in randomized $\polylog(n,r)$-time with $\poly(n,r)$-processors ($\RNC$), in the unit cost arithmetic model.
\end{theorem}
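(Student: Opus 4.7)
The plan is to reduce orbit membership to an invertibility question about a linear subspace of matrices, and then observe that this invertibility question is exactly a PIT instance for the determinant, which admits a small ABP.

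First, I would rephrase the orbit membership problem. Fixing $\vec{A}$ and $\vec{B}$, the condition $\vec{B} = P \vec{A} P^{-1}$ for invertible $P$ is equivalent to requiring that $P A_i = B_i P$ for every $i \in \zr{r}$ for some invertible $P$. The equations $P A_i - B_i P = 0$ are linear in the $n^2$ entries of $P$, so the set
\[ V \eqdef \{ P \in \F^{\zr{n}\times\zr{n}} : P A_i = B_i P \text{ for all } i \in \zr{r} \} \]
is an $\F$-linear subspace of $\F^{\zr{n}\times\zr{n}}$. Thus orbit membership is equivalent to asking whether $V$ contains an invertible matrix. A basis $P_1, \ldots, P_k$ of $V$ can be computed in $\NC$ via parallel linear algebra on the $rn^2 \times n^2$ coefficient system.

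Next, I would turn invertibility inside $V$ into a PIT question. Introduce fresh variables $y_1,\ldots,y_k$ and form the generic element $P(\vec{y}) \eqdef \sum_{j=1}^k y_j P_j$. Then $V$ contains an invertible matrix iff the polynomial $g(\vec{y}) \eqdef \det(P(\vec{y}))$ is not identically zero: if some specialization $\vec{y} \mapsto \vec{\alpha}$ gives $g(\vec{\alpha}) \ne 0$ then $P(\vec{\alpha}) \in V$ is invertible, and conversely any invertible element of $V$ is realized by some $\vec{\alpha}$. Since $P(\vec{y})$ is a matrix of affine forms in $\vec{y}$ of size $n \times n$, Berkowitz's construction~\cite{Berkowitz84} yields a $\poly(n)$-size ABP computing $g(\vec{y})$, and this ABP is constructible in $\NC$. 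This is the reduction to PIT for ABPs.

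For the ``in particular'' clause, I would just plug Schwartz--Zippel into the ABP produced above: for $|\F| > n$, evaluating $g$ at a uniformly random point of a suitably large grid succeeds with constant probability since $\deg g \le n$. Evaluation of the determinant of an $n \times n$ matrix over $\F$ lies in $\NC^2$ (Csanky over characteristic zero, or more generally via Berkowitz), so the entire procedure --- compute a basis of $V$, form $P(\vec{\alpha})$ for a random $\vec{\alpha}$, and test $\det P(\vec{\alpha}) \ne 0$ --- runs in randomized $\polylog(n,r)$ time with $\poly(n,r)$ processors, i.e. $\RNC$.

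I do not anticipate a real obstacle: each step is either standard linear algebra or a standard appeal to the ABP computability of the determinant. The only point requiring minor care is the justification that ``$V$ contains an invertible matrix'' and ``$\det P(\vec{y}) \not\equiv 0$'' really are equivalent, which needs $|\F|$ to be large enough for Schwartz--Zippel to certify non-vanishing over $\F$ rather than only over the algebraic closure --- the hypothesis $|\F| > n$ is exactly what makes this work.
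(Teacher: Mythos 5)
Your proposal is correct and follows essentially the same route as the paper: rewrite orbit membership as the existence of an invertible $P$ in the linear solution space of $\vec{B}P = P\vec{A}$, compute a basis of that space in parallel, reduce to non-vanishing of $\det\bigl(\sum_j y_j P_j\bigr)$, and invoke Berkowitz's ABP for the determinant together with Schwartz--Zippel. The point you flag about $|\F|>n$ guaranteeing a nonzero point in $\F^k$ (not just over the algebraic closure) is exactly the interpolation step the paper also makes explicit.
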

\begin{proof}
	Let $\vec{A},\vec{B}\in(\F^{\zr{n}\times\zr{n}})^{\zr{r}}$. There exists an invertible $P$ such that $\vec{B}=P\vec{A}P^{-1}$ iff there is an invertible $P$ such that $\vec{B}P=P\vec{A}$.  This second equation is a homogeneous linear equation in $P$, and thus the set of solutions is a vector space. Gaussian elimination can efficiently (in parallel, see Borodin, von zur Gathen and Hopcroft~\cite{BorodinGH82}, and the derandomization by Mulmuley~\cite{Mulmuley87}) find a basis $\{P_i\}_{i=1}^\ell$ for this vector space, with $\ell\le n^2$.  It follows then that such an invertible $P$ exists iff $\{P_i\}_i$ contain an invertible matrix in their $\F$-span.  As the non-vanishing of the determinant characterizes invertible matrices, it follows that such a $P$ exists iff $f(x_1,\ldots,x_\ell)\eqdef\det(\sum_{i=1}^\ell P_i x_i)$ has a non-zero point in $\F^\ell$.

	Clearly $f(\vec{x})$ having a non-zero point in $\F^\ell$ implies that $f(\vec{x})$ is non-zero as a polynomial. Conversely, as the total degree of $f(\vec{x})$ is $\le n$, and the field $\F$ has size $>n$, polynomial interpolation implies that if $f(\vec{x})$ is a non-zero polynomial then it has a non-zero points in $\F^\ell$. Thus, we see that there is a $P$ such that $\vec{B}=P\vec{A}P^{-1}$ iff $f(\vec{x})$ is a non-zero polynomial, where by the results of Berkowitz~\cite{Berkowitz84}, $f(\vec{x})$ is computable by a $\poly(n,r)$-size ABP.  Thus, we have reduced orbit-membership to PIT of ABPs, and done so in parallel.  Using that ABPs can be evaluated efficiently in parallel, and that we can solve PIT with random evaluations by the Schwartz-Zippel lemma, we get the corresponding randomized parallel algorithm for orbit membership.
\end{proof}
\end{document}